\setlist[description]{leftmargin=0em,labelindent=\parindent}
\newcommand{\E}{\operatorname{E}}
\theoremstyle{definition}
\newtheorem{lemma}{Lemma}
\newtheorem{prop}{Proposition}
\theoremstyle{remark}
\titleformat{\section}[block]{\bf\Large}{\thesection\quad}{0pt}{}
\titleformat{\subsection}[block]{\bf\large}{\thesubsection\quad}{0pt}{}
\newenvironment{figurenotes}[1][Note]{\begin{minipage}[t]{\linewidth}\footnotesize{\itshape#1: }}{\end{minipage}}
\definecolor{csub-blue}{RGB}{0, 53, 148}
\def\@fnsymbol#1{\ensuremath{\ifcase#1\or \mathparagraph\or \|\or **\or \dagger\dagger \or \ddagger\ddagger \else\@ctrerr\fi}}
\title{\textsf{\large Discretionary Extensions to Unemployment-Insurance Compensation\\and Some Potential Costs for a McCall Worker}}
\author{Rich Ryan\thanks{Email: \href{mailto:richryan@csub.edu}{richryan@csub.edu}. Department of Economics, California State University, Bakersfield,
  9001 Stockdale Hwy, Mail Stop: 20 BDC, Bakersfield, CA 93311, USA.}} 
\date{December 20, 2023}
\begin{document}
\maketitle


\begin{abstract}  
Unemployment insurance provides temporary cash benefits to eligible unemployed workers.
Benefits are sometimes extended by discretion during economic slumps.
In a model that features temporary benefits and sequential job opportunities,
a worker's reservation wages are studied
when policymakers can make discretionary extensions to benefits.
A worker's optimal labor-supply choice is characterized by a sequence of reservation wages that increases with weeks of remaining benefits.
The possibility of an extension raises the entire sequence of reservation wages,
meaning a worker is more selective when accepting job offers throughout their spell of unemployment.
The welfare consequences of misperceiving the probability and length of an extension are investigated.
Properties of the model can help policymakers interpret data on reservation wages,
  which may be important if extended benefits are used more often in response to economic slumps, virus pandemics, extreme heat, and natural disasters.
\end{abstract}

\vspace{3em}

\textbf{Keywords}: extended benefits, job search, reservation wages, unemployment, unemployment benefits, unemployment compensation, unemployment insurance

\vspace{1em}

\textbf{JEL Codes}: J22, J29, J31, J64, J65

\vfill


\clearpage
\pagebreak

\section{Introduction}


Each month millions of workers lose their jobs.
Unemployed workers search for jobs by
answering online ads,
contacting employers,
using the services of employment agencies, and
asking friends and relatives about employment opportunities.
Workers understand the the types and frequency of job offers they expect to receive. 
Depending on how long they are unemployed,
workers are entitled to unemployment-insurance compensation, typically 26 weeks.
Upon receiving a job offer,
a worker has the option to reject the offer and continue their search or accept the job.
A sequence of reservation wages describes optimal choices:
any wage offer above the reservation level should be accepted.
As more weeks of unemployment are endured and fewer weeks of unemployment insurance remain,
the reservation wage will fall,
indicating that workers are less selective as benefits expire.
Less is known about optimal decisions, however, 
when policymakers can extend benefits by discretion.

\begin{description}
\item[Extensions to UI benefits.] In brief, here is the main issue:
Unemployed workers are often entitled to claim 26 weeks of UI compensation.
In periods of economic distress,
policymakers sometimes
extend the number of weeks an individual might claim UI compensation by discretion.
When a worker misperceives the probability and length of an extension,
what are the costs?
\end{description}

I answer the question from a worker's perspective.
Imagine that Sonia is unemployed and submitting her resume to online ads.
Each week Sonia receives a job offer
that she can accept or reject.
Meanwhile,
her unemployment compensation benefits are expiring.
Between job offers, however,
there is a chance that benefits are extended.
An extension would allow Sonia to claim additional weeks of UI compensation.
If an extension is likely,
then Sonia can reject low-wage offers,
knowing that her search will likely be supported by extended UI compensation.
Sonia's reservation wage increases.
Sonia, though,
might not know
the true probability that benefits are extended and
the length of the extension.

I am interested in Sonia's welfare costs 
when she misperceives the probability and length of an extension.
If Sonia believes an extension is unlikely to happen,
so the true probability of an extension is higher than what Sonia believes,
then Sonia risks accepting job offers she would like to reject.
In this scenario,
if Sonia knew the true probability,
then on average she would like to spend more time searching
in order to find a higher-paying job.
To understand these costs,
I develop a formal dynamic model of sequential job search with expiring UI benefits, 
which allows for the possibility that benefits are extended.
\citet{burdett_1979} studied expiring benefits in \citeauthor{mccall_1970}'s \citeyearpar{mccall_1970} environment of sequential job search.
I add to this environment the possibility that benefits are extended.
I show how perceptions about the possibility of an extension affect reservation wages,
which determine a worker's job-acceptance criteria.

The decision-making environment I consider 
captures some features of UI extensions made during and after
the Great Recession and the COVID-19 pandemic.
For example,
extensions to UI compensation benefits created by Congress during the Great Recession
under the Emergency Unemployment Compensation program
expired three times and
each time Congress had to reauthorize the program to the previous expiration date \citep{rothstein_2011}.
In fact,
temporary additional UI benefits have been created by Congress 9 times:
in 1958, 1961, 1971, 1974, 1982, 1991, 2002, 2008, and 2020.
These policies have extended the number of weeks a worker could claim UI compensation
anywhere from 6 to 53 weeks \citep{whittaker_isaacs_2022}.
As I document in section \ref{sec:descr-decis-making-ext},
there is much uncertainty and risk from a worker's perspective
when extensions are made by discretion.

Because discretionary extensions made to UI compensation benefits are endogenous by design,
making econometric work challenging,
there is scope for investigating a theoretical model.
I present the model in sections \ref{sec:job-search-basic} and \ref{sec:job-search-benefits}.
Section \ref{sec:welfare} presents a numerical example.
The exercise is not meant to be definitive.
Instead, 
it is meant to demonstrate the theoretical properties
established in sections \ref{sec:job-search-basic} and \ref{sec:job-search-benefits}.
Nevertheless,
it is worth noting that the welfare calculations
imply small costs to misperceiving benefit extensions.
Finding small costs is discussed in the context of the literature in section \ref{sec:discussion}.
Section \ref{sec:conclusion} concludes.

\section{Two Episodes of Extended Benefits}
\label{sec:descr-decis-making-ext}

There are two ways for UI compensation benefits to be extended:
automatically and by discretion.
Benefits are automatically extended through
the Extended Benefits (EB) program of the UI system.
The UI system is a joint federal--state partnership
with essentially 53 different systems
(the 50 states, Puerto Rico, the District of Columbia, and the US Virgin Islands).
Regular UI in most states provides 26 weeks of UI compensation.
The EB program extends the number of weeks a worker can claim UI compensation 
by 13 or 20 weeks 
when state-specific unemployment-rate triggers are reached.
But
``in practice,
the required EB trigger is set to such a high level of unemployment
that the majority of states do not trigger onto EB in most recessions''
\citep[][1--2]{whittaker_isaacs_2022}.
In response,
Congress often temporarily extends UI benefits
like they recently did under
the  Coronavirus Aid, Relief, and Economic Security Act (CARES Act) during the COVID-19 pandemic and
the Emergency Unemployment Compensation (EUC) program during the Great Recession \citep{fujita_2010}.
From a worker's perspective,
both types of extensions involve uncertainty and risk:
Automatic extensions require a forecast of unemployment-rate statistics and
discretionary extensions require a forecast of legislative action.\footnote{Triggers for EB weeks may involve thresholds based on
  (1) a moving average of a state's unemployment rate or
  (2) the insured unemployment rate,
  which is the ratio of unemployment-compensation claimants divided by individuals in jobs covered by unemployment compensation (so, for example,
  not the self-employed and gig-economy workers) \citep{whittaker_isaacs_2022}.

  \citet[][158, figure 2]{chodorow-reich_coglianese_2019} show data on
  the number UI recipients who claim regular, EB, and emergency weeks
  created by temporary extension.
  Extended benefits are relied upon less frequently than emergency benefits.}

Details of the EUC program and CARES Act
justify investigating the costs associated with
misperceiving the probability and length of an extension.
Section \ref{sec:ext-thought-experiment} considers a thought experiment
about how difficult it was for a worker to perceive that
they would be entitled to claim up to 99 weeks of UI benefits after the Great Recession.
Section \ref{sec:ext-google-data} shares an index of Google searches for ``unemployment benefits extension,''
which spikes around potential cutoffs to extended benefits after the COVID-19 pandemic.
Both episodes document the significant uncertainty
a worker must manage when making decisions about accepting job offers when benefits can be extended.


\subsection{Extensions Created in Response to the Great Recession}
\label{sec:ext-thought-experiment}

Imagine that Vernon lived in California and became unemployed in early January 2010
when the United States was dealing with the fallout from the Great Recession.
The EB program in California had been triggered on since February 22, 2009,
so Vernon could expect to receive
26 regular weeks of UI coverage and
20 EB weeks for a total of 46 weeks---as long as California did not trigger off its extended-benefits thresholds,
requiring Vernon to forecast unemployment-rate statistics.
After the 46 weeks,
Vernon could reasonably expect to receive no temporary benefits
because the EUC program was
set to expire in late February 2010 \citep[][150, table 1]{rothstein_2011}.
Yet, around June 27, 2010, 25 weeks or so after Vernon became unemployed,
California's EB program had triggered off and
the federal EUC program had also expired.\footnote{``When a state triggers off of an EB period, all EB benefit payments in the state cease immediately'' \citep[][7]{whittaker_isaacs_2022}.}

What Vernon expected then is unknowable,
but by July 25, 2010,
the EB program had triggered on in California and the EUC program had been reauthorized.
Vernon at that time could reasonably expect---in addition
to the 26 weeks of regular UI compensation just received---20 EB weeks plus
$20 + 14 + 13 + 6 = 53$ weeks of UI compensation
under the 4 tiers of the reauthorized EUC program
(where some of the compensation would be paid retroactively) \citep[][153]{rothstein_2011}.
In total,
Vernon could have received up to $26+20+53 = 99$  weeks of UI compensation.
Yet,
the $99$ weeks belies the uncertainty and risk faced by Vernon.\footnote{The dates for this thought experiment come from
  data in \citep[][150, table 1]{rothstein_2011} and
  data available at \url{https://oui.doleta.gov/unemploy/claims_arch.asp}.
  Trigger dates for extended benefits and the EUC program are included in appendix \ref{sec:documentation-eb-euc}.}

Vernon's predicament was pointed out by \citet{kahn_2011}
in the context of UI extensions in the Great Recession aftermath.
After the EUC program was authorized in June 2008,
temporary extensions were enacted and reauthorized by Congress in
``fits and starts'' \citep[][149]{rothstein_2011}.
``For much of the program's history,
the expiration date was quite close.
Indeed, on three occasions\dots Congress allowed the program to expire.
Each time,
Congress eventually reauthorized it retroactive to the previous expiration date'' \citep[][150]{rothstein_2011}.

\subsection{Extensions Created in Response to the COVID-19 Pandemic}
\label{sec:ext-google-data}

Uncertainty similarly surrounded CARES Act extensions.
The CARES Act was created by Congress
in response to the recession caused by the COVID-19 pandemic.
Two CARES Act programs extended UI compensation benefits.
The Pandemic Emergency Unemployment Compensation (PEUC) program
provided additional weeks of federally funded benefits
similar to the EUC program in the Great Recession.
And the Pandemic Unemployment Assistance (PUA) program expanded coverage
to individuals who would be ineligible for UI benefits
(the self-employed, independent contractors, gig-economy workers) and
to individuals unemployed due to COVID-19-related reasons.
Both programs were reauthorized multiple times.
Data on internet searches suggest that
workers were concerned about access to extended benefits throughout 2020 and 2021.

Figure \ref{fig:index} depicts a weekly index of Google searches for ``unemployment benefits extension.''
In early 2020, before the COVID-19 pandemic, the index was near zero.
Google searches began rising by mid March and
peaked locally in the week after March 27, 2020,
which coincided with the signing of the CARES Act into law,
creating
``several temporary, now-expired UI programs'' \citep{whittaker_isaacs_2022}.

\begin{figure}[htbp]
  \centerline{\includegraphics[width=1.0\textwidth]{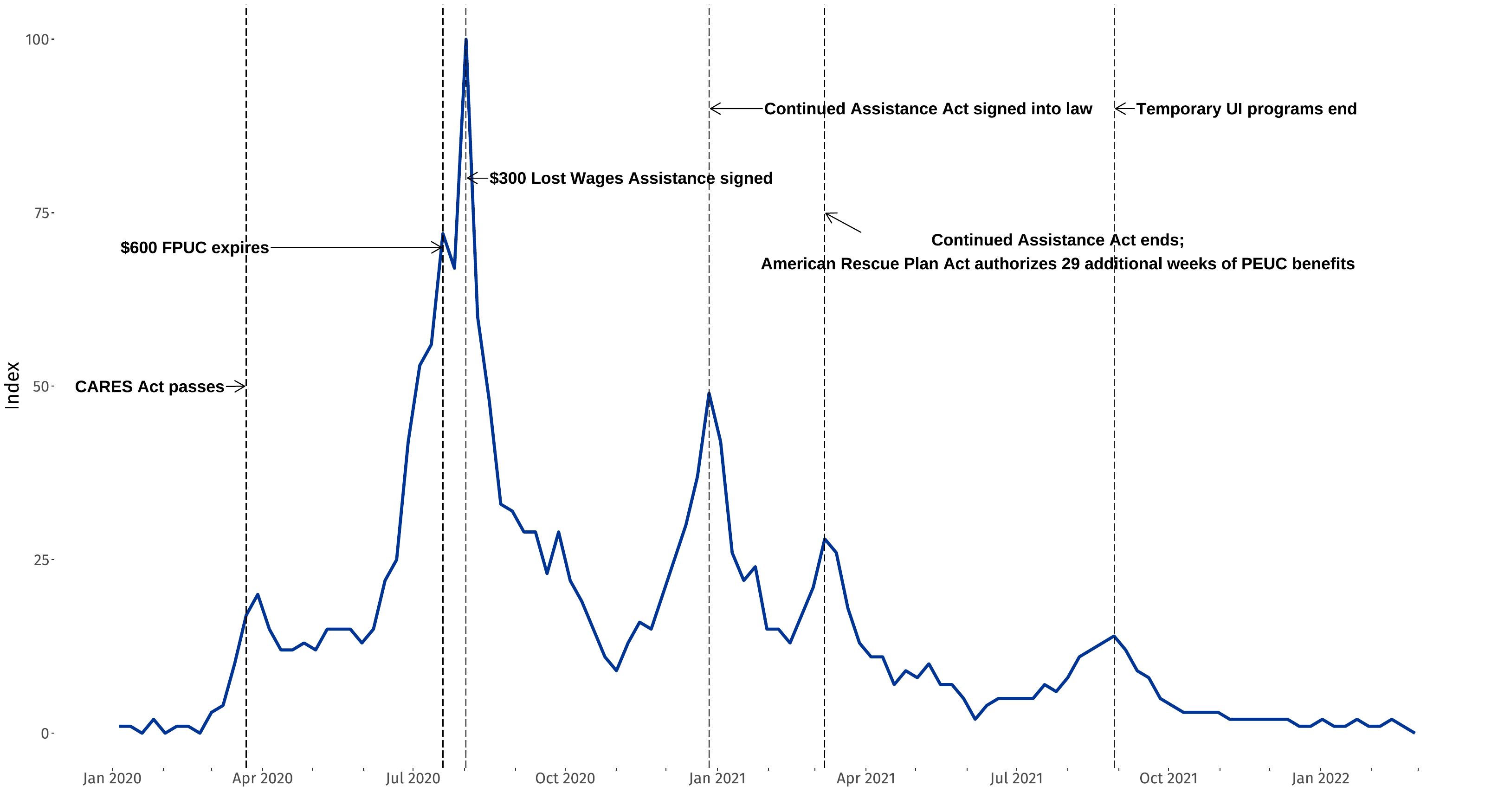}}
\caption[]{\label{fig:index} Index of Google searches for ``unemployment benefits extension.''}
  \begin{figurenotes}[Note]
    These data represent relative search frequency from the week containing January 5, 2020 to the week containing February 27, 2022.
    The series is normalized so that the highest-intensity week is set to 100.
  \end{figurenotes}
  \begin{figurenotes}[Source]
    Google Trends.
\end{figurenotes}  
\end{figure}

During the third week of June 2020 the index was around 25.
But when the additional \$600-per-week benefit
created by the Federal Pandemic Unemployment Compensation (FPUC) program
expired in the week ending July 25, 2020,
the indexed neared its global peak.
The 100 peak was reached in the week of August 8, 2020
when President Donald Trump
authorized the \$300-per-week Lost Wages Assistance benefit.

A little over 21 weeks passed between
March 13, 2020 
(when a nationwide emergency was declared) and
August 8, 2020.\footnote{As of August 11, 2023, the March 13 date
  is reported on the Centers for Disease Control and Prevention's
  COVID-19 \href{https://www.cdc.gov/museum/timeline/covid19.html}{timeline}.}
So up to this point,
many workers would have been covered by 26 weeks of regular UI benefits.
The local peaks that follow have to do with benefit extensions.
In the week of December 27, 2020,
the Continued Assistance Act was signed into law,
which
increased the maximum number of PEUC weeks
from 13 to 24 and
increased the maximum number of PUA weeks from 39 to 50.
In the week of March 11, 2021,
the American Rescue Plan
increased the maximum number of PEUC weeks to 53 and
increased the maximum number of PUA weeks to 79.
Both the PEUC and PUA programs were extended through September 4, 2021,
when temporary UI programs ended \citep[][4]{spadafora_2023}.
As documented in figure \ref{fig:index},
each of these policy milestones coincided with increased Google searches
for ``unemployment benefits extension.''\footnote{In addition,
  workers in 18 states faced a cutoff in compensation-benefit generosity
  when their states opted out of the FPUC and PUA progams in June 2021
  before the programs were set to expire in September 2021,
  citing labor-supply issues \citep{holzer_hubbard_strain_2021}.

  There are many other parts to the UI system that are beyond the scope of this paper.
  \citet{whittaker_isaacs_2022,whittaker_isaacs_2014} and \citet{spadafora_2023} provide excellent, detailed coverage of the UI system.
  \citet{fujita_2010}, \citet{rothstein_2011}, and \citet{chodorow-reich_coglianese_2019} provide excellet discussion of the many temporary programs.}

\section{Job Search with Expiring Benefits}
\label{sec:job-search-basic}


Before turning to a decision-making environment where UI benefits can be extended,
I start with a canonical model of sequential search.
In the environment,
a worker is entitled to $N$ periods of UI compensation benefits.
The worker searches for a job,
taking market conditions as given.
Market conditions are summarized by a wage-offer distribution.
The worker receives a wage offer each period.
Upon receiving an offer,
the worker decides whether to accept the job or continue search.
After $N$ rejected job offers,
the worker will no longer receive UI benefits.

This sequential-search problem,
where remaining periods of UI compensation is a state variable,
was studied by \citet{burdett_1979}.
The optimal solution is a sequence of reservation wages
that is increasing in the remaining periods of UI compensation.
To maximize their expected income,
a worker 
rejects any job offer with a wage below the reservation wage and
accepts any job offer with a wage above the reservation wage.
\citet{burdett_1979} proves the result using
Bellman equations for employment and unemployment.
In contrast,
I provide a proof that expresses the model in terms of reservation wages,
using techniques described by \citet{rogerson_shimer_wright_2005}.
I believe the optimal policy is more transparent.
In addition, my proof provides an algorithm for directly computing the sequence of reservation wages.

\citet{krueger_mueller_2016}, using a model discussed by \citet{mortensen_1977}, consider a similar environment---where
workers can claim UI compensation for a finite number of weeks---but allow for jobs to end.
A worker who holds a job that does not last 6 months does not qualify for UI compensation.
The added complexity, however, requires them to rely on numerical results.
Unlike the environment I study,
they do not allow for the possibility that benefits are extended.
\citet{boar_mongey_2020} consider a related problem
where workers consider expiring \textit{additional} benefits added by the CARES Act.
They are interested in whether workers will reject return-to-work offers in order to claim the \$600-per-week additional UI benefit.
Benefits expire with a fixed probability, though, so
optimal decisions are not characterized by a sequence of reservation wages, a feature that is essential to the search environment I consider.
Thus,
the results of \citet{krueger_mueller_2016} and \citet{boar_mongey_2020} are complementary to my analysis.


The environment considered in this section is generalized in section \ref{sec:job-search-benefits},
where extensions to UI benefits are considered.
Additional references to the literature on unemployment insurance are shared in appendix \ref{sec:lit-review}.

\subsection{The Sequential Search Environment with Expiring Benefits}

Time is indexed by $t$.
A worker searches for a job in discrete time.
They seek to maximize
\begin{equation}
\label{eq:text:objective}
\E_{0} \sum_{t=0}^{\infty}\beta^{t}x_{t},
\end{equation}
where $\beta \in \left( 0,1 \right)$ is the discount factor such that $\beta = \left( 1+r \right)^{-1}$,
$x_{t}$ is income at time $t$, and
$\E_{0}$ denotes the expectation taken with respect to income given available information at time $0$.
A worker's income is
their wage if employed and
the value of nonwork if unemployed.
The value of nonwork includes
unemployment-insurance compensation,
the value of leisure, and
the value of home production.
Justification
for modeling a risk-neutral worker with no savings includes
the fact that
``a majority of unemployed workers have only a trivial amount of savings''
and
``extending the model to include savings
is an unnecessary complication for a large portion of the unemployed'' \citep[][146--147]{krueger_mueller_2016}.\footnote{In \citeauthor{krueger_mueller_2016}'s \citeyearpar{krueger_mueller_2016}
  survey on reservation wages, which I discuss below,
  over half of respondents with less than 3 months of unemployment duration
  report have no savings.}


A job is fully characterized by the wage the job pays, but
a worker has imperfect information about job possibilities.
The imperfect information is characterized as a distribution of possible wage offers.
While unemployed,
a worker samples one independent and identically distributed offer each period
from a known cumulative distribution function $F$.
I assume $F$ is continuous, has finite expectation, and
has support
$\left[ \underline{w}, \overline{w} \right]$ or
$\left[  \underline{w}, \infty \right)$.
While I refer to $w$ as the wage,
``more generally it could capture some measure of the desirability of the job,
depending on benefits, location, prestige'' \citep[][962]{rogerson_shimer_wright_2005}.

If a worker rejects a wage offer, they remain unemployed.
If a worker accepts a wage offer, they keep the job forever.
An accepted job entitles the worker to a wage payment each period.
I let $W \left( w \right)$ denote the payoff from accepting a job offering wage $w$.
The Bellman equation for $W$ satisfies
\begin{equation}
\label{eq:text:W}
W \left(w \right) = w+\beta W\left(w\right) \quad \text{ or } \quad
W\left( w\right) = \frac{w}{1-\beta}.
\end{equation}

While unemployed,
a worker's income includes the value of leisure and home production, $z >0$.
In addition, a worker may be entitled to UI compensation benefits, $c > 0$.
I let $n$ denote the remaining periods of benefits, typically weeks,
where $n \in \left\{ 0,\dots,N \right\}$ and
$N$ is the maximum number of periods a worker is entitled to claim UI,
typically 26 weeks.
While unemployed,
a worker's income is $x = z+c$ if $n > 0$ and $x = z$ if $n = 0$.
For positive $n$,
the payoff of rejecting a wage offer, earning $z + c$, and
sampling a wage offer the following period is
\begin{equation}
\label{eq:text:U-n}
U \left( n \right) = z+b
+ \beta \E \left[ \max\left\{ U \left(n-1\right),W\left(w\right)\right\} \right].
\end{equation}
The expectation is taken with respect to potential wage draws.
When $n = 0$,
the payoff of rejecting a wage offer, earning $z$, and
sampling a wage offer the following period is
\begin{equation}
\label{eq:text:U-0}
U \left( 0 \right) = z + \beta \E \left[ \max\left\{ U \left(0\right), W\left(w\right)\right\} \right].
\end{equation}

The Bellman equations in \eqref{eq:text:W}, \eqref{eq:text:U-n}, and \eqref{eq:text:U-0}
will be used to write the problem in terms of reservation wages in the next section. 

\subsection{Optimal Search}

Optimal search implies a reservation-wage policy.\footnote{\citet{ljungqvist_sargent_2018} provide an informative textbook exposition.
  \citet{rogerson_shimer_wright_2005} show how a similar
  decision-making environment fits into larger models of the macroeconomy that feature search.} 
The payoff of accepting a job, $W \left( w \right) = w / \left( 1-\beta \right)$,
starts from $0$ and strictly increases.
This implies there is a unique reservation wage, $w_{R} \left( n \right)$,
which satisfies $W \left( w_{R} \left( n \right) \right) = U \left( n \right)$ and
depends on the remaining periods of UI compensation.
With $n \in \left\{ 0,\dots,N \right\}$ remaining periods of UI compensation,
under the convention that the worker accepts a job when they are indifferent between the job and unemployment,
any wage offer $w < w_{R} \left( n \right)$ should be rejected and
any wage offer $w \geq w_{R} \left( n \right)$ should be accepted.

Using
$W \left( w \right) = w / \left( 1-\beta \right)$ and
$U \left( w_{R} \left( n \right) \right) = w_{R} \left( n \right) / \left( 1-\beta \right)$
in the expressions for $W$ and $U$ in equations
\eqref{eq:text:W}, \eqref{eq:text:U-n}, and \eqref{eq:text:U-0}
implies, for $n \in  \left\{ 1,\dots,N \right\}$,
\begin{equation}
\label{eq:text:wRn}
w_{R}\left( n \right) = \left(z+c\right) \left(1-\beta\right) + \beta\left\{ \int_{0}^{w_{R}\left(n-1\right)}w_{R}\left(n-1\right)dF\left(w\right)+\int_{w_{R}\left(n-1\right)}^{\overline{w}}wdF\left(w\right)\right\}
\end{equation}
and, for $n = 0$,
\begin{equation}
\label{eq:text:wR0}
w_{R}\left(0\right) = z\left(1-\beta\right)+\beta\left\{ \int_{0}^{w_{R}\left(0\right)}w_{R}\left(0\right)dF\left(w\right)+\int_{w_{R}\left(0\right)}^{\overline{w}}wdF\left(w\right)\right\} .
\end{equation}
Additional details are provided in appendix \ref{sec:app-basic-job-search}.

Looking at \eqref{eq:text:wR0},
the expression can be written as $\mathcal{T} \left( w_{R} \left( 0 \right) \right) = w_{R} \left( 0 \right)$,
where
\begin{align*}
\mathcal{T}\left(x\right) \equiv z\left(1-\beta\right)+\beta\left\{ \int_{0}^{x}xdF\left(w\right)+\int_{x}^{\bar{w}}wdF\left(w\right)\right\}.
\end{align*}
In other words, $w_{R} \left( 0 \right)$ is a fixed point of the function $\mathcal{T}$.

Appendix \ref{sec:proof-basic} establishes that
$\mathcal{T}$ is a self-map and $0< \mathcal{T}^{\prime}\left(x\right) = \beta F \left( x \right) < 1$ for $x \in \left( \underline{w}, \overline{w} \right)$.
Thus $\mathcal{T}$ is a contraction.
The contraction mapping theorem implies $\mathcal{T}$ admits one and only one fixed point
$w_{R}\left(0\right) \in \left[ \underline{w}, \overline{w} \right]$.\footnote{Appendix \ref{sec:proof-basic} considers the case
  where the support of wages is $\left[\underline{w}, \infty \right)$.
  In that case, I establish that $\mathcal{T}$ contracts through direct verification,
  which requires more algebra.}
Given $w_{R} \left( 0 \right)$,
an induction argument implies that the
reservation wages $w_{R} \left( n \right)$ are increasing in $n$.
A worker is more selective when there is a single remaining week of UI benefits than when there are no remaining weeks, $w_{R} \left( 1 \right) > w_{R} \left( 0 \right)$.
And, using an induction argument,
because $\mathcal{T}$ is increasing, $w_{R} \left( n \right) > w_{R} \left( n-1 \right)$ implies $w_{R} \left( n+1 \right) > w_{R} \left( n \right)$.
Proposition \ref{prop:text:basic-incr-reservation-wages} summarizes these results and
appendix \ref{sec:proof-basic} provides the details.

\begin{prop}
\label{prop:text:basic-incr-reservation-wages}
Assume a risk-neutral, infinitely lived
worker searching for a job.
The worker perceives no possibility that benefits are extended.
The worker is entitled to $N$ periods
of UI benefits and regularly receives wage offers
from the known offer distribution $F$ with support $\left[\underline{w},\overline{w}\right]$
and mean $\mu_{w}$.
A single independently and identically distributed offer is received each period.
The solution to the worker's sequential-search problem
is a sequence of reservation wages that increases
in the number of remaining periods of UI compensation benefits:
\begin{equation}
\overline{w}>w_{R}\left(N\right)>\cdots>w_{R}\left(n+1\right)>w_{R}\left(n\right)>\cdots>w_{R}\left(0\right)>\underline{w}.\label{eq:text:optimal-seq-res-wages}
\end{equation}
In state $n$, the worker accepts any wage $w \geq w_{R}\left(n\right)$.
For offer distributions where $\underline{w}<\left(1-\beta\right)z+\beta\mu_{w}$,
the reservation wages exist and are unique.
The support of wages can extend to $\left[0,\infty\right)$.
\end{prop}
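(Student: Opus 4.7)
The plan proceeds in two stages. First I would establish existence and uniqueness of the base case $w_{R}(0)$ by showing that the operator
\[
\mathcal{T}(x) \equiv (1-\beta)z + \beta\left[\int_{0}^{x} x\, dF(w) + \int_{x}^{\overline{w}} w\, dF(w)\right]
\]
admits a unique fixed point in $[\underline{w}, \overline{w}]$. Second, I would use an induction argument together with the strict monotonicity of $\mathcal{T}$ to propagate the ordering of reservation wages in \eqref{eq:text:optimal-seq-res-wages}.

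For the first stage I would verify three properties of $\mathcal{T}$ on the bounded-support case. At $x = \overline{w}$ the operator equals $(1-\beta)z + \beta \overline{w}$, which is strictly less than $\overline{w}$ provided $z < \overline{w}$; at $x = \underline{w}$ it equals $(1-\beta)z + \beta \mu_{w}$, which strictly exceeds $\underline{w}$ precisely under the stated condition; so $\mathcal{T}$ carries $[\underline{w}, \overline{w}]$ into itself. Leibniz's rule gives $\mathcal{T}'(x) = \beta F(x)$, since the endpoint terms arising from differentiating the two integrals cancel, and because $F$ is a CDF with $\beta < 1$ we have $0 < \mathcal{T}'(x) < 1$ on the interior. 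The mean value theorem then delivers a contraction with modulus at most $\beta$, and Banach's theorem supplies the unique fixed point $w_{R}(0) \in (\underline{w}, \overline{w})$.

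For the second stage I would rewrite equation \eqref{eq:text:wRn} compactly as $w_{R}(n) = \mathcal{T}(w_{R}(n-1)) + (1-\beta)c$ for $n \geq 1$. The base comparison is immediate: $w_{R}(1) - w_{R}(0) = (1-\beta)c > 0$. For the inductive step, the strict monotonicity of $\mathcal{T}$ (its derivative $\beta F(x)$ is positive wherever the CDF is) yields that whenever $w_{R}(n) > w_{R}(n-1)$, we also have $w_{R}(n+1) - w_{R}(n) = \mathcal{T}(w_{R}(n)) - \mathcal{T}(w_{R}(n-1)) > 0$. The bound $w_{R}(N) < \overline{w}$ follows by applying the self-map argument to the shifted operator $\widetilde{\mathcal{T}}(x) = \mathcal{T}(x) + (1-\beta)c$, which again carries $[\underline{w}, \overline{w}]$ into itself under the natural parameter assumption $z + c < \overline{w}$.

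The main obstacle will be extending the contraction argument to the unbounded-support case $[\underline{w}, \infty)$, which the excerpt flags as requiring direct verification. Because the domain is unbounded the mean-value bound does not immediately yield a global Lipschitz estimate, and nothing in the argument above rules out escape to infinity. I would instead proceed by direct integration: for $x < y$, writing $\mathcal{T}(y) - \mathcal{T}(x) = \beta \int_{x}^{y} F(u)\, du \leq \beta(y-x)$ gives a global Lipschitz modulus of $\beta < 1$. Existence of a fixed point then follows by exhibiting a compact invariant subinterval—for instance, the envelope inequality $\mathcal{T}(x) \leq (1-\beta)z + \beta x + \beta \mu_{w}$ shows that $\mathcal{T}$ maps $[\underline{w}, z + \beta \mu_{w}/(1-\beta)]$ into itself whenever $\mu_{w}$ is finite—after which Banach's theorem applies as before, and the induction of the second stage carries through unchanged.
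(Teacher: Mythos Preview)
Your proposal is correct and follows essentially the same route as the paper: establish $w_{R}(0)$ as the unique fixed point of $\mathcal{T}$ via the self-map check and the contraction bound $\mathcal{T}'(x)=\beta F(x)<1$, then propagate the ordering by induction using the monotonicity of $\mathcal{T}$ together with the identity $w_{R}(n)=\mathcal{T}(w_{R}(n-1))+(1-\beta)c$, and finally bound the sequence above via $z+c<\overline{w}$. For the unbounded-support case the paper derives the same Lipschitz estimate $\mathcal{T}(y)-\mathcal{T}(x)=\beta\int_{x}^{y}F(u)\,du\leq\beta(y-x)$ and then invokes Banach directly on the complete set $[\underline{w},\infty)$, so your compact-invariant-interval construction is unnecessary, though harmless.
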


When the future means more to workers currently,
then they are more selective throughout their entire unemployment spells.
In other words,
a higher $\beta$ shifts the entire sequence of reservation wages upward.
Workers are also more selective throughout their unemployment spells when $z$ or $c$ is higher.
A higher value of nonwork means they can reject some jobs they wouldn't have otherwise.
Appendix \ref{sec:prop-reserv-wages} provides more details.

\section{Extending Benefits by Discretion}
\label{sec:job-search-benefits}


In many situations
a worker must make decisions about
accepting a job or continuing their search when
there is a chance that UI benefits are extended.
Policymakers extend UI compensation benefits by discretion.
For example,
temporary additional UI benefits were created
by Congress after the Great Recession and COVID-19 pandemic.
Or extended benefits can be triggered automatically,
but this feature of the UI system
requires a worker to forecast a state triggering on and off.
From the worker's perspective,
there is uncertainty over
\begin{enumerate}[(i)]
\item\label{item:whether-ext} whether benefits are extended and 
\item\label{item:length-ext}  the length of the extension.
\end{enumerate}

I investigate theoretically how the
the possibility of extending UI benefits affects job-acceptance decisions.
I allow workers to form a fixed belief about an extension---summarized
by two parameters for items \ref{item:whether-ext} and \ref{item:length-ext}---and
compute their welfare conditional on that belief.
My formulation abstracts from
how beliefs are formed and
how beliefs are updated.
While central to decision-making,
there is no agreement on how to specify beliefs
\citep[][provide a recent take]{caplin_leahy_2019}.
In addition,
my formulation avoids specifying how legislators enact extensions.
These features allow me to develop a formal, tractable dynamic model
that investigates the costs of misperceiving the
probability and length of an extension.

The economic environment is described in section \ref{sec:environment}.
The environment generalizes the environment described in section \ref{sec:job-search-basic}.
Section \ref{sec:welfare} then considers the costs to a worker
who cannot know
the true probability that benefits are extended and
the length of the extension.

\subsection{Environment}
\label{sec:environment}

I analyze optimal sequential search when
UI compensation benefits can possibly be extended.
To make the model tractable,
once benefits are extended,
there is no chance that benefits are extended subsequently.

An extension affects how the state variable $n$ evolves.
When there are $n$ remaining periods a worker is entitled to collect UI compensation and
the worker is deciding to accept or reject a job offer,
the worker understands that
benefits will be extended by $\Delta$ periods with probability $\delta$.
If benefits are extended,
then the following period there are
$n-1 + \Delta$ remaining periods of UI compensation.
If benefits are not extended,
then the following period there are $n-1$ remaining periods of benefits.
From the worker's perspective,
the probability that benefits are extended, $\delta$, and
the length of the extension, $\Delta$, remain constant but potentially unknown. 

Figure \ref{fig:days-left} shows
a particular instance of how the state variable $n$ evolves.
Starting from the left side of the figure,
a worker is at the node with $n$ remaining periods of UI compensation.
Benefits are not extended,
which occurs with probability $1-\delta$, and the worker does not accept a job.
As indicated by the dark-cyan path,
the following period
the worker makes a decision about accepting or rejecting a job offer
when there are $n - 1$ periods of UI compensation and the possibility of extension.
Again, as indicated by the dark-cyan path,
benefits are not extended and the job offer is rejected,
which places the worker at the node labeled $n-2$.
Then benefits are extended,
which occurs with probability $\delta$.
As indicated by the dark-cyan path, 
the following period after the job offer is rejected
the worker makes a decision about accepting or rejecting a job offer
when there are $n - 3 + \Delta$ periods of UI compensation.
There is no longer a chance of extension.


\begin{figure}
\begin{centering}
\includegraphics{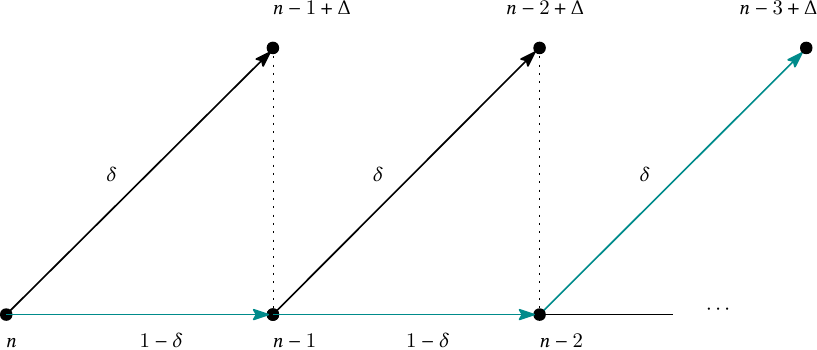}
\par\end{centering}
\caption{\label{fig:days-left} Evolution of remaining periods of UI benefits.}
  \begin{figurenotes}[Note]
    The worker believes benefits will be extended with probability $\delta$ and the extension will add $\Delta$ periods they are allowed to claim UI compensation benefits.
\end{figurenotes} 
\end{figure}

The extension of benefits rules out benefits being extended again.
The one-time occurrence of an extension implies that, upon extension,
the worker solves a standard McCall model
that includes the remaining periods of compensation as a state variable.
In other words, once benefits are extended,
the worker solves the problem described in section \ref{sec:job-search-basic}.

Like in the sequential-search model with finite benefits considered in section \ref{sec:job-search-basic},
preferences are the same as in \eqref{eq:text:objective}.
The value of a job is $W\left(w\right)=w/\left(1-\beta\right)$.
In this environment, I need
to distinguish between the value of unemployment
in the upper and lower halves of figure \ref{fig:days-left}.
In the upper half, there
is no chance of an extension.
The value of unemployment when there is no chance of extension is denoted by $U$.
In the lower half of figure \ref{fig:days-left},
there is a chance---or a perceived chance---that benefits will be extended.
The value of unemployment in this case is denoted by
$U^{\delta}$.
Both $U$ and $U^{\delta}$ depend on the number of remaining periods of UI compensation.

The value of $U$ corresponds to the basic model of job search in section \ref{sec:job-search-basic}.
The value of $U^{\delta}$ requires characterization.
When there is a perceived chance that benefits can be extended,
the value of unemployment is, for  $n \in \left\{ 1,\dots, N \right\}$,
\begin{align}
  \label{eq:text:Udelta-n}
  \begin{split}
  U^{\delta} \left( n \right) &= z + c + \beta \delta \E \left[\max\left\{ U\left(n-1+\Delta\right),W\left(w\right)\right\} \right] \\
  & \quad + \beta \left(1-\delta\right) \E \left[\max\left\{ U^{\delta}\left(n-1\right),W\left(w\right)\right\} \right].
  \end{split}
\end{align}
The first component of the expression on the right side,
$z+c$,
corresponds to the value of nonwork plus the flow UI compensation benefit.
The following period, discounted by $\beta$, corresponds to choosing to accept or reject a job when benefits have or have not been extended,
which occurs with probability $\delta$ and $1-\delta$.
Likewise,
\begin{equation}
\label{eq:text:Udelta-0}  
U^{\delta} \left( 0 \right) = z
+ \beta \left\{  
\delta \E\left[\max\left\{ U\left(\Delta\right),W\left(w\right)\right\} \right] \\
\left(1-\delta\right) \E\left[\max\left\{ U^{\delta}\left(0\right),W\left(w\right)\right\} \right] \right\}.
\end{equation}

When there is a perceived chance that benefits will be extended,
the reservation wage differs from the case when there is no chance.
I need to distinguish between the two cases.
The reservation wage
when there is a chance of extension is denoted by $w_{R}^{\delta}$ and
the reservation wage
upon an extension is denoted by $w_{R}$.
Both depend on the remaining periods of UI benefits.

Using
the techniques in \ref{sec:job-search-basic},
the value of a job, and
the expressions for $U^{\delta}$ in \eqref{eq:text:Udelta-n} and \eqref{eq:text:Udelta-0},
the reservation wages satisfy, for $n \in \left\{ 1,\dots,N \right\}$,
\begin{align}
\label{eq:text:wRdelta}
  \begin{split}
  w_{R}^{\delta} \left( n \right) &= \left(z+c\right) \left(1-\beta\right) + \beta \delta
  \left[ \int_{\underline{w}}^{w_{R}\left(n-1+\Delta\right)} w_{R}\left(n-1+\Delta\right) dF\left(w\right) + \int_{w_{R}\left(n-1+\Delta\right)}^{\overline{w}}wdF\left(w\right) \right] \\
  &\quad +\beta\left(1-\delta\right) \left[ \int_{\underline{w}}^{w_{R}^{\delta}\left(n-1\right)}w_{R}^{\delta}\left(d-1\right)dF\left(w\right) + \int_{w_{R}^{\delta}\left(d-1\right)}^{\overline{w}}wdF\left(w\right) \right].
  \end{split}
\end{align}
and, when $n = 0$,
\begin{align}
  \label{eq:text:wRdelta0}
  \begin{split}
  w_{R}^{\delta}\left(0\right) & = z \left(1-\beta\right) + \beta\delta
                                 \left[ \int_{\underline{w}}^{w_{R}\left(\Delta\right)}w_{R}\left(\Delta\right)dF\left(w\right) + \int_{w_{R}\left(\Delta\right)}^{\overline{w}}wdF\left(w\right) \right] \\
                                 &\quad +\beta \left(1-\delta\right)
                                 \left[ \int_{0}^{w_{R}^{\delta}\left(0\right)}w_{R}^{\delta}\left(0\right)dF\left(w\right) + \int_{w_{R}^{\delta}\left(0\right)}^{\overline{w}}wdF\left(w\right) \right].
  \end{split}
\end{align}

The expressions in \eqref{eq:text:wRdelta} and \eqref{eq:text:wRdelta0}
can be interpreted as
the benefit of search when an offer $w_{R}^{\delta}$ is in hand.
For example,
subtracting
$\left( 1-\beta \right) z$ and $\left( 1-\beta \right) w_{R}^{\delta} \left( 0 \right)$
from both sides of equation \eqref{eq:text:wRdelta0} yields
\begin{align}
\label{eq:search-reservation-interpretation}
  \begin{split}
      w_{R}^{\delta} \left( 0 \right) - z &= \frac{\delta}{r}
                                    \left\{ \int\limits_{\underline{w}}^{w_{R}\left( \Delta \right)} \left[ w_{R}\left( \Delta \right) - w_{R}^{\delta} \left( 0 \right) \right] dF \left( w \right) 
                                    + \int\limits_{w_{R}\left( \Delta \right)}^{\overline{w}} \left[ w^{\prime} - w_{R}^{\delta} \left( 0 \right) \right] dF \left( w^{\prime} \right) \right\} \\
  &\quad + \frac{1-\delta}{r} 
  \left\{ \int\limits_{w_{R}^{\delta} \left( 0 \right)}^{\overline{w}} \left[ w^{\prime\prime} - w_{R}^{\delta} \left( 0 \right) \right] dF \left( w^{\prime\prime} \right) \right\}.
  \end{split}
\end{align}
The left side is the cost of searching another period when offer $w_{R}^{\delta} \left( 0 \right)$ is available.
The right side is the expected benefit.
When benefits are extended,
the worker gains the expected net benefit of adopting reservation wage $w_{R} \left( \Delta \right)$.
Adopting $w_{R} \left( \Delta \right)$ as opposed to $w_{R}^{\delta} \left( 0 \right)$ yields two values.
First,
a worker will reject wage offers below $w_{R} \left( \Delta \right)$, yielding net benefit $w_{R} \left( \Delta \right) - w_{R}^{\delta} \left( 0 \right) > 0$.
(The inequality is established in lemma \ref{lemma-just-extended} in appendix \ref{sec:app-extens}.)
Second,
a worker will accept wage offers $w^{\prime}$ above $w_{R} \left( \Delta \right)$.
When benefits are not extended,
the worker gains the expected value of searching one more time,
which equals
the expected value of drawing $w^{\prime\prime}$ above $w_{R}^{\delta} \left( 0 \right)$.
Jobs are kept indefinitely. 
Their values, like perpetuities that pay off starting from the following period,
equal the flow value divided by the interest rate, $r$, where $\beta = \left( 1-r \right)^{-1}$.\footnote{Expression \eqref{eq:text:wRdelta0} 
  is a generalization of equation (6.3.3) in \citep[][163]{ljungqvist_sargent_2018}.
More details are provided in appendix \ref{sec:app-extens}.}



\subsection{Optimal Search}

Optimal decision rules are characterized by a sequence of reservation wages that are increasing in $n$.
To establish this characterization,
I first use the feature that once benefits are extended the environment coincides with the environment considered in section \ref{sec:job-search-basic}.
In other words,
the upper part of figure \ref{fig:days-left} is well defined and solved.
This allows me to define the function $\mathcal{T}^{\delta}$ on $\left[\underline{w},\overline{w}\right]$ as
\begin{align*}
  \mathcal{T}^{\delta}\left(x\right) & \equiv c\left(1-\beta\right)
                                       + \beta\delta
                                       \left[\int_{0}^{w_{R}\left(\Delta\right)}w_{R}\left(\Delta\right)dF\left(w\right)+\int_{w_{R}\left(\Delta\right)}^{\overline{w}}wdF\left(w\right)\right] \\
 & \quad+\beta\left(1-\delta\right)\left[\int_{0}^{x}xdF\left(w\right)+\int_{x}^{\overline{w}}wdF\left(w\right)\right]. 
\end{align*}
Appendix \ref{sec:proof-pos-ext} establishes that $\mathcal{T}^{\delta}$ is a self map and
$0< \left( \mathcal{T}^{\delta} \right)^{\prime} \left( x \right) = \beta \left( 1-\delta \right) F \left( x \right) <1$
for all $x \in \left(\underline{w},\overline{w} \right)$.
This fact implies that $\mathcal{T}^{\delta}$ is a contraction that admits one and only one solution, $w_{R}^{\delta} \left( 0 \right)$.\footnote{Appendix \ref{sec:proof-pos-ext} also
shows that the support of wages can be extended to $\left[0,\infty\right)$.}

Given $w_{R}^{\delta} \left( 0 \right)$,
an induction argument establishes that optimal sequential search is a sequence of reservation wages that is increasing in $n$.
This is summarized in proposition \ref{prop:1}.

\begin{prop}
  \label{prop:1}
Assume a risk-neutral, infinitely lived
worker searching for a job.
The worker is initially entitled to $N$ periods
of UI benefits and regularly receives wage offers
from the known offer distribution $F$ with support $\left[\underline{w},\overline{w}\right]$
and mean $\mu_{w}$.
A single independently and identically distributed offer is received each period.
In addition,
between each job offer,
there is a chance that benefits are extended by $\Delta$ periods with probability $\delta$.

The solution to the worker's sequential-search problem
is a sequence of reservation wages that increases
in the number of remaining periods of UI compensation benefits:
\begin{equation}
\overline{w}>w_{R}^{\delta}\left(N\right)>\cdots>w_{R}^{\delta}\left(n+1\right)>w_{R}^{\delta}\left(n\right)>\cdots>w_{R}^{\delta}\left(0\right)>\underline{w}.
\end{equation}
In state $n$ when benefits have not been extended,
the worker accepts any wage $w \geq w_{R}^{\delta}\left(n\right)$.
In state $n$ when benefits have been extended,
the worker accepts any wage $w \geq w_{R} \left( n \right)$.
For offer distributions where $\underline{w}<\left(1-\beta\right)z+\beta\mu_{w}$,
the reservation wages exist and are unique.
The support of wages can extend to $\left[0,\infty\right)$.
\end{prop}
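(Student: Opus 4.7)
The plan is to decouple the problem into its two natural halves, corresponding to the two branches of figure \ref{fig:days-left}. On the post-extension branch the worker faces exactly the environment of Proposition \ref{prop:text:basic-incr-reservation-wages}, which I would cite directly to obtain a strictly increasing sequence of reservation wages $w_{R}\left(0\right) < w_{R}\left(1\right) < \cdots < w_{R}\left(N-1+\Delta\right)$ that exist and are unique, together with the decision rule ``accept $w \geq w_{R}\left(n\right)$.'' These values then serve as fixed, exogenous inputs into the pre-extension recursions \eqref{eq:text:wRdelta} and \eqref{eq:text:wRdelta0}.

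Having pinned down the post-extension chain, I would then construct $w_{R}^{\delta}\left(0\right)$ as the unique fixed point of the operator $\mathcal{T}^{\delta}$ defined in the text. Following the template used for the basic model, the task is to verify that $\mathcal{T}^{\delta}$ maps $\left[\underline{w},\overline{w}\right]$ into itself and then compute its derivative $\left(\mathcal{T}^{\delta}\right)^{\prime}\left(x\right) = \beta\left(1-\delta\right)F\left(x\right) \in \left(0,1\right)$ on the interior, which by the mean value theorem gives a Lipschitz constant strictly below one. The contraction mapping theorem then delivers a unique fixed point $w_{R}^{\delta}\left(0\right) \in \left[\underline{w},\overline{w}\right]$.

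For the remaining reservation wages $w_{R}^{\delta}\left(n\right)$, $n = 1,\ldots,N$, I would iterate the forward recursion \eqref{eq:text:wRdelta}: each value is explicitly determined from $w_{R}^{\delta}\left(n-1\right)$ and the known $w_{R}\left(n-1+\Delta\right)$, with no further fixed-point issue. Strict monotonicity $w_{R}^{\delta}\left(n\right) > w_{R}^{\delta}\left(n-1\right)$ then follows by induction. The base case is essentially immediate: subtracting \eqref{eq:text:wRdelta0} from \eqref{eq:text:wRdelta} evaluated at $n=1$, the two integral blocks coincide term-by-term (both involve $w_{R}\left(\Delta\right)$ and $w_{R}^{\delta}\left(0\right)$), leaving $w_{R}^{\delta}\left(1\right) - w_{R}^{\delta}\left(0\right) = c\left(1-\beta\right) > 0$. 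For the inductive step, note that the auxiliary map $x \mapsto \int_{\underline{w}}^{x} x\,dF\left(w\right) + \int_{x}^{\overline{w}} w\,dF\left(w\right)$ has derivative $F\left(x\right) \geq 0$ and is therefore increasing; combined with the induction hypothesis on the $w_{R}^{\delta}$ chain and the already-established strict increase of the $w_{R}$ chain, both the $\delta$-weighted and the $\left(1-\delta\right)$-weighted bracketed differences in $w_{R}^{\delta}\left(n+1\right) - w_{R}^{\delta}\left(n\right)$ are strictly positive.

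The main obstacle, I expect, is bookkeeping around the interaction of the two chains rather than any deep technical difficulty: the pre-extension wage $w_{R}^{\delta}\left(n\right)$ depends on the post-extension wage at the \emph{shifted} index $n-1+\Delta$, so I must ensure the basic-model proposition has been invoked for a horizon long enough to cover $N-1+\Delta$, and I must keep the induction hypothesis on $w_{R}^{\delta}$ separately accounted for from the exogenous monotonicity of $w_{R}$ imported from Proposition \ref{prop:text:basic-incr-reservation-wages}. The decision rule follows at once from $W\left(w_{R}^{\delta}\left(n\right)\right) = U^{\delta}\left(n\right)$ and the strict monotonicity of $W$, while the extension of the support to $\left[0,\infty\right)$ under the condition $\underline{w} < \left(1-\beta\right)z + \beta\mu_{w}$ is handled as in appendix \ref{sec:proof-pos-ext} by verifying the contraction property of $\mathcal{T}^{\delta}$ through direct algebra.
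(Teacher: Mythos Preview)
Your proposal is correct and follows essentially the same route as the paper's proof in appendix \ref{sec:proof-pos-ext}: cite Proposition \ref{prop:text:basic-incr-reservation-wages} for the post-extension chain, establish $w_{R}^{\delta}(0)$ as the unique fixed point of the contraction $\mathcal{T}^{\delta}$, then run the induction on $n$ with base case $w_{R}^{\delta}(1)-w_{R}^{\delta}(0)=c(1-\beta)>0$ and inductive step driven by the monotonicity of $\Upsilon$. The only item you leave implicit is the strict interiority $\underline{w}<w_{R}^{\delta}(0)$ and $w_{R}^{\delta}(N)<\overline{w}$, which the paper treats separately (Lemma \ref{lemma:bounds}, invoking the auxiliary comparison $w_{R}(n+\Delta)>w_{R}^{\delta}(n)$ of Lemma \ref{lemma-just-extended}); these bounds, however, fall out directly from your self-map verification and the fact that $\Upsilon(x)<\overline{w}$, so no additional idea is needed.
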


\subsection{How Beliefs Affect Optimal Search}
\label{sec:how-beliefs-affect-optimal-search}

As is often the case,
a worker does not know whether benefits will be extended or the extension's length.
A worker's subjective beliefs about a future extension
are summarized by two parameters: $\delta$ and $\Delta$.
The parameter $\delta$ summarizes beliefs about the probability that benefits are extended.
The parameter $\Delta$ summarizes beliefs about the length of an extension.
Both $\delta$ and $\Delta$ affect search behavior in an intuitive way.
Extending benefits through $\delta$ or $\Delta$ provides relief from earning only the flow benefit of nonwork.
Which allows the worker to be more selective about the jobs they are willing to accept.
This idea is summarized in proposition \ref{prop:extend-pr-and-length}.

\begin{prop}
\label{prop:extend-pr-and-length}
Assume a worker searching for a job described in proposition \ref{prop:1}.
Each reservation wage $w_{R}^{\delta} \left( n \right)$ is increasing in $\delta$ and $\Delta$. 
That is,
the worker is more selective when accepting job offers for at least two reasons.
First, 
the worker is more selective
when they think there is a greater chance that benefits will be extended.
Second,
the worker is more selective
when they think benefits will be extended by more periods.
\end{prop}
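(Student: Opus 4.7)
The plan is to combine monotone comparative statics on the contraction $\mathcal{T}^{\delta}$ with an induction on $n$, first dispatching the base case $n=0$ and then propagating the monotonicity up the sequence using the recursion in \eqref{eq:text:wRdelta}. For the base case, write $M(y) \equiv y F(y) + \int_{y}^{\overline{w}} w\,dF(w) = \E[\max\{y,w\}]$, so that
\[
\mathcal{T}^{\delta}(x) = c(1-\beta) + \beta\delta\, M(w_{R}(\Delta)) + \beta(1-\delta)\, M(x),
\]
and note that $M'(y) = F(y) > 0$. Differentiating with respect to $\delta$ and evaluating at the fixed point yields
\[
\frac{\partial \mathcal{T}^{\delta}}{\partial \delta} \bigg|_{x = w_{R}^{\delta}(0)} = \beta \bigl[M(w_{R}(\Delta)) - M(w_{R}^{\delta}(0))\bigr] > 0,
\]
where the strict inequality follows from lemma \ref{lemma-just-extended} in appendix \ref{sec:app-extens} ($w_{R}(\Delta) > w_{R}^{\delta}(0)$) combined with monotonicity of $M$. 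Because $\mathcal{T}^{\delta}$ is a monotone contraction with slope strictly less than one, a pointwise upward shift at the fixed point forces the fixed point itself to rise; hence $w_{R}^{\delta}(0)$ is strictly increasing in $\delta$. For $\Delta$, the parameter enters $\mathcal{T}^{\delta}$ only through $w_{R}(\Delta)$, which proposition \ref{prop:text:basic-incr-reservation-wages} shows is strictly increasing in $\Delta$; the same contraction-shift argument delivers monotonicity of $w_{R}^{\delta}(0)$ in $\Delta$.

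For the inductive step, the recursion \eqref{eq:text:wRdelta} can be written
\[
w_{R}^{\delta}(n) = (z+c)(1-\beta) + \beta\delta\, M\bigl(w_{R}(n-1+\Delta)\bigr) + \beta(1-\delta)\, M\bigl(w_{R}^{\delta}(n-1)\bigr),
\]
expressing $w_{R}^{\delta}(n)$ as an explicit function of $w_{R}^{\delta}(n-1)$ and $w_{R}(n-1+\Delta)$, with weights that depend on $\delta$. Under the induction hypothesis that $w_{R}^{\delta}(n-1)$ is increasing in both $\delta$ and $\Delta$, and using proposition \ref{prop:text:basic-incr-reservation-wages} for monotonicity of $w_{R}(n-1+\Delta)$ in $\Delta$, monotonicity of $w_{R}^{\delta}(n)$ in $\Delta$ follows by inspection. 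Monotonicity in $\delta$ requires further that raising $\delta$ shifts probability mass from the smaller term $M(w_{R}^{\delta}(n-1))$ onto the larger term $M(w_{R}(n-1+\Delta))$, which reduces to the inequality $w_{R}(n-1+\Delta) > w_{R}^{\delta}(n-1)$.

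The main obstacle is establishing this generalized dominance $w_{R}(n-1+\Delta) > w_{R}^{\delta}(n-1)$ for every $n$, extending lemma \ref{lemma-just-extended} beyond its base case. Intuitively, a state offering $n-1+\Delta$ certain periods of UI dominates one offering $n-1$ periods with only a chance of extension, and I would prove this by a parallel induction on the value functions: compare $U(n-1+\Delta)$ with $U^{\delta}(n-1)$ through their Bellman equations, exploit that both flow payoffs equal $z+c$ whenever $n \geq 1$, and observe that the continuation in $U(n-1+\Delta)$ dominates the convex-combination continuation in $U^{\delta}(n-1)$ term by term. Because $W$ is strictly increasing, the value-function inequality transfers directly to the reservation-wage inequality needed to close the main induction.
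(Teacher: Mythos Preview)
Your approach is correct and tracks the paper's proof in appendix~\ref{sec:proof-app-search-behavior} closely: both argue the $n=0$ base case from the fixed-point equation for $w_{R}^{\delta}(0)$ and then propagate by induction through the recursion \eqref{eq:text:wRdelta}, with the dominance $w_{R}(n+\Delta) > w_{R}^{\delta}(n)$ as the key ingredient. Two points of comparison are worth noting. First, at $n=0$ the paper computes $\partial w_{R}^{\delta}(0)/\partial\delta$ explicitly via the implicit function theorem and handles $\Delta$ by a discrete comparison through the auxiliary map $\Psi(x)=x-\beta(1-\delta)\Upsilon(x)$; your contraction-shift phrasing is equivalent but less explicit. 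Second---and this saves you work---lemma~\ref{lemma-just-extended} in the appendix already establishes the full dominance $w_{R}(n+\Delta) > w_{R}^{\delta}(n)$ for every $n\in\{0,\dots,N\}$, not merely the base case referenced in the main text; so no extension of that lemma is needed. The paper proves it by a contradiction argument on reservation wages rather than your proposed value-function comparison, but either route closes the induction.
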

Appendix \ref{sec:proof-app-search-behavior} provides a detailed proof.

Propositions \ref{prop:text:basic-incr-reservation-wages}, \ref{prop:1}, and \ref{prop:extend-pr-and-length} are illustrated by figure \ref{fig:seq-res-wages},
which shows sequences of reservation wages for different parameter values.
The horizontal axis shows the remaining periods of UI compensation.
As established,
optimal choices imply a worker is less selective as benefits expire---all the sequences are increasing in remaining periods of UI compensation.
The solid, dark-blue line depicts the sequence of reservation wages after an extension has occurred.
Alternatively, because extensions are made only once,
this sequence could be interpreted as the sequence of reservation wages when there is no chance of an extension.

\begin{figure}[htbp]
  \centerline{\includegraphics[width=0.8\textwidth]{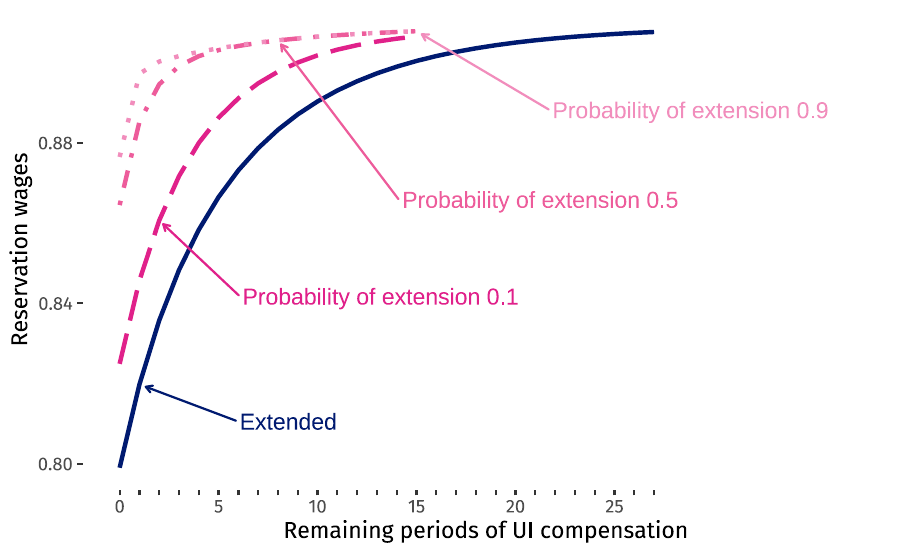}}
\caption[]{\label{fig:seq-res-wages} Sequences of reservation wages.}
  \begin{figurenotes}[Note]
    The dark-blue line depicts the sequence of reservation wages once an extension has occurred.
    These reservation wages lie below the sequences of reservation wages when there is a chance of extension,
    meaning the worker is less selective when there is no chance of extension.
    Sequences were generated using $F \left( w \right) = w$, $c = 0.42$, $z = 0.42$, $N = 15$, $\Delta = 13$, $\beta = 0.95$, and
    $\delta \in \left\{ 0.1, 0.5, 0.9 \right\}$.
\end{figurenotes} 
\end{figure}

The possibility that benefits are extended---or the perception---raises the entire sequence of reservation wages.
These types of upward shifts are depicted in figure \ref{fig:seq-res-wages} by the pink, dashed lines in the upper-left corner.
These sequences are computed for different values of $\delta$.
For example,
when $\delta = 0.1$, an extension is perceived to be unlikely relative to the case where $\delta = 0.9$.
Reservation wages associated with $\delta = 0.9$ lie above
reservation wages associated with $\delta = 0.1$.


Besides illustrating propositions \ref{prop:text:basic-incr-reservation-wages}, \ref{prop:1}, and \ref{prop:extend-pr-and-length},
figure \ref{fig:seq-res-wages} foreshadows the welfare results.
Consider a worker who perceives the probability of extension to be $0.1$ when the true probability of an extension is $0.5$.
In order for the worker to make suboptimal decisions,
they need to receive a wage offer between the long-dashed sequence and the dash--dot sequence in figure \ref{fig:seq-res-wages}.
There isn't much room to make suboptimal decisions.
Initially, at $n = N$,
the worker is rejecting similar offers, as the two lines converge
towards the case where benefits are available indefinitely.
And even at $n=0$,
a suboptimal decision will occur only when a wage offer falls between $0.864$ and $0.825$.
Put another way,
figure \ref{fig:seq-res-wages} suggests that the welfare costs of misperceiving an extension are potentially small.

Nevertheless,
policymakers often extend UI benefits temporarily by discretion.
Workers are forced to make crucial forecasts about $\delta$ and $\Delta$.
The next section considers the welfare consequences of misperceiving the probability and length of an extension.


\section{Welfare}
\label{sec:welfare}

In this section I explore a particular calibration of the model.
The calibration is not meant to be definitive.
Rather,
it is meant to exhibit the features of propositions \ref{prop:1} and \ref{prop:extend-pr-and-length}.




To do so,
I adopt a uniform wage distribution with support $\left[ 0,1 \right]$.\footnote{This assumption offers the opportunity to compare numerical work with closed-form expressions.
  Replication files for the paper compare
    reservation wages computed using closed-form expressions with
    those computed using Monte Carlo integration.
  Closed-form expressions are shared in appendix \ref{sec:closed-form-uniform}.}
For the simulations,
the discount factor, $\beta$, is set to $0.95$.
In the absence of UI compensation and the possibility of an extension,
I find the value of nonwork so that the worker expects to be unemployed for 10 periods.
I then take both the value of nonwork, $z$, and the value of UI compensation, $c$, to be half that value.
I set $N = 10$.
The true probability of extension is set to $0.5$ and
the true length of an extension is set to $25$.
I consider what happens when $\delta$ and $\Delta$ vary from these values.

Figure \ref{fig:welfare:pr-extension} compares the expected welfare of misperceiving the probability that benefits are extended.
I first compute the expected welfare associated with the baseline case where benefits are extended by $\Delta = 25$ periods with probability $0.5$.
To do this computation,
I simulate the model 500 million times and take the average of the computed welfares.\footnote{The simulations suggest that welfare is relatively flat,
  requiring a large number of runs.}
The welfare calculations add up and discount
the flow values of nonwork, any UI benefits, and the value of accepting a job offer.

\begin{figure}[htbp]
  \centerline{\includegraphics[]{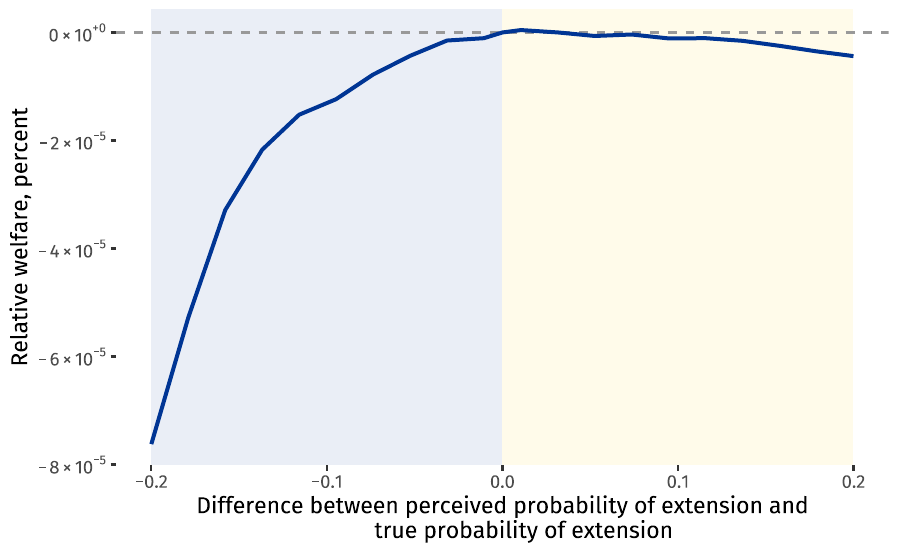}}
\caption[]{\label{fig:welfare:pr-extension} Welfare loss when a worker misperceives the likelihood of extension.}
  \begin{figurenotes}[Note]
    The horizontal axis is the subjective belief a worker holds when computing
    the sequence of reservation wages
    less the true probability that benefits are extended.
    The blue line depicts the loss in welfare associated with misperceiving the probability that benefits are extended.
    When the worker's subjective beliefs line up with the true probability at $0$ along the horizontal axis,
    then the welfare loss is zero.
  The shaded blue   region denotes cases where a worker is too pessimistic about an extension.
  The shaded yellow region denotes cases where a worker is too optimistic  about an extension.
  The true probability of an extension is $0.5$.
\end{figurenotes}
\end{figure}

A worker is then allowed to hold different beliefs about the possibility of extension.
In this scenario, $\Delta$ does not vary.
While the true probability of extension is held at $0.5$, 
the worker believes benefits are extended with probability $\delta$.
This misperception causes the worker to compute a sequence of reservation wages that differs from
the optimal sequence they would compute if they knew the true probability of extension.
When $\delta = 0.1$, for example,
the worker believes the extension is unlikely and they are therefore more likely to accept offers they should reject.
On average, the worker should remain unemployed more often to search for a better wage.
In contrast, when $\delta = 0.9$, 
the worker believes the extension is likely and they are therefore more likely to reject offers they should accept.
On average, the worker should accept more job offers instead of searching for job offers that do not arrive and remaining unemployed.

These losses are depicted in figure \ref{fig:welfare:pr-extension}.
When the worker's belief coincides with the true probability of extension,
which corresponds to $0$ along the horizontal axis,
the welfare loss is also $0$ (theoretically).
This is noted in figure \ref{fig:welfare:pr-extension} by a horizontal gray, dotted line.
As a worker misperceives the probability, they
mistakenly reject offers they would optimally accept and accept jobs they would optimally reject.
The vertical axis shows the relative welfare loss,
computed as the percent away from welfare associated with the true probability that an extension is made.

The loss is asymmetric.
A pessimistic worker, on average,
experiences more loss than an equally optimistic worker.
This feature can be seen by comparing the blue region,
where the perceived probability of extension is less than $0.5$,
to the yellow region,
where the perceived probability of extension is greater than $0.5$.

Yet, as the vertical axis in figure \ref{fig:welfare:pr-extension} indicates, the loss is very small.
What accounts for the small costs of misperception?
As suggested by figure \ref{fig:seq-res-wages},
reservation wages when $\delta = 0.5$ are close to reservation wages when $\delta = 0.9$ or even when $\delta = 0.1$.
Receiving wage offers in the gap between the two sequences is how suboptimal decisions are costly.
But the reservation-wage sequences show that
even when a worker misperceives the probability of an extension by $+0.4$ or $-0.4$,
they are nearly optimally making job-acceptance decisions.
In addition,
most spells of unemployment are short.
Because sequences of reservation wags quickly converge to the reservation wage that would prevail if benefits were paid indefinitely
(seen in figure \ref{fig:seq-res-wages} when the remaining periods of UI compensation equals 15), again,
many job acceptance decisions are made nearly optimally.

A comparison of reservation wages when $\delta = 0.5$ to cases when $\delta = 0.1$ and $\delta = 0.9$
also reveals the source of the asymmetry.
When $\delta = 0.9$, reservation wages are nearly indistinguishable in figure \ref{fig:seq-res-wages} to the case where $\delta = 0.5$.
In contrast,
there is a distinguishable gap when $\delta = 0.1$.
To be clear, though, welfare loss is small whether a worker is optimistic or pessimistic about an extension.  




The same pattern holds when workers misperceive the length of extension,
which can be seen by comparing figure \ref{fig:welfare:pr-extension} to figure \ref{fig:welfare:length-extension}.
To create figure \ref{fig:welfare:length-extension},
I first compute the expected welfare associated with a baseline.
In the baseline,
compensation benefits are extended with probability $\delta = 0.5$ for $25$ periods.
While the true length of an extension is held at $25$,
the worker holds a different belief about the length of extension,
which they use to compute a sequence of reservation wages.
Based on this sequence,
the worker accepts and rejects offers---suboptimally,
as the worker would have computed a different sequence of reservation wages to maximize their expected present value of welfare if
they knew the true length of extension.
In this scenario, $\delta$ does not vary.

\begin{figure}[htbp]
  \centerline{\includegraphics[]{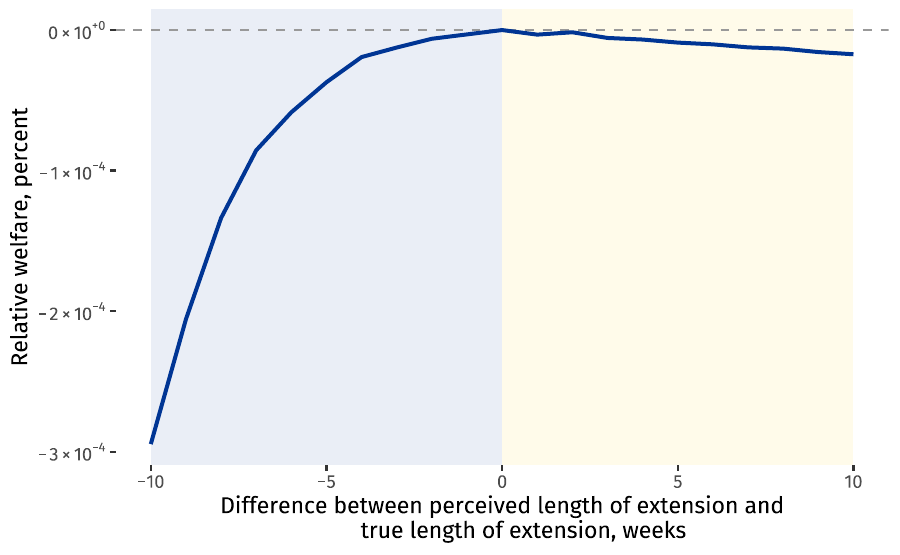}}
\caption[]{\label{fig:welfare:length-extension} Welfare loss when a worker misperceives the length of extension.}
\begin{figurenotes}[Note]
  The horizontal axis is the subjective belief about the length of an extension a worker holds
  when computing the sequence of reservation wages
  less the true length of an extension.
  The blue line depicts the loss in welfare.
  When the worker's subjective beliefs line up with the true length of an extension at $0$ along the horizontal axis,
  then the welfare loss is zero.
  The region shaded blue   denotes cases where a worker is too pessimistic about the length of the extension.
  The region shaded yellow denotes cases where a worker is too optimistic  about the length of the extension.
  The true length of an extension is 25.
\end{figurenotes}
\end{figure}


The exercise is repeated for different beliefs over the length of the extension.
Relative losses are depicted in figure \ref{fig:welfare:length-extension}.
Welfare losses are asymmetric and small in magnitude.
The magnitudes do suggest, however,
that misperceptions about the length of extension may lead to
greater welfare loss than misperceptions about the probability of an extension.


Figures \ref{fig:welfare:statistics:pr} and \ref{fig:welfare:statistics:length} illustrate the mechanisms through which welfare is lost.
Figure \ref{fig:welfare:statistics:pr} illustrates why welfare is lost when a worker misperceives the probability of an extension.
In the gray region of both panels,
looking at the horizontal axes reveals that in these cases the worker perceives that an extension is unlikely relative to the truth.
The top panel shows that, on average, workers who are too pessimistic about an extension spend
too little time unemployed and searching.
If they knew the true probability of extension was higher,
then they would reject wage offers they had accepted based on a worse forecast.
This dynamic is shown in the bottom panel of the figure,
which shows the relative accepted wage.
In contrast, in the white region,
workers are too optimistic about an extension.
They spend too much time searching for a high wage, which costs them through experienced unemployment.

\begin{figure}[htbp]
  \centerline{\includegraphics[width=0.8\textwidth]{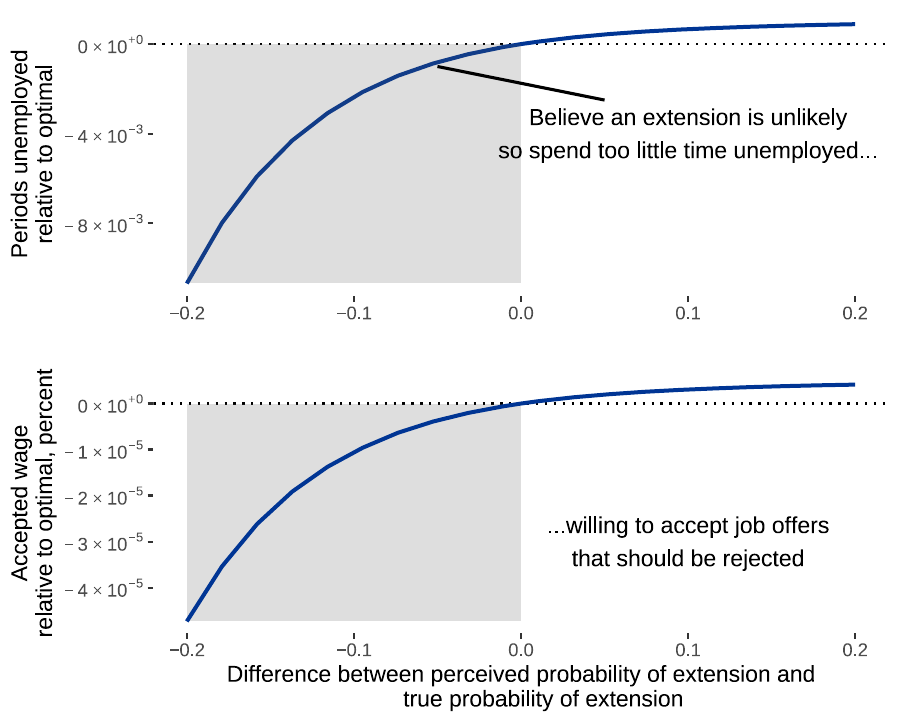}}  
\caption[]{\label{fig:welfare:statistics:pr} Simulated statistics for a worker's choices about job offers
  when the worker misperceives the true chance of an extension.}
\begin{figurenotes}[Note]
  The horizontal axis reports the subjective belief a worker holds about the chance of an extension
  when computing the sequence of reservation wages
  less the true probability that benefits are extended.
  The unit of measurement along the horizontal axis is probability.
  The shaded regions highlight cases where a worker believes an extension is unlikely.
  The top panel shows the average number of periods a worker spends unemployed relative to the
  number of periods that would be spent in unemployment if the true probability of an extension were known.
  The bottom panel shows the average accepted wage, reported as the percent away from the optimal wage.
\end{figurenotes}
\end{figure}

Both panels of figure \ref{fig:welfare:statistics:pr} show that
misperceptions lead to suboptimal job-acceptance decisions, but
these decisions are nearly optimal.
The statistics on time spent unemployed and accepted wages corroborate the welfare losses depicted in figure \ref{fig:welfare:pr-extension}.
Figure \ref{fig:welfare:statistics:length} illustrates that the same mechanisms reduce welfare when a worker misperceives the length of an extension.
Overly optimistic  workers spend too much   time unemployed  searching for a high wage.
Overly pessimistic workers spend too little time unemployed, believing their benefits will soon run out.

\begin{figure}[htbp]
\centerline{\includegraphics[width=0.8\textwidth]{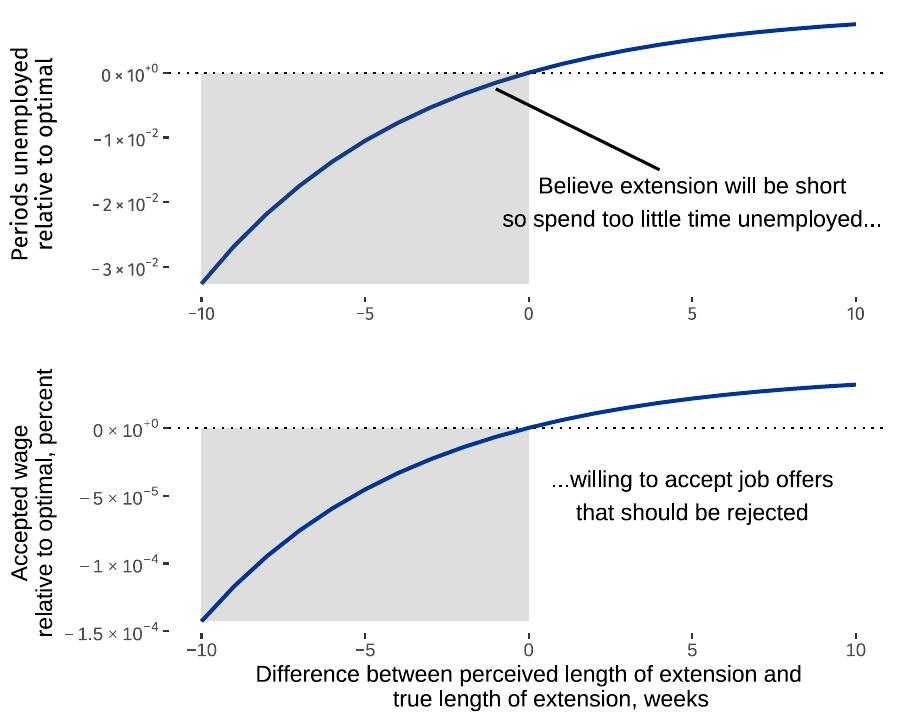}}
\caption[]{\label{fig:welfare:statistics:length} Simulated statistics for a worker's choices about job offers
  when the worker misperceives the length of an extension.}
\begin{figurenotes}[Note]
  The horizontal axis reports the subjective belief a worker holds about the length of an extension
  when computing the sequence of reservation wages
  less the true extension length.
  The unit of measurement along the horizontal axis is periods or weeks.
  The shaded regions highlight cases where a worker believes an extension will be shorter than what occurs.
  The top panel shows the average number of periods a worker spends unemployed relative to the
  number of periods that would be spent in unemployment if the true length of extension were known.
  The bottom panel shows the average accepted wage, reported as the percent away from the optimal wage.
\end{figurenotes}
\end{figure}

\section{Discussion}
\label{sec:discussion}

Propositions \ref{prop:1} and \ref{prop:extend-pr-and-length} link reservation wages to UI compensation benefits:
reservation wages decline as UI compensation expires.
This theoretical result is mildly consistent with
survey evidence presented by \citet{krueger_mueller_2016}.
They asked 6,025 respondents in New Jersey who were unemployed as of September 28, 2009:
``Suppose someone offered you a job today.
What is the lowest wage or salary you would accept (before deductions)
for the type of work you are looking for?''
Respondents' answers across 39,201 interviews indicate that
there is 
``little tendency for the reservation wage to decline over the spell of unemployment''
\citep[][158 and figure 4]{krueger_mueller_2016}.

What could account for a flat sequence of reservation wages?
My results suggest that beliefs about extensions may be an important factor.
The New Jersey respondents were unemployed as of September 28, 2009,
so many were confronted with the possibility that benefits would be extended and they could claim up to 99 weeks of UI compensation.
Many, in other words, faced a situation like Vernon's discussed in section \ref{sec:ext-thought-experiment}.
Two years later, in September 2011, the unemployment rate was around 9 percent.
A worker could interpret these data as evidence that additional extensions would be made available.
In addition,
the model shows that reservation wages converge somewhat quickly to the level where UI compensation will be available indefinitely.
Optimism about an extension and quick convergence both suggest
a rule of thumb that computes reservation wages that are close to the reservation wage in the case where benefits are available indefinitely.
If workers adopt this rule of thumb on average, then
there would indeed be ``little tendency for the reservation wage to decline over the spell of unemployment.''\footnote{Less is known about
  how psychology influences job search.
  Two examples are provided by \citet{dellavigna_paserman_2005} and \citet{paserman_2008},
  who study how impatience affects search.}


Reservation wages may well be more informative than
what can be learned in a partial-equilibrium setting like the one considered here.
\citet{shimer_werning_2007} point out
that a reservation wage makes a worker indifferent between work and nonwork (as in $W \left( w_{R} \right) = U$).
In addition, a worker's take-home pay is directly related to consumption and
therefore utility.
Thus,
the reservation wage (benefits do not expire in their model) measures the utility of unemployed workers.
Any policy that raises the single reservation wage will increase workers' well-being.
In particular,
if a marginal increase in UI compensation raises a worker's reservation wage,
then the change improves welfare.\footnote{In related work,
  \citet{shimer_werning_2008}
  establish that optimal UI compensation
  involves constant or nearly constant benefits and thus a single reservation wage.
  Both papers informatively consider cases where workers save.}
\citeauthor{krueger_mueller_2016}'s \citeyearpar{krueger_mueller_2016}
finding that reservation wages do not respond to benefits
suggests that increasing UI benefits would not be optimal.
But, as pointed out here,
that conclusion may be complicated by the uncertainty that surrounded
extended benefits after the Great Recession.
This perspective raises a number of questions for future research.
In particular: What beliefs do people hold about extensions?





\section{Conclusion}
\label{sec:conclusion}

My goal was to identify a worker's costs of misperceiving
the probability and length of an extension to UI benefits.
As discussed in section \ref{sec:descr-decis-making-ext}, 
navigating possible extensions
was an unavoidable feature of job search after
the two most recent recessions.
The model I presented in section \ref{sec:job-search-benefits}
establishes channels through which a worker's welfare can suffer.
When policymakers like congressional representatives
assure constituents that a significant extension is coming \textit{without a doubt},
then workers will believe an extension is likely and
the length of the extension will provide meaningful support.
Workers will adjust their reservation wages upward.
If the extension does not come,
then workers will have rejected offers they would have liked to accept.
On average, workers will experience too much unemployment.
The mirror channel operates
when a policymaker communicates their disapproval of UI benefits,
despite wider support among policymakers that an extension will be granted.
The numerical example, however,
found small costs to misperceiving the probability and length of an extension.
While the calibration is not definitive,
the qualitative properties of the model may help explain some limited data on reservation wages.

Going beyond this paper, 
the numerical example suggests that extending benefits when the unluckiest and most unfortunate need them to avoid misery
will alleviate some worst cases.
\citet[][208]{kahn_2011} points out that searching for work
in economic slumps is ``particularly damaging'' and
notes that the damage extends long into careers.
Small dithering costs borne by workers may be worth undertaking in order for
policymakers to extend benefits for a preferred length and amount.
Clear communication will help.
Going forward,
extensions may be relied upon more frequently as policymakers are forced to manage not only
business cycles but also natural events like
viruses, wildfires, extreme heat, and hurricanes.

\bibliographystyle{../../../bibliography/bostonfed}
\bibliography{../../../bibliography/bibliography-org-ref}

\clearpage
\pagebreak







\appendix
\appendixpage

\section{Some Additional Background on the UI System}
\label{sec:lit-review}

For most people, searching for a job is an inescapable part of life.
Its importance is reflected in a voluminous literature on the design of UI systems.
While a complete accounting is beyond the scope of this paper,
I nevertheless attempt to briefly orient the paper.



\citeauthor{shimer_werning_2007}'s \citeyearpar{shimer_werning_2007} work, mentioned in section \ref{sec:discussion}, 
derives a sufficient statistic to check optimality of the studied UI system.
For an overview of the sufficient-statistic approach, see \citet{chetty_2009} and \citet{chetty_finkelstein_2013}.
Prominent examples of this approach in the context of UI compensation are 
\citet{chetty_2006} and \citet{chetty_2008}.
\citet{landais_spinnewijn_2021} offer a related approach
to valuing UI that links structural and reduced-form models.
\citeauthor{shimer_werning_2007}'s \citeyearpar{shimer_werning_2007} analysis is primarily concerned with the level of UI benefit.
Along this train of thought,
each feature of the UI system could possibly be adjusted to improve well-being.
And there are many features to consider as the overviews by
\citet{krueger_meyer_2002}, \citet{chetty_finkelstein_2013}, and
\citet{schmieder_von-wachter_2016} attest.
The remainder of this brief section highlights different features of the UI system.

One feature is the effect UI compensation has on labor supply.
While no UI compensation seems cruel,
enough to fund lavish homes and fancy vacations may halt an economy.
Between these extremes is a suitable level that is much debated.
Part of this debate is based on labor supply around
added periods an unemployed worker can claim UI compensation
\citep{solon_1979,card_levine_2000,lalive_zweimuller_2004,card_chetty_weber_2007,lalive_2007,card_etal_2015,chodorow-reich_coglianese_karabarbounis_2019,le-barbanchon_rathelot_roulet_2019,dieterle_bartalotti_brummet_2020,marinescu_skandalis_2021}.
Labor supply is perennially held up as an answer to why a labor market may be weak,
including after the Great Recession and the COVID-19 pandemic.
Labor supply after the Great Recession is studied by
\citet{rothstein_2011},
\citet{farber_rothstein_valletta_2015},
\citet{farber_valletta_2015},
\citet{hagedorn_manovskii_mitman_2016},
\citet{johnston_mas_2018},
\citet{hagedorn_etal_2019}, and
\citet{boone_etal_2021}.
Labor supply after the COVID-19 pandemic is studied by
\citet{boar_mongey_2020},
\citet{holzer_hubbard_strain_2021},
\citet{albert_etal_2022},
\citet{petrosky-nadeau_2020},
\citet{petrosky-nadeau_valletta_2021}, and
\citet{faberman_mueller_sahin_2022}.
These are not exhaustive lists in any way.
Rather,
the listed work and the references they cite highlight the importance of the issue.

Extended UI compensation may keep workers in the labor force.
Instead of giving up their search discouraged,
a worker may continue their search to claim UI compensation and eventually find work.
This point has been made at least since \citet{solon_1979} made it.
More recently,
\citet{petrosky-nadeau_valletta_2021} consider labor-force transitions during the COVID-19 pandemic.
There is also the possibility that extended benefits
reduce claims for Social Security Disability Insurance \citep{rutledge_2011}.\footnote{\citet{autor_duggan_2003} and \citet{khemka_roberts_higgins_2017} provide background.}

In a larger context,
extended benefits affect the flow value of nonwork,
which is an important parameter in macroeconomic models.
This feature of the UI system
in the Diamond--Mortensen--Pissarides class of models is a worker's threat point
in Nash wage bargaining
\citep{chodorow-reich_karabarbounis_2016, ljungqvist_sargent_2017, jaeger_etal_2020}.
Likewise,
access to extended benefits influence efficiency-wage models,
where workers are paid above-market-clearing rates \citep{yellen_1984}.
And affect notions of fairness, which matter in practice \citep{akerlof_yellen_1990,bewley_1999}.

The flow value of nonwork in the context of general-equilibrium models of the macroeconomy
moves the focus from microeconomic choices to macroeconomic effects.
There are two objectives of the UI system:
``provide temporary and partial wage replacement to involuntarily unemployed workers and to stabilize the economy during recessions''
\citep{whittaker_isaacs_2022}.
UI benefits automatically stabilize an economy.
In periods of rising unemployment,
weekly benefit payments increase and collected payroll taxes decline.
The aggregate-demand effects of UI benefits
are considered by \citet{mitman_rabinovich_2015}, \citet{kroft_notowidigdo_2016} and \citet{landais_michaillat_saez_2018}.
Empirical evidence for these effects is provided by
\citet{marinescu_2017}  and
\citet{marinescu_skandalis_zhao_2021}.
\citet{kekre_2023} considers extended benefits in a dynamic, stochastic, general-equilibrium model that features incomplete markets.

Extended benefits may also interact with 
several other prominent features that are not in the present model.
For example,
it is easy to imagine that wishful thinking affects reservation wages \citep{caplin_leahy_2019}.
Or that extensions influence learning about wage offers \citep{burdett_vishwanath_1988}.
\citet{acemoglu_2001} and \citet{nekoei_weber_2017} 
investigate benefits and workers' subsequent wages,
an indicator of match quality.
Extensions could also affect selection issues as documented by \citet{hendren_2017},
which has implications for private UI markets.
\citet{anquandah_bogachev_2019}, for example, consider related UI pricing issues.

In summary,
the theoretical perspective I have presented raises a number of
questions about how perceptions about the probability and length of an extension
interact with features of the UI system and what optimal features look like.

\section{Canonical Job Search with Finite UI Benefits}
\label{sec:app-basic-job-search}

In this section,
I describe a canonical McCall model with expiring benefits.
I provide more details than the presentation in the main text in section \ref{sec:job-search-basic}.

In this decision-making environment,
a worker searches for a job in discrete time.
The worker seeks to maximize
\begin{equation}
\E\sum_{t=0}^{\infty}\beta^{t}x_{t}\label{eq:objective}
\end{equation}
where
$\beta\in\left(0,1\right)$ is the discount factor,
$x_{t}$ is income at time $t$, and
$\E$ denotes the expectation.
The worker is interested in decision rules that indicate whether to accept or reject job offers.
As established below,
optimal decision-making is characterized by a sequence of reservation wages.

The firm side of the economy is modeled as a collection of firms that offer productive opportunities to workers summarized by wage offers.
Wages are offered in the range $\left[\underline{w},\overline{w}\right]$ or $\left[0,\infty\right)$.
The other side of the market are workers.
A worker receives one independent and identically distributed offer each period.
In addition,
workers understand that wages are offered according to a given law, $w \sim F\left(w\right)$.

A job is kept forever.
The Bellman equation for the value of an accepted job is:
\begin{align}
W\left(w\right) & =w+\beta W\left(w\right)\quad\therefore\quad W\left(w\right)=\frac{w}{1-\beta},\label{eq:W}
\end{align}
which is repeated in equation \eqref{eq:text:W}.

Let $n$ represent the number of remaining periods of UI compensation.
This is a state variable.
The Bellman equation for the value of unemployment is
\begin{equation}
U\left(n\right)=z+c+\beta \E\left[\max\left\{ U\left(n-1\right),W\left(w\right)\right\} \right]\text{ for }n\in\left\{ 1,\dots,N\right\}, \label{eq:Un}
\end{equation}
which is the same as equation \eqref{eq:text:U-n}.
The expression $U \left( n \right)$ can be expanded as
\begin{align*}
U\left(n\right) & =z+c+\beta\int_{\underline{w}}^{\overline{w}}\max\left\{ U\left(n-1\right),W\left(w\right)\right\} dF\left(w\right)\\
 & =z+c+\beta\int_{\underline{w}}^{\overline{w}}\max\left\{ \frac{w_{R}\left(n-1\right)}{1-\beta},\frac{w}{1-\beta}\right\} dF\left(w\right)\\
\therefore\frac{w_{R}\left(n\right)}{1-\beta} & =z+c+\beta\int_{\underline{w}}^{\overline{w}}\max\left\{ \frac{w_{R}\left(n-1\right)}{1-\beta},\frac{w}{1-\beta}\right\} dF\left(w\right)\\
\therefore w_{R}\left(n\right) & =\left(z+c\right)\left(1-\beta\right)+\beta\int_{\underline{w}}^{\overline{w}}\max\left\{ w_{R}\left(n-1\right),w\right\} dF\left(w\right).
\end{align*}
Crucially, this means, for $n \in \left\{ 1,\dots,N \right\}$,
\begin{equation}
w_{R}\left(n\right)=\left(z+c\right)\left(1-\beta\right)+\beta\left\{ \int_{\underline{w}}^{w_{R}\left(n-1\right)}w_{R}\left(n-1\right)dF\left(w\right)+\int_{w_{R}\left(n-1\right)}^{\overline{w}}wdF\left(w\right)\right\}, \label{eq:wRn}
\end{equation}
which agrees with equation \eqref{eq:text:wRn}.
When UI benefits have expired,
the problem becomes 
\begin{equation}
U\left(0\right)=z+\beta \E\left[\max\left\{ U\left(0\right),W\left(w\right)\right\} \right],\label{eq:U0}
\end{equation}
where the flow value of unemployment is only $z$ instead of $z+c$.
The same procedure as above yields
\begin{equation}
  \label{eq:wR0}
w_{R}\left(0\right)=z\left(1-\beta\right)+\beta\left\{ \int_{\underline{w}}^{w_{R}\left(0\right)}w_{R}\left(0\right)dF\left(w\right)+\int_{w_{R}\left(0\right)}^{\overline{w}}wdF\left(w\right)\right\},
\end{equation}
which agrees with equation \eqref{eq:text:wR0} in the text.

The remainder of this section establishes that the reservation-wage policy is a sequence of reservation wages increasing in $n$.
This is summarized in proposition \ref{prop:text:basic-incr-reservation-wages} in the main text.

\section{Proof of Proposition \ref{prop:text:basic-incr-reservation-wages} in the Main Text}
\label{sec:proof-basic}

The worker solves 
\begin{equation}
V\left(w,n\right) = \max_{\text{accept, reject}}\left\{ W\left(w\right),U\left(n\right)\right\} .\label{eq:value-fnc}
\end{equation}
\citet{burdett_1979} uses the value function $V$ to establish the
result in \eqref{eq:text:optimal-seq-res-wages}; namely,
that $w_{R} \left( n \right) > w_{R} \left( n-1 \right)$.\footnote{Using the value function in this way is also achieved by \citet{ross_1983}.}
Another approach, which is taken here,
states the problem in terms of reservation wages, which may be more transparent.
In addition, characterizing the problem this way provides an algorithm for computing the sequence of reservation wages.


The proof proposition \ref{prop:text:basic-incr-reservation-wages} in the main text
relies on 2 lemmas.
The 2 lemmas are first stated and proved before the details of the proof of proposition \ref{prop:text:basic-incr-reservation-wages} in the main text are given.

\subsection{Proof of Lemmas \ref{lemma:upsilon} and  \ref{lemma:Psi}}

\begin{lemma}
\label{lemma:upsilon}
Define
\begin{align*}
\Upsilon\left(x\right)\equiv\int_{\underline{w}}^{x}xdF\left(w\right)+\int_{x}^{\overline{w}}wdF\left(w\right),
\end{align*}
for $x \in \left[ \underline{w}, \overline{w} \right]$.
Then $\Upsilon$ is increasing on the interval $\left(0,\overline{w}\right)$.
A minor modification allows me to replace $\overline{w}$ with $\infty$ and define $\Upsilon$ on $\left[\underline{w}, \infty \right)$.
In addition,
\begin{align*}
\Upsilon \left( x \right) < \overline{w}.
\end{align*}
\end{lemma}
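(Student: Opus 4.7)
The plan is to reduce the lemma to two clean computations: showing $\Upsilon'(x) = F(x)$, which delivers monotonicity, and bounding $\Upsilon(x)$ above by $\overline{w}$ directly. To avoid assuming $F$ admits a density (the paper only posits continuity), I would first rewrite $\Upsilon$ in a form that makes both tasks transparent. Using $\int_{\underline{w}}^{x} x\, dF(w) = xF(x)$ together with $xF(x) = x - \int_{x}^{\overline{w}} x\, dF(w)$, I combine the two pieces into
\begin{equation*}
\Upsilon(x) = x + \int_{x}^{\overline{w}} (w - x)\, dF(w) = x + \int_{x}^{\overline{w}} \bigl(1 - F(w)\bigr)\, dw,
\end{equation*}
where the second equality is the Lebesgue--Stieltjes integration-by-parts identity for a continuous distribution function (the boundary term $(\overline{w}-x)(1-F(\overline{w}))$ vanishes).

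From this representation, monotonicity follows at once from the fundamental theorem of calculus: $\Upsilon'(x) = 1 - \bigl(1 - F(x)\bigr) = F(x)$, which is strictly positive on the interior of the support, establishing the claim on $(\underline{w}, \overline{w})$ and in particular on $(0, \overline{w})$ when $\underline{w} \geq 0$. The strict upper bound is equally direct: since $1 - F(w) < 1$ on a set of positive measure in $(x, \overline{w})$ whenever $F(x) > 0$,
\begin{equation*}
\Upsilon(x) = x + \int_{x}^{\overline{w}} \bigl(1 - F(w)\bigr)\, dw \;<\; x + (\overline{w} - x) \;=\; \overline{w}.
\end{equation*}

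For the extension to support $[\underline{w}, \infty)$, the finite-mean hypothesis on $F$ forces $w\bigl(1 - F(w)\bigr) \to 0$ as $w \to \infty$, so the boundary term in the integration by parts continues to vanish and the identity $\int_{x}^{\infty}(w-x)\,dF(w) = \int_{x}^{\infty}\bigl(1-F(w)\bigr)\,dw$ still holds, with both sides finite. The derivative computation $\Upsilon'(x) = F(x)$ then carries over verbatim. The main obstacle I anticipate is not conceptual but technical: justifying the integration by parts without a density. I would either quote the general Lebesgue--Stieltjes formula for distribution functions, or, for a self-contained argument, approximate $F$ by convolving with a small smooth kernel, apply ordinary integration by parts, and pass to the limit using dominated convergence.
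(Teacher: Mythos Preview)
Your proposal is correct and follows essentially the same route as the paper. Both arguments hinge on the rewriting $\Upsilon(x) = x + \int_{x}^{\overline{w}} (w-x)\,dF(w)$ together with integration by parts; the paper writes the result as $\overline{w} - \int_{x}^{\overline{w}} F(w)\,dw$ while you write the equivalent $x + \int_{x}^{\overline{w}}(1-F(w))\,dw$, and the strict upper bound falls out identically. The only minor difference is in the order of operations for monotonicity: the paper differentiates $\Upsilon$ directly via Leibniz's rule (invoking a density $f$) to obtain $\Upsilon'(x)=F(x)$, whereas you first pass to the rewritten form and then apply the fundamental theorem of calculus, which has the small advantage of not requiring $F$ to admit a density.
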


\begin{proof}
Differentiation of $\Upsilon$ yields
\begin{align*}
\Upsilon^{\prime}\left(x\right) & =1xf\left(x\right)+\int_{\underline{w}}^{x}1dF\left(w\right)-xf\left(x\right)=\int_{\underline{w}}^{x}dF\left(w\right)\\
 & =F\left(x\right)>0,
\end{align*}
establishing the first result.
The second result can be established by writing $\Upsilon$ as
\begin{align*}
\Upsilon \left( x \right) &= x + \int\limits_{x}^{\overline{w}} \left( w-x \right) d F \left( w \right) \\
  &= x + \overline{w} - x - \int\limits_{x}^{\overline{w}} F \left( w \right) dw \\
  &= \overline{w} - \int\limits_{x}^{\overline{w}} F \left( w \right) dw,
\end{align*}
where the first line adds and subtracts $\int_{x}^{\overline{w}} x dF \left( w \right)$ and
the second line uses integration by parts.
These derivations are carried out in \eqref{eq:integration-by-parts-T} and \eqref{eq:integration-by-parts-T-2} below.
\end{proof}

\begin{lemma}
  \label{lemma:Psi}
  The function $\Psi\left(x\right)\equiv x-\beta\left(1-\delta\right)\Upsilon\left(x\right)$ is increasing in $x$:
\begin{align*}
\frac{\partial\Psi}{\partial x} & =1-\beta\left(1-\delta\right)\Upsilon^{\prime}\left(x\right)\\
 & =1-\beta\left(1-\delta\right)F\left(x\right)\\
 & >0,
\end{align*}
where the inequality uses lemma \ref{lemma:upsilon}.
\end{lemma}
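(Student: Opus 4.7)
The plan is to establish strict monotonicity of $\Psi$ by direct differentiation and a crude numerical bound, leveraging the explicit formula for $\Upsilon'$ already obtained. Since $\Psi(x) = x - \beta(1-\delta)\Upsilon(x)$ is differentiable wherever $\Upsilon$ is, I would first compute
\[
\Psi'(x) = 1 - \beta(1-\delta)\Upsilon'(x).
\]

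Next I would invoke Lemma \ref{lemma:upsilon}, which supplies the closed form $\Upsilon'(x) = F(x)$. Substituting gives $\Psi'(x) = 1 - \beta(1-\delta)F(x)$, and it remains only to sign this quantity. Using $\beta \in (0,1)$, $\delta \in [0,1]$, and the elementary bound $F(x) \le 1$ for any cumulative distribution function, I would conclude
\[
\beta(1-\delta)F(x) \le \beta < 1,
\]
so that $\Psi'(x) > 0$ throughout the domain, which is the desired strict monotonicity.

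There is essentially no substantive obstacle here; the argument is a one-line derivative followed by a numerical comparison. The only point worth flagging is that the inequality must be strict even at $x = \overline{w}$ where $F(x) = 1$, but this is automatic because $\beta$ is strictly less than $1$ by hypothesis, so $\beta(1-\delta) < 1$ regardless of the value of $F$. No appeal to continuity of $F$ or to the structure of the offer distribution is required beyond what Lemma \ref{lemma:upsilon} already delivered.
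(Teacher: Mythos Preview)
Your proposal is correct and matches the paper's own argument essentially line for line: differentiate, substitute $\Upsilon'(x)=F(x)$ from Lemma~\ref{lemma:upsilon}, and observe that $\beta(1-\delta)F(x)<1$ because each factor is at most one and $\beta<1$ strictly. There is nothing to add.
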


\subsection{Details of the Proof of Proposition \ref{prop:text:basic-incr-reservation-wages} in the Main Text}

Proposition \ref{prop:text:basic-incr-reservation-wages} in the main text is established in \ref{item:basic:range} steps:
\begin{enumerate}
\item\label{item:basic:reservation-wages} The existence of reservation wages is established,
  which justifies writing the problem in terms of reservation wages.
\item\label{item:basic:contraction} Existence and uniqueness of $w_{R} \left( 0 \right)$ is established.
  This is done by an appeal to the contraction mapping theorem.
\item\label{item:basic:sequence} The sequence of reservation wages,
  $w_{R}\left(1\right),\dots,w_{R}\left(N\right)$,
  are then computed, starting from $w_{R} \left( 0 \right)$.
  It is established that the worker is less selective when there are fewer remaining periods of UI benefits,
  which is expressed in equation \eqref{eq:text:optimal-seq-res-wages} of proposition \ref{prop:text:basic-incr-reservation-wages}. 
\item\label{item:basic:range} The last step establishes that $w_{R} \left( N \right) < \overline{w}$ and $w_{R}\left( 0 \right) > \underline{w}$.
\end{enumerate}

\ul{\mbox{Step \ref{item:basic:reservation-wages}}}: A reservation wage characterizes the worker's optimal choice.
From \eqref{eq:W},
the payoff of accepting a job is $W\left(w\right)=w/\left(1-\beta\right)$.
The set of possible payoffs starts from $\underline{w}$ and goes
until $\overline{w}/\left(1-\beta\right)$.
The value $U\left( n \right)$ is constant.
In addition, the most the payoff of unemployment can be is
\begin{align*}
U\left( n \right) & \leq z+\beta W\left(\overline{w}\right) \\
 &= z+\beta\frac{\overline{w}}{1-\beta}
\end{align*}
as
\begin{align*}
z+\beta\frac{\overline{w}}{1-\beta} & <\frac{\overline{w}}{1-\beta}=W\left(\overline{w}\right)\\
\therefore z+\beta\frac{\overline{w}}{1-\beta} & <\frac{\overline{w}}{1-\beta}\\
\therefore z & <\frac{\overline{w}}{1-\beta}\left(1-\beta\right)\\
\therefore z & <\overline{w},
\end{align*}
which is true by assumption.
Likewise, from \eqref{eq:U0},
\begin{align*}
U\left(n\right) & \geq z+\beta\frac{\mu_{w}}{1-\beta}>\frac{\underline{w}}{1-\beta}
\end{align*}
as
\begin{align*}
z+\beta\frac{\mu_{w}}{1-\beta} & >\frac{\underline{w}}{1-\beta}\\
\therefore\left(1-\beta\right)z+\beta\mu_{w} & >\underline{w}
\end{align*}
by assumption.
Therefore, because $W$ is strictly increasing in $w$,
there exists a unique $w_{R}\left(n\right)$ such that
$W\left(w_{R}\left(n\right)\right)=U\left(n\right)$.
This establishes the existence of the reservation wages $w_{R} \left( n \right)$
for $n \in \left\{ 0,\dots,N \right\}$.

\ul{\mbox{Step \ref{item:basic:contraction}}}: Existence of the reservation wages offers an alternative characterization of the problem.
Starting from \eqref{eq:wR0},
define the function $\mathcal{T}$ as
\begin{equation}
\mathcal{T}\left(x\right)=z\left(1-\beta\right)+\beta\left\{ \int_{\underline{w}}^{x}xdF\left(w\right)+\int_{x}^{\overline{w}}wdF\left(w\right)\right\} .\label{eq:map-wR}
\end{equation}
The function $\mathcal{T}$ is defined on $\left[\underline{w},\overline{w}\right]$.
I am interested in $\mathcal{T} \left( w_{R} \left( 0 \right) \right) = w_{R} \left( 0 \right)$.

I first prove that $\mathcal{T}$ is a self map;
that is,
$\mathcal{T}:\left[\underline{w},\overline{w}\right]\rightarrow\left[\underline{w},\overline{w}\right]$.
First,
$\mathcal{T}\left(\underline{w}\right)=z\left(1-\beta\right)+\beta\mu_{w}>\underline{w}$
by assumption.
Second,
$\underline{w}<\mathcal{T}\left(\overline{w}\right)=z\left(1-\beta\right)+\beta\overline{w}<\overline{w}$.
Lastly, $\mathcal{T}$ is strictly increasing on $\left(\underline{w},\overline{w}\right)$:
\begin{align*}
\mathcal{T}^{\prime}\left(x\right) & =\beta\left\{ \underline{w}f\left(\underline{w}\right)+\int_{\underline{w}}^{x}1dF\left(w\right)-\underline{w}f\left(\underline{w}\right)\right\} \\
 & =\beta F\left(x\right).
\end{align*}
where the computation uses Leibniz's rule.
Hence, $\mathcal{T}:\left[\underline{w},\overline{w}\right]\rightarrow\left[\underline{w},\overline{w}\right]$. 

In addition,
because the derivative of $\mathcal{T}$ is
$\mathcal{T}^{\prime}\left(x\right)=\beta F\left(x\right)$,
it is true that $0<\mathcal{T}^{\prime}\left(x\right)<1$ for all
$x\in\left(\underline{w},\overline{w}\right)$.
Using the usual metric,
$d\left(w_{1},w_{2}\right)=\left|w_{1}-w_{2}\right|$,
$\mathcal{T}$ is a contraction on $\left[\underline{w},\overline{w}\right]$ \citep[58, theorem 4.2]{bryant_1985}.
Because $\left[\underline{w},\overline{w}\right]$ is complete,
the contraction mapping theorem establishes that $\mathcal{T}$ admits
one and only one fixed point $w_{R}\left(0\right)\in\left[\underline{w},\overline{w}\right]$
\citep[191, theorem 6.7]{acemoglu_2009}.

If instead
$\mathcal{T}:\left[\underline{w},\infty\right)\rightarrow\left[\underline{w},\infty\right)$,
which is the case for many wage-offer distributions,
then the same steps as above verify that $\mathcal{T}$ is a self-map.
In addition,
it can be verified that $\mathcal{T}$ is a contraction directly.

To do so, it will help to use an equivalent expression for $\mathcal{T}$.
I note that
\begin{align}
  \label{eq:integration-by-parts-T}
  \begin{split}
\int_{\underline{w}}^{x}xdF\left(w\right)+\int_{x}^{\overline{w}}wdF\left(w\right) &= \int_{\underline{w}}^{x}xdF\left(w\right)+\int_{x}^{\overline{w}}xdF\left(w\right) \\
 &\quad -\int_{x}^{\overline{w}}xdF\left(w\right)+\int_{x}^{\overline{w}}wdF\left(w\right)\\
 &= \int_{\underline{w}}^{\overline{w}}xdF\left(w\right)+\int_{x}^{\overline{w}}\left(w-x\right)dF\left(w\right)\\
 &= x\int_{\underline{w}}^{\overline{w}}dF\left(w\right)+\int_{x}^{\overline{w}}\left(w-x\right)dF\left(w\right)\\
 &= x+\int_{x}^{\overline{w}}\left(w-x\right)dF\left(w\right).    
  \end{split}
\end{align}
And integration by parts implies the second term can be written
\begin{align}
  \label{eq:integration-by-parts-T-2}
  \begin{split}
\int_{x}^{\overline{w}}\left(w-x\right)dF\left(w\right) & =\int_{x}^{\overline{w}}\left(w-x\right)F^{\prime}\left(w\right)dw\\
 & =\left[\left(w-x\right)F\left(w\right)\right]_{w=x}^{w=\overline{w}}-\int_{x}^{\overline{w}}F\left(w\right)dw\\
 & =\left(\overline{w}-x\right)F\left(\overline{w}\right)-\int_{x}^{\overline{w}}F\left(w\right)dw.    
  \end{split}
\end{align}
Using the fact that
\begin{align*}
\int_{x}^{\overline{w}}\left[F\left(\overline{w}\right)-F\left(w\right)\right]dw & =\int_{x}^{\overline{w}}F\left(\overline{w}\right)dw-\int_{x}^{\overline{w}}F\left(w\right)dw\\
 & =F\left(\overline{w}\right)\int_{x}^{\overline{w}}dw-\int_{w_{i}}^{\overline{w}}F\left(w\right)dw\\
 & =F\left(\overline{w}\right)\left(\overline{w}-x\right)-\int_{x}^{\overline{w}}F\left(w\right)dw
\end{align*}
the latter expression can be written
\begin{align*}
\int_{x}^{\overline{w}}\left(w-x\right)dF\left(w\right)=\int_{x}^{\overline{w}}\left[F\left(\overline{w}\right)-F\left(w\right)\right]dw.
\end{align*}
Taking the limit yields
\begin{align*}
\lim_{\overline{w}\rightarrow\infty}\int_{x}^{\overline{w}}\left(w-x\right)dF\left(w\right) & =\lim_{\overline{w}\rightarrow\infty}\int_{x}^{\overline{w}}\left[F\left(\overline{w}\right)-F\left(w\right)\right]dw\\
 & =\int_{x}^{\infty}\left[1-F\left(w\right)\right]dw.
\end{align*}
In summary, an equivalent expression for $\mathcal{T}$ is
\begin{align*}
\mathcal{T}\left(x\right)=z\left(1-\beta\right)+\beta\left\{ x+\int_{x}^{\infty}\left[1-F\left(w\right)\right]dw\right\} .
\end{align*}

I verify that $\mathcal{T}$ is a contraction directly.
Take $w_{1},w_{2}\in\left[\underline{w},\infty\right)$
and without loss of generality assume $w_{1}<w_{2}$.
Then 
\begin{align*}
\left|\mathcal{T}\left(w_{2}\right)-\mathcal{T}\left(w_{1}\right)\right| & =\beta\left|w_{2}+\int_{w_{2}}^{\infty}\left[1-F\left(w\right)\right]dw-w_{1}-\int_{w_{1}}^{\infty}\left[1-F\left(w\right)\right]dw\right|\\
                                                                         &=\beta \left|w_{2}-w_{1}+\int_{w_{2}}^{\infty}\left[1-F\left(w\right)\right]dw \right. \\
                                                                           &\quad \left. - \left\{ \int_{w_{1}}^{w_{2}}\left[1-F\left(w\right)\right]dw+\int_{w_{2}}^{\infty}\left[1-F\left(w\right)\right]dw\right\} \right|\\
 &= \beta\left|w_{2}-w_{1}-\int_{w_{1}}^{w_{2}}\left[1-F\left(w\right)\right]dw\right|\\
 &= \beta\left|w_{2}-w_{1}-\int_{w_{1}}^{w_{2}}1dw+\int_{w_{1}}^{w_{2}}F\left(w\right)dw\right|\\
 &= \beta\left|\int_{w_{1}}^{w_{2}}F\left(w\right)dw\right|\\
 & \leq\beta\left|\int_{w_{1}}^{w_{2}}1dw\right|\\
 &= \beta\left|w_{2}-w_{1}\right|,
\end{align*}
establishing that the map is contracting. Because $\left[\underline{w},\infty\right)$
is a closed subset of a complete metric space, it is complete. An
appeal to the contraction mapping theorem again establishes the existence
and uniqueness of $w_{R}\left(0\right)$. 

\ul{\mbox{Step \ref{item:basic:sequence}}}: Given $w_{R}\left(0\right)$,
the sequence of reservation wages can be computed from \eqref{eq:wRn}.

The next part of the proof establishes that the sequence of reservations is increasing in the remaining days of UI benefits.
The proof goes by induction.
I first need to check that $w_{R}\left(1\right)>w_{R}\left(0\right)$.
This is indeed the case because UI compensation benefits are included in $w_{R}\left(1\right)$
but not $w_{R}\left(0\right)$:
\begin{align*}
w_{R}\left(1\right)-w_{R}\left(0\right) & =\left(z+c\right)\left(1-\beta\right)+\beta\left\{ \int_{\underline{w}}^{w_{R}\left(0\right)}w_{R}\left(0\right)dF\left(w\right)+\int_{w_{R}\left(0\right)}^{\overline{w}}wdF\left(w\right)\right\} \\
 & \quad-z\left(1-\beta\right)-\beta\left\{ \int_{\underline{w}}^{w_{R}\left(0\right)}w_{R}\left(0\right)dF\left(w\right)+\int_{w_{R}\left(0\right)}^{\overline{w}}wdF\left(w\right)\right\} \\
 & =c\left(1-\beta\right)>0.
\end{align*}
Next, I need to establish that $w_{R}\left(n\right)>w_{R}\left(n-1\right)$
implies $w_{R}\left(n+1\right)>w_{R}\left(n\right)$. Starting from
the expression for $w_{R}\left(n+1\right)$:
\begin{align*}
w_{R}\left(n+1\right) & =\left(z+c\right)\left(1-\beta\right)+\beta\left\{ \int_{\underline{w}}^{w_{R}\left(n\right)}w_{R}\left(n\right)dF\left(w\right)+\int_{w_{R}\left(n\right)}^{\overline{w}}wdF\left(w\right)\right\} \\
 & =\left(z+c\right)\left(1-\beta\right)+\beta\Upsilon\left[w_{R}\left(n\right)\right]\\
 & >\left(z+c\right)\left(1-\beta\right)+\beta\Upsilon\left[w_{R}\left(n-1\right)\right]\\
 & =\left(z+c\right)\left(1-\beta\right)+\beta\left\{ \int_{\underline{w}}^{w_{R}\left(n-1\right)}w_{R}\left(n-1\right)dF\left(w\right)+\int_{w_{R}\left(n-1\right)}^{\overline{w}}wdF\left(w\right)\right\} \\
 & =w_{R}\left(n\right),
\end{align*}
where the inequality follows from the fact that $\Upsilon$ is a strictly
increasing function (lemma \ref{lemma:upsilon}) and the induction
hypothesis: $w_{R}\left(n\right)>w_{R}\left(n-1\right)$. Thus, the
sequence of reservation wages is increasing in $n$. Thus $w_{R}\left(n\right)>w_{R}\left(n-1\right)$
for all positive integers $n$. 

\ul{\mbox{Step \ref{item:basic:range}}}: Finally,
I verify that the elements of sequence fall between $\underline{w}$ and $\overline{w}$.
Suppose to the contrary that $w_{R}\left(0\right)=\underline{w}$.
From (\ref{eq:map-wR}), $w_{R}\left(0\right)$ satisfies
\begin{align*}
w_{R}\left(0\right) & =z\left(1-\beta\right)+\beta\left\{ \int_{\underline{w}}^{w_{R}\left(0\right)}w_{R}\left(0\right)dF\left(w\right)+\int_{w_{R}\left(0\right)}^{\overline{w}}wdF\left(w\right)\right\} \\
 & =z\left(1-\beta\right)+\beta\int_{\underline{w}}^{\overline{w}}wdF\left(w\right)\\
 & =z\left(1-\beta\right)+\beta\mu_{w}\\
 & >\underline{w},
\end{align*}
which is a contradiction.
Thus $w_{R}\left(0\right)>\underline{w}$.
The inequality also holds for the case where the support of wages is $\left[\underline{w},\infty\right)$ by the same argument. 

The proof that $w_{R}\left(N\right)<\overline{w}$ is completed by induction.
The first part of the induction argument checks that
$w_{R}\left(0\right)$ and $w_{R}\left(1\right)$ are less than $\overline{w}$ on $\left[\underline{w},\overline{w}\right]$.
Suppose not; that is, suppose $w_{R}\left(0\right)=\overline{w}$.
Then the expression for $w_{R}\left(0\right)$ implies $\overline{w}=w_{R}\left(0\right)=z\left(1-\beta\right)+\beta\overline{w}$ or $\overline{w}=z$,
which is a contradiction.
Thus, $w_{R}\left(0\right)<\overline{w}$.
Similarly,
suppose $w_{R}\left(1\right)=\overline{w}$.
Then the expression for $w_{R}\left(1\right)$ given in \eqref{eq:wRn} implies $w_{R}\left(1\right)=\left(z+c\right)\left(1-\beta\right)+\beta\overline{w}$,
which is a contradiction.

Next I want to show $w_{R}\left(N-1\right)<\overline{w}$ implies
$w_{R}\left(N\right)<\overline{w}$.
Note:
\begin{align*}
w_{R}\left(N\right) & =\left(z+c\right)\left(1-\beta\right)+\beta\left\{ \int_{\underline{w}}^{w_{R}\left(N-1\right)}w_{R}\left(N-1\right)dF\left(w\right)+\int_{w_{R}\left(N-1\right)}^{\overline{w}}wdF\left(w\right)\right\} \\
 & =\left(z+c\right)\left(1-\beta\right)+\beta\Upsilon\left(w_{R}\left(N-1\right)\right)\\
 & <\left(z+c\right)\left(1-\beta\right)+\beta\Upsilon\left(\overline{w}\right)\\
 & =\left(z+c\right)\left(1-\beta\right)+\beta\overline{w}\\
 & <\overline{w},
\end{align*}
where the first inequality uses the fact that $\Upsilon$ is increasing---lemma
\ref{lemma:upsilon}---and the second inequality comes from the fact that a weighted average of $z+c$ and the maximum wage is less than the maximum wage:
\begin{align*}
\left(z+c\right)\left(1-\beta\right)+\beta\overline{w} & <\overline{w}\\
\iff\left(z+c\right)\left(1-\beta\right) & <\overline{w}\left(1-\beta\right)\\
\therefore z+c & <\overline{w}.
\end{align*}
Thus $w\left(n\right)<\overline{w}$ for all non-negative integers.
If the the support of wages is replaced with $\left[\overline{w},\infty\right)$,
then each wage will be finite as long as the truncated wage offer
distribution has finite expected value,
which will be satisfied for many distributions.

\section{Allowing for Benefits to be Extended by Discretion}
\label{sec:app-extens}

In this section,
I provide the algebraic details behind expressions found in the main text in section \ref{sec:job-search-benefits}.

\subsection{Description of the Economic Environment}

The economic environment considered in this section
is the same as the economic environment is section \ref{sec:app-basic-job-search},
except that, each period,
there is the chance that benefits are extended.
In that sense,
the McCall model with expiring benefits considered by \citet{burdett_1979} is a
particular case of the environment considered here.

Like in the McCall model with finite benefits,
the value of a job is $W\left(w\right)=w/\left(1-\beta\right)$.
The value of unemployment
when there is no chance of extension is denoted by $U$.
The value of unemployment when there is a chance benefits are extended is denoted by $U^{\delta}$.
Both $U$ and $U^{\delta}$ depend on the remaining periods of UI compensation.
Likewise,
when there is a perceived chance that benefits will be extended,
the reservation wage differs from the case when there is no chance.
The reservation wage when there is a  chance of an extension is denoted by $w_{R}^{\delta}$.
The reservation wage when there is no chance of an extension is denoted by $w_{R}$.
Both $w_{R}^{\delta}$ and $w_{R}$ depend on the remaining periods of UI compensation.

As in section \ref{sec:app-basic-job-search},
the model can be expressed in terms of reservation wages.
The value $U$ corresponds to the basic model of job search in section \ref{sec:app-basic-job-search}.
When there is a perceived chance that benefits will be extended,
the value of unemployment is, for $n\in\left\{ 1,2,\dots\right\}$,
\begin{align*}
U^{\delta}\left(n\right) &= z+c+\beta\left\{ \delta \E\left[\max\left\{ U\left(n-1+\Delta\right),W\left(w\right)\right\} \right]+\left(1-\delta\right) \E\left[\max\left\{ U^{\delta}\left(n-1\right),W\left(w\right)\right\} \right]\right\} \\
 & =z+c+\delta\beta\left\{ \int_{\underline{w}}^{w_{R}\left(n-1+\Delta\right)}w_{R}\left(n-1+\Delta\right)dF\left(w\right)+\int_{w_{R}\left(n-1+\Delta\right)}^{\overline{w}}wdF\left(w\right)\right\} \\
 & \quad+\left(1-\delta\right)\beta\left\{ \int_{\underline{w}}^{w_{R}^{\delta}\left(n-1\right)}w_{R}^{\delta}\left(n-1\right)dF\left(w\right)+\int_{w_{R}^{\delta}\left(n-1\right)}^{\overline{w}}wdF\left(w\right)\right\} .
\end{align*}
The first component of the expression, $z+b$,
corresponds to the value of nonwork plus the unemployment benefit.
The following period, discounted by $\beta$,
corresponds to choosing to accept or reject a job when benefits have or have not been extended,
which occurs with probability $\delta$ and $1-\delta$.
When $n=0$,
UI compensation is unavailable:
\begin{align*}
  U^{\delta} \left( 0 \right) &= z + \beta \left\{ \E \left[ \max \left\{ U \left( \Delta \right) \right\}, W \left( w \right) \right] \right\}
                                + \left( 1-\delta \right) \E \left[ \max \left\{ U^{\delta} \left( 0 \right), W \left( w \right) \right\}  \right] \\
                              &= z + \delta \beta \left\{ \int\limits_{\underline{w}}^{w_{R}\left( \Delta \right)}w_{R} \left( \Delta \right)dF \left( w \right)
                                + \int\limits_{w_{R}\left( \Delta \right)}^{\overline{w}} w dF \left( w \right) \right\} \\
                              &\quad + \left( 1-\delta \right) \beta \left\{ \int\limits_{\underline{w}}^{w_{R}^{\delta} \left( 0 \right)} w_{R}^{\delta}\left( 0 \right) dF \left( w \right)
                                + \int\limits_{w_{R}^{\delta}\left( 0 \right)}^{\overline{w}} w dF \left( w \right) \right\}.
\end{align*}

Expanding the expression for $U^{\delta}\left(n\right)$ implies,
for $n \in \left\{ 1, \dots, N \right\}$,
\begin{align*}
\frac{w_{R}^{\delta}\left(n\right)}{1-\beta} & =z+c+\beta\left\{ \delta \E\left[\max\left\{ \frac{w_{R}\left(n-1+\Delta\right)}{1-\beta},\frac{w}{1-\beta}\right\} \right]+\left(1-\delta\right)\E\left[\max\left\{ \frac{w_{R}^{\delta}\left(n-1\right)}{1-\beta},\frac{w}{1-\beta}\right\} \right]\right\}.
\end{align*}
Therefore
\begin{align*}
w_{R}^{\delta}\left(n\right) = \left(z+c\right)\left(1-\beta\right)+\beta\left\{ \delta \E\left[\max\left\{ w_{R}\left(n-1+\Delta\right),w\right\} \right]+\left(1-\delta\right)\E\left[\max\left\{ w_{R}^{\delta}\left(n-1\right),w\right\} \right]\right\}
\end{align*}
or 
\begin{align}
  \label{eq:wR-delta}
  \begin{split}
w_{R}^{\delta} \left( n \right) &= \left(z+c\right)\left(1-\beta\right)+\beta\delta\int_{\underline{w}}^{w_{R}\left(n-1+\Delta\right)}w_{R}\left(n-1+\Delta\right)dF\left(w\right)+\beta\delta\int_{w_{R}\left(n-1+\Delta\right)}^{\overline{w}}wdF\left(w\right) \\
 &\quad +\beta\left(1-\delta\right)\int_{\underline{w}}^{w_{R}^{\delta}\left(n-1\right)}w_{R}^{\delta}\left(n-1\right)dF\left(w\right)+\beta\left(1-\delta\right)\int_{w_{R}^{\delta}\left(n-1\right)}^{\overline{w}}wdF\left(w\right).    
  \end{split}
\end{align}
When the worker has one remaining period of benefits,
their reservation wage satisfies
\begin{align*}
w_{R}^{\delta}\left(1\right) & =\left(z+c\right)\left(1-\beta\right)+\beta\delta\int_{\underline{w}}^{w_{R}\left(\Delta\right)}w_{R}\left(\Delta\right)dF\left(w\right)+\beta\delta\int_{w_{R}\left(\Delta\right)}^{\overline{w}}wdF\left(w\right)\\
 & \quad+\beta\left(1-\delta\right)\int_{\underline{w}}^{w_{R}^{\delta}\left(0\right)}w_{R}^{\delta}\left(0\right)dF\left(w\right)+\beta\left(1-\delta\right)\int_{w_{R}^{\delta}\left(0\right)}^{\overline{w}}wdF\left(w\right);
\end{align*}
and when there are no remaining periods of benefits,
their reservation wage satisfies
\begin{align*}
w_{R}^{\delta}\left(0\right)=z\left(1-\beta\right)+\beta\left\{ \delta \E\left[\max\left\{ w_{R}\left(\Delta\right),w\right\} \right]+\left(1-\delta\right)\E\left[\max\left\{ w_{R}^{\delta}\left(0\right),w\right\} \right]\right\} 
\end{align*}
or equivalently
\begin{align}
  \label{eq:wR-delta-0}
  \begin{split}
w_{R}^{\delta}\left(0\right) & =z\left(1-\beta\right)+\beta\delta\int_{\underline{w}}^{w_{R}\left(\Delta\right)}w_{R}\left(\Delta\right)dF\left(w\right)+\beta\delta\int_{w_{R}\left(\Delta\right)}^{\overline{w}}wdF\left(w\right) \\
 & \quad+\beta\left(1-\delta\right)\int_{\underline{w}}^{w_{R}^{\delta}\left(0\right)}w_{R}^{\delta}\left(0\right)dF\left(w\right)+\beta\left(1-\delta\right)\int_{w_{R}^{\delta}\left(0\right)}^{\overline{w}}wdF\left(w\right).    
  \end{split}
\end{align}
Optimal decision rules are characterized by a sequence of reservation wages that are increasing in $n$. 

\subsection{Interpreting Search in terms of Reservation Wages}

For an interpretation of the search problem in terms of reservation wages,
I follow \citet{ljungqvist_sargent_2018} and write $w_{R}^{\delta} \left( 0 \right)$ as
\begin{align*}
  w_{R}^{\delta}\left(0\right)\int_{\underline{w}}^{\overline{w}}dF\left(w\right) &= z\left(1-\beta\right)+\beta\delta\int_{\underline{w}}^{w_{R}\left(\Delta\right)}w_{R}\left(\Delta\right)dF\left(w\right)+\beta\delta\int_{w_{R}\left(\Delta\right)}^{\overline{w}}wdF\left(w\right) \\
 &\quad +\beta\left(1-\delta\right)\int_{\underline{w}}^{w_{R}^{\delta}\left(0\right)}w_{R}^{\delta}\left(0\right)dF\left(w\right)+\beta\left(1-\delta\right)\int_{w_{R}^{\delta}\left(0\right)}^{\overline{w}}wdF\left(w\right).
\end{align*}
Therefore
\begin{multline*}
  \beta w_{R}^{\delta}\left(0\right)\int_{\underline{w}}^{\overline{w}}dF\left(w\right)+\left(1-\beta\right)w_{R}^{\delta}\left(0\right)\int_{\underline{w}}^{\overline{w}}dF\left(w\right)-z\left(1-\beta\right) \\
  =\beta\delta\left[\int_{\underline{w}}^{w_{R}\left(\Delta\right)}w_{R}\left(\Delta\right)dF\left(w\right)+\int_{w_{R}\left(\Delta\right)}^{\overline{w}}wdF\left(w\right)\right]\\
  +\beta\left(1-\delta\right)\left[\int_{\underline{w}}^{w_{R}^{\delta}\left(0\right)}w_{R}^{\delta}\left(0\right)dF\left(w\right)+\int_{w_{R}^{\delta}\left(0\right)}^{\overline{w}}wdF\left(w\right)\right] .
\end{multline*}
Subtracting $\beta w_{R}^{\delta} \left( 0 \right)$ from both sides and collecting terms yields
\begin{multline*}
\left(1-\beta\right)\left[w_{R}^{\delta}\left(0\right)-z\right] \\
= \beta\delta\left\{ \int_{\underline{w}}^{w_{R}\left(\Delta\right)}\left[w_{R}\left(\Delta\right)-w_{R}^{\delta}\left(0\right)\right]dF\left(w\right)+\int_{w_{R}\left(\Delta\right)}^{\overline{w}}\left[w-w_{R}^{\delta}\left(0\right)\right]dF\left(w\right)\right\} \\
 +\beta\left(1-\delta\right)\left\{ \int_{w_{R}^{\delta}\left(0\right)}^{\overline{w}}\left[w-w_{R}^{\delta}\left(0\right)\right]dF\left(w\right)\right\}.
\end{multline*}
This expression implies
\begin{align*}
w_{R}^{\delta}\left(0\right)-z & =\frac{\beta}{1-\beta}\delta\left\{ \int_{\underline{w}}^{w_{R}\left(\Delta\right)}\left[w_{R}\left(\Delta\right)-w_{R}^{\delta}\left(0\right)\right]dF\left(w\right)+\int_{w_{R}\left(\Delta\right)}^{\overline{w}}\left[w-w_{R}^{\delta}\left(0\right)\right]dF\left(w\right)\right\} \\
 & \quad+\frac{\beta}{1-\beta}\left(1-\delta\right)\left\{ \int_{w_{R}^{\delta}\left(0\right)}^{\overline{w}}\left[w-w_{R}^{\delta}\left(0\right)\right]dF\left(w\right)\right\},
\end{align*}
which is equation \eqref{eq:search-reservation-interpretation} in the main text.
This expression generalizes equation (6.3.3) in \citet[163]{ljungqvist_sargent_2018}.
A similar interpretation is available for $w_{R}^{\delta} \left( n \right)$.

\section{Proof of Proposition \ref{prop:1} in the Main Text}
\label{sec:proof-pos-ext}

Proposition \ref{prop:1} in the main text is established by a series of lemmas.
\begin{description}
\item[Lemma \ref{lemma:wR0-delta-exists}: Existence and uniqueness of $w_{0}^{\delta} \left( 0 \right)$.] Lemma \ref{lemma:wR0-delta-exists} establishes
  the existence and uniqueness of $w_{R}^{\delta} \left( 0 \right)$ by appealing to the contraction mapping theorem.  
\item[Lemma \ref{lemma:optimal-search-delta}: Reservation wages increase in the remaining periods of UI compensation benefits] Lemma \ref{lemma:optimal-search-delta}
  establishes monotonicity: $w_{R}^{\delta} \left( n+1 \right) > w_{R}^{\delta} \left( n \right)$.
\item[Lemma \ref{lemma-just-extended}: Auxiliary result.] Lemma \ref{lemma-just-extended} establishes an auxiliary result used in later proofs.  
\item[Lemma \ref{lemma:bounds}: Reservation wages are interior] Lemma \ref{lemma:bounds} establishes that
  reservation wages fall within the interior of the support of wage offers:
  $\overline{w} >  w_{R}^{\delta} \left( N \right) > \cdots > w_{R}^{\delta} \left( 0 \right) > \underline{w}$.
\end{description}
Lemmas \ref{lemma:wR0-delta-exists}, \ref{lemma:optimal-search-delta}, and \ref{lemma:bounds} are established in order.
Together, the lemmas directly establish \ref{prop:1}.

\subsection{Proof of Lemma \ref{lemma:wR0-delta-exists}}

\begin{lemma}
\label{lemma:wR0-delta-exists}
The reservation wage $w_{R}^{\delta}\left(0\right)$ exists and is unique. 
\end{lemma}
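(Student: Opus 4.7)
The plan is to adapt the contraction-mapping argument of Appendix \ref{sec:proof-basic} to the operator $\mathcal{T}^{\delta}$ defined in the main text. A key simplification is that the bracketed term multiplying $\beta\delta$ depends only on the quantity $w_{R}(\Delta)$, which is already uniquely pinned down by proposition \ref{prop:text:basic-incr-reservation-wages} because the post-extension problem coincides with the basic McCall model of section \ref{sec:job-search-basic}. Thus that term contributes a fixed constant to $\mathcal{T}^{\delta}$, and the variable $x$ enters only through the $\beta(1-\delta)$ term, which has the same structure as the map $\mathcal{T}$ in the basic proof but carries weight $\beta(1-\delta)$ rather than $\beta$.

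First I would verify that $\mathcal{T}^{\delta}$ is a self-map on $[\underline{w},\overline{w}]$. Evaluating at $\underline{w}$ and using the assumption $\underline{w}<(1-\beta)z+\beta\mu_{w}$ handles $\mathcal{T}^{\delta}(\underline{w})>\underline{w}$; the additional $\delta$-weighted constant only strengthens the bound because $\Upsilon(w_{R}(\Delta))\geq\mu_{w}=\Upsilon(\underline{w})$ by lemma \ref{lemma:upsilon}. For the upper endpoint, lemma \ref{lemma:upsilon} gives $\Upsilon(w_{R}(\Delta))<\overline{w}$ and $\Upsilon(\overline{w})<\overline{w}$, so a weighted-average comparison of the form used in step 4 of Appendix \ref{sec:proof-basic} yields $\mathcal{T}^{\delta}(\overline{w})<\overline{w}$.

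Next I would compute the derivative. By Leibniz's rule, exactly as in the basic proof,
\begin{equation*}
(\mathcal{T}^{\delta})'(x)=\beta(1-\delta)F(x),
\end{equation*}
so $0<(\mathcal{T}^{\delta})'(x)<\beta(1-\delta)<1$ on $(\underline{w},\overline{w})$. Combined with completeness of $[\underline{w},\overline{w}]$ under the Euclidean metric, this Lipschitz estimate permits an appeal to the contraction mapping theorem to extract a unique fixed point $w_{R}^{\delta}(0)\in[\underline{w},\overline{w}]$.

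Finally, for the unbounded-support case $[\underline{w},\infty)$, I would rewrite the integrals using the integration-by-parts identities \eqref{eq:integration-by-parts-T} and \eqref{eq:integration-by-parts-T-2} and then verify the Lipschitz bound $|\mathcal{T}^{\delta}(w_{2})-\mathcal{T}^{\delta}(w_{1})|\leq\beta(1-\delta)|w_{2}-w_{1}|$ directly, paralleling step 2 of Appendix \ref{sec:proof-basic}. I do not expect a genuine obstacle here, because the $\delta$-weighted constant does not interact with $x$ and so cannot worsen the contraction constant; the most delicate step is simply checking the self-map property at $\underline{w}$, which, as noted above, reduces to the same offer-distribution assumption already used in the basic proof.
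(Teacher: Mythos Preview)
Your proposal is correct and follows essentially the same approach as the paper: both show $\mathcal{T}^{\delta}$ is a self-map on $[\underline{w},\overline{w}]$ by checking the endpoints via the monotonicity of $\Upsilon$ and the standing assumption $\underline{w}<(1-\beta)z+\beta\mu_{w}$, compute $(\mathcal{T}^{\delta})'(x)=\beta(1-\delta)F(x)\in(0,1)$, and invoke the contraction mapping theorem; the unbounded-support extension is likewise handled as in Appendix~\ref{sec:proof-basic}. One tiny quibble: at the upper endpoint $\Upsilon(\overline{w})=\overline{w}$ with equality rather than strict inequality, but the conclusion $\mathcal{T}^{\delta}(\overline{w})<\overline{w}$ still follows from $z<\overline{w}$, exactly as in the paper.
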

\begin{proof}
Define the function $\mathcal{T}^{\delta}$ on $\left[\underline{w},\overline{w}\right]$ as
\begin{align}
  \label{eq:map-wR-delta}
  \begin{split}
\mathcal{T}^{\delta}\left(x\right) & \equiv z\left(1-\beta\right)+\beta\delta\left[\int_{\underline{w}}^{w_{R}\left(\Delta\right)}w_{R}\left(\Delta\right)dF\left(w\right)+\int_{w_{R}\left(\Delta\right)}^{\overline{w}}wdF\left(w\right)\right] \\
 &\quad + \beta\left(1-\delta\right)\left[\int_{\underline{w}}^{x}xdF\left(w\right)+\int_{x}^{\overline{w}}wdF\left(w\right)\right] \\
 &= z\left(1-\beta\right)+\beta\delta\Upsilon\left(w_{R}\left(\Delta\right)\right)+\beta\left(1-\delta\right)\left[\int_{\underline{w}}^{x}xdF\left(w\right)+\int_{x}^{\overline{w}}wdF\left(w\right)\right]    
  \end{split}
\end{align}
I want to use the contraction mapping theorem to establish the existence and uniqueness of $w_{R}^{\delta} \left( 0 \right)$.
I therefore need to show that $\mathcal{T}^{\delta}$ is a self map and it contracts.

I first establish that $\mathcal{T}^{\delta}$ is a self-map.
First,
\begin{align*}
\mathcal{T}^{\delta}\left(\underline{w}\right) & =z\left(1-\beta\right)+\beta\left[\delta\Upsilon\left(w_{R}\left(\Delta\right)\right)+\left(1-\delta\right)\mu_{w}\right]\\
 & >z\left(1-\beta\right)+\beta\left[\delta\Upsilon\left(\underline{w}\right)+\left(1-\delta\right)\mu_{w}\right]\\
 & =z\left(1-\beta\right)+\beta\left[\delta\mu_{w}+\left(1-\delta\right)\mu_{w}\right]\\
 & =z\left(1-\beta\right)+\beta\mu_{w}\\
 & >\underline{w},
\end{align*}
where the first inequality uses the fact that $\Upsilon$ is increasing (lemma \ref{lemma:upsilon}) and
the second inequality follows by assumption.
Second,
\begin{align*}
\mathcal{T}^{\delta}\left(\overline{w}\right) & =z\left(1-\beta\right)+\beta\left[\delta\Upsilon\left(w_{R}\left(\Delta\right)\right)+\left(1-\delta\right)\overline{w}\right]\\
 & <z\left(1-\beta\right)+\beta\left[\delta\Upsilon\left(\overline{w}\right)+\left(1-\delta\right)\overline{w}\right]\\
 & =z\left(1-\beta\right)+\beta\left[\delta\overline{w}+\left(1-\delta\right)\overline{w}\right]\\
 & =z\left(1-\beta\right)+\beta\overline{w}\\
 & <\overline{w}
\end{align*}
as $z\left(1-\beta\right)+\beta\overline{w}<\overline{w}$
if and only if $z<\overline{w}$, which is true by assumption.
Third, lemma \ref{lemma:upsilon} establishes that $\mathcal{T}^{\delta}$ is increasing.
Thus, $\underline{w} < \mathcal{T}^{\delta}\left(x\right) < \overline{w}$ for all $x\in\left[\underline{w},\overline{w}\right]$,
establishing that $\mathcal{T}^{\delta}$ is a self-map. 

Investigating the derivative further yields
\begin{align*}
\left( \mathcal{T}^{\delta} \right)^{\prime} \left(x\right) &= \beta\left(1-\delta\right)xf\left(x\right)+\beta\left(1-\delta\right)\int_{0}^{x}1dF\left(w\right)-\beta\left(1-\delta\right)xf\left(x\right)\\
 & =\beta\left(1-\delta\right)\int_{0}^{x}1dF\left(w\right)\\
 & =\beta\left(1-\delta\right)F\left(w\right)\bigg\vert_{w=0}^{w=x}\\
 & =\beta\left(1-\delta\right)F\left(x\right).
\end{align*}
Therefore $0< \left( \mathcal{T}^{\delta} \right)^{\prime} \left( x \right)<1$ for all $x\in\left(\underline{w},\overline{w}\right)$
and thus $\mathcal{T}^{\delta}$ contracts.
The contraction mapping
theorem implies the existence and uniqueness of the fixed point $w_{R}^{\delta}\left(0\right)$.
A similar proof to the one given above could extend the support of wages to $\left[\underline{w},\infty\right)$.
\end{proof}

The next step establishes that
the optimal policy in the environment with a perceived extension to UI compensation benefits involves a sequence of reservation wages.
Reservation wages increase in the remaining periods of UI benefits.
The following proposition establishes this result. 

\subsection{Proof of Lemma \ref{lemma:optimal-search-delta}}

\begin{lemma}[Optimal policy]
  \label{lemma:optimal-search-delta}
  Assume the search environment in proposition \ref{prop:1} in the main text holds.
  In the environment,
  there is a chance that UI compensation benefits can be extended.
  A worker's optimal policy is a sequence of reservation wages
  that are increasing in the remaining periods of UI compensation.
\end{lemma}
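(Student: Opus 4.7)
The plan is to prove the lemma by induction on $n \in \{0,1,\dots,N\}$, leveraging the recursion in equation \eqref{eq:wR-delta}. Existence of a reservation wage $w_R^\delta(n)$ at each $n$ follows exactly as in step 1 of the basic case (Appendix \ref{sec:proof-basic}): $W(w) = w/(1-\beta)$ is strictly increasing on $[\underline{w},\overline{w}]$ while $U^\delta(n)$ is constant and lies in the range of $W$, so a unique indifference point $w_R^\delta(n)$ exists. Lemma \ref{lemma:wR0-delta-exists} anchors the sequence at $w_R^\delta(0)$, and the recursion \eqref{eq:wR-delta} then delivers $w_R^\delta(1),\dots,w_R^\delta(N)$ in turn, with basic-case reservation wages $w_R(n-1+\Delta)$ supplied by Proposition \ref{prop:text:basic-incr-reservation-wages}.

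Rewritten with the auxiliary function $\Upsilon$ from Lemma \ref{lemma:upsilon}, the recursion becomes
\begin{equation*}
w_R^\delta(n) = (z+c)(1-\beta) + \beta\delta\,\Upsilon\bigl(w_R(n-1+\Delta)\bigr) + \beta(1-\delta)\,\Upsilon\bigl(w_R^\delta(n-1)\bigr)
\end{equation*}
for $n \geq 1$, while \eqref{eq:wR-delta-0} at $n = 0$ differs only by replacing $z+c$ with $z$ in the flow term. The base case is immediate: subtracting these two expressions gives $w_R^\delta(1) - w_R^\delta(0) = c(1-\beta) > 0$, because the $\Upsilon$ arguments on the right-hand side coincide, so all that is left is the flow UI benefit.

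For the inductive step, suppose $w_R^\delta(n) > w_R^\delta(n-1)$ for some $n \geq 1$. Using the display above twice,
\begin{equation*}
w_R^\delta(n+1) - w_R^\delta(n) = \beta\delta\bigl[\Upsilon(w_R(n+\Delta)) - \Upsilon(w_R(n-1+\Delta))\bigr] + \beta(1-\delta)\bigl[\Upsilon(w_R^\delta(n)) - \Upsilon(w_R^\delta(n-1))\bigr].
\end{equation*}
The first bracket is positive because the basic-case sequence is strictly increasing (Proposition \ref{prop:text:basic-incr-reservation-wages}) and $\Upsilon$ is strictly increasing (Lemma \ref{lemma:upsilon}); the second bracket is positive by the induction hypothesis and the same monotonicity of $\Upsilon$. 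Hence $w_R^\delta(n+1) > w_R^\delta(n)$, closing the induction.

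The only real obstacle is bookkeeping: the recursion for $w_R^\delta$ has two separate monotonicity channels, one through its own previous value and one through the basic-case reservation wage shifted by $\Delta$. Neither alone drives the conclusion, but once both are funneled through $\Upsilon$, each step reduces to an application of Lemma \ref{lemma:upsilon} together with the already-established monotonicity of $w_R(\cdot)$. The range bounds $\overline{w} > w_R^\delta(N)$ and $w_R^\delta(0) > \underline{w}$ will then be handled separately in Lemma \ref{lemma:bounds} by the same contradiction arguments used in step 4 of Appendix \ref{sec:proof-basic}.
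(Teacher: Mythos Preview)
Your proof is correct and follows essentially the same approach as the paper: anchor at $w_R^\delta(0)$ via Lemma~\ref{lemma:wR0-delta-exists}, verify the base case $w_R^\delta(1)-w_R^\delta(0)=c(1-\beta)>0$ by direct subtraction, and close the induction by writing $w_R^\delta(n+1)-w_R^\delta(n)$ as a sum of two $\Upsilon$-differences made positive by Lemma~\ref{lemma:upsilon}, Proposition~\ref{prop:text:basic-incr-reservation-wages}, and the induction hypothesis. Your deferral of the interior bounds to Lemma~\ref{lemma:bounds} also matches the paper's organization.
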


\begin{proof}
  First, lemma \ref{lemma:wR0-delta-exists} establishes that
  the reservation wage $w_{R}^{\delta}\left(0\right)$ exists and is unique.
  Next I want to show that  
\begin{align*}
w_{R}^{\delta} \left( N \right) > \cdots > w_{R}^{\delta}\left(n+1\right)>w_{R}^{\delta}\left(n\right)>\cdots>w_{R}^{\delta}\left(1\right)>w_{R}^{\delta}\left(0\right).
\end{align*}
This is accomplished by induction.

It is true that $w_{R}^{\delta}\left(1\right)>w_{R}^{\delta}\left(0\right)$:
\begin{align*}
w_{R}^{\delta}\left(1\right)-w_{R}^{\delta}\left(0\right) & =\left(z+c\right)\left(1-\beta\right)+\beta\delta\left[\int_{\underline{w}}^{w_{R}\left(\Delta\right)}w_{R}\left(\Delta\right)dF\left(w\right)+\int_{w_{R}\left(\Delta\right)}^{\overline{w}}wdF\left(w\right)\right]\\
 & \quad+\beta\left(1-\delta\right)\left[\int_{\underline{w}}^{w_{R}^{\delta}\left(0\right)}w_{R}^{\delta}\left(0\right)dF\left(w\right)+\int_{w_{R}^{\delta}\left(0\right)}^{\overline{w}}wdF\left(w\right)\right]\\
 & \quad-z\left(1-\beta\right)-\beta\delta\left[\int_{\underline{w}}^{w_{R}\left(\Delta\right)}w_{R}\left(\Delta\right)dF\left(w\right)+\int_{w_{R}\left(\Delta\right)}^{\overline{w}}wdF\left(w\right)\right]\\
 & \quad-\beta\left(1-\delta\right)\left[\int_{\underline{w}}^{w_{R}^{\delta}\left(0\right)}w_{R}^{\delta}\left(0\right)dF\left(w\right)+\int_{w_{R}^{\delta}\left(0\right)}^{\overline{w}}wdF\left(w\right)\right]\\
 & =c\left(1-\beta\right)>0.
\end{align*}
Then, using $w_{R}^{\delta}\left(n\right)>w_{R}^{\delta}\left(n-1\right)$,
I want to show $w_{R}^{\delta}\left(n+1\right)>w_{R}^{\delta}\left(n\right)$.
Using the expression for $w_{R}^{\delta}$ in \eqref{eq:wR-delta},
\begin{align*}
w_{R}^{\delta}\left(n+1\right) & =\left(z+c\right)\left(1-\beta\right)+\beta\delta\left[\int_{\underline{w}}^{w_{R}\left(n+\Delta\right)}w_{R}\left(n+\Delta\right)dF\left(w\right)+\int_{w_{R}\left(n+\Delta\right)}^{\overline{w}}wdF\left(w\right)\right]\\
 & \quad+\beta\left(1-\delta\right)\left[\int_{\underline{w}}^{w_{R}^{\delta}\left(n\right)}w_{R}^{\delta}\left(n\right)dF\left(w\right)+\int_{w_{R}^{\delta}\left(n\right)}^{\overline{w}}wdF\left(w\right)\right]\\
 & =\left(z+c\right)\left(1-\beta\right)+\beta\delta\Upsilon\left[w_{R}\left(n+\Delta\right)\right]+\beta\left(1-\delta\right)\Upsilon\left[w_{R}^{\delta}\left(n\right)\right]\\
 & >\left(z+c\right)\left(1-\beta\right)+\beta\delta\Upsilon\left[w_{R}\left(n-1+\Delta\right)\right]+\beta\left(1-\delta\right)\Upsilon\left[w_{R}^{\delta}\left(n-1\right)\right]\\
 & =w_{R}^{\delta}\left(n\right),
\end{align*}
where the inequality uses the fact that $\Upsilon$ is increasing
(lemma \ref{lemma:upsilon}), the induction hypothesis, and proposition \ref{prop:text:basic-incr-reservation-wages} in the main text,
which implies $w_{R}\left(n+\Delta\right)>w_{R}\left(n-1+\Delta\right)$.
\end{proof}

\subsection{Proof of Lemma \ref{lemma-just-extended}}

\begin{lemma}
  \label{lemma-just-extended}
  It is true that $w_{R}\left(n+\Delta\right)>w_{R}^{\delta}\left(n\right)$ for $n\in\left\{ 0,\dots,N\right\}$.
\end{lemma}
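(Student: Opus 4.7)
The plan is to establish the claim by induction on $n$, with $n = 0$ as the base case and the main obstacle. The inductive step is a short consequence of the recursions \eqref{eq:wRn} and \eqref{eq:wR-delta}, together with the monotonicity of $\Upsilon$ from lemma \ref{lemma:upsilon}. Specifically, assuming $w_{R}(n-1+\Delta) > w_{R}^{\delta}(n-1)$ for some $n \geq 1$, subtracting \eqref{eq:wR-delta} from \eqref{eq:wRn} at index $n+\Delta$ causes the $\beta\delta\,\Upsilon(w_{R}(n-1+\Delta))$ contributions to combine cleanly, leaving
\[
w_{R}(n+\Delta) - w_{R}^{\delta}(n) = \beta(1-\delta)\bigl[\Upsilon(w_{R}(n-1+\Delta)) - \Upsilon(w_{R}^{\delta}(n-1))\bigr],
\]
which is strictly positive by the induction hypothesis and the monotonicity of $\Upsilon$.

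For the base case I would leverage that $w_{R}^{\delta}(0)$ is the unique fixed point of the contraction $\mathcal{T}^{\delta}$ defined in \eqref{eq:map-wR-delta}, and that $\mathcal{T}^{\delta}$ is strictly increasing since $(\mathcal{T}^{\delta})'(x) = \beta(1-\delta)F(x) \in (0,1)$. It suffices to show $\mathcal{T}^{\delta}(w_{R}(\Delta)) < w_{R}(\Delta)$: iterating $\mathcal{T}^{\delta}$ from $w_{R}(\Delta)$ then produces a decreasing sequence converging to the fixed point, forcing $w_{R}^{\delta}(0) \leq \mathcal{T}^{\delta}(w_{R}(\Delta)) < w_{R}(\Delta)$. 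Evaluating at $w_{R}(\Delta)$ makes the $\delta$ and $1-\delta$ terms in $\mathcal{T}^{\delta}$ combine into $\beta\Upsilon(w_{R}(\Delta))$, and then using \eqref{eq:wRn} to expand $w_{R}(\Delta)$ gives
\[
w_{R}(\Delta) - \mathcal{T}^{\delta}(w_{R}(\Delta)) = c(1-\beta) - \beta\bigl[\Upsilon(w_{R}(\Delta)) - \Upsilon(w_{R}(\Delta-1))\bigr].
\]

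The hard part is bounding the bracketed increment strictly below $c(1-\beta)$. The approach is to exploit that $\Upsilon$ is $1$-Lipschitz, since $\Upsilon'(x) = F(x) \leq 1$: combined with the recursion \eqref{eq:wRn}, this gives $w_{R}(k+1) - w_{R}(k) = \beta[\Upsilon(w_{R}(k)) - \Upsilon(w_{R}(k-1))] \leq \beta[w_{R}(k) - w_{R}(k-1)]$ for each $k \geq 1$. Iterating down to the easily computed anchor $w_{R}(1) - w_{R}(0) = c(1-\beta)$ yields $w_{R}(k+1) - w_{R}(k) \leq \beta^{k} c(1-\beta)$; applied at $k = \Delta$ this delivers $\beta[\Upsilon(w_{R}(\Delta)) - \Upsilon(w_{R}(\Delta-1))] \leq \beta^{\Delta} c(1-\beta) < c(1-\beta)$, so $\mathcal{T}^{\delta}(w_{R}(\Delta)) < w_{R}(\Delta)$ and the base case follows. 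The inductive step then propagates the inequality up to $n = N$.
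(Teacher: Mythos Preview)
Your proof is correct and takes a genuinely different route from the paper's. Both argue by induction on $n$, but the mechanics differ. For the inductive step, you compute $w_{R}(n+\Delta)-w_{R}^{\delta}(n)$ directly and note that the $\beta\delta\,\Upsilon(w_{R}(n-1+\Delta))$ terms cancel, leaving $\beta(1-\delta)\bigl[\Upsilon(w_{R}(n-1+\Delta))-\Upsilon(w_{R}^{\delta}(n-1))\bigr]>0$; the paper instead argues by contradiction, supposing $w_{R}(n+1+\Delta)\le w_{R}^{\delta}(n+1)$ and deriving a strict inequality where the recursion~\eqref{eq:wRn} demands equality. Your direct computation is cleaner. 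For the base case, the paper also argues by contradiction, introducing an auxiliary strictly decreasing function $\Xi$ with $\Xi(w_{R}(0))=0$ and showing that $w_{R}(\Delta)\le w_{R}^{\delta}(0)$ would force $\Xi(w_{R}^{\delta}(0))>0$, hence $w_{R}^{\delta}(0)<w_{R}(0)<w_{R}(\Delta)$. Your approach---showing $\mathcal{T}^{\delta}(w_{R}(\Delta))<w_{R}(\Delta)$ and then using that an increasing contraction's fixed point lies below any point it maps downward---is equally valid, and the geometric bound $w_{R}(k+1)-w_{R}(k)\le\beta^{k}c(1-\beta)$ is a nice quantitative by-product. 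It can be shortened, however: since evaluating $\mathcal{T}^{\delta}$ at $w_{R}(\Delta)$ collapses to $z(1-\beta)+\beta\Upsilon(w_{R}(\Delta))=\mathcal{T}(w_{R}(\Delta))$, and $w_{R}(\Delta)>w_{R}(0)$ is above the fixed point of $\mathcal{T}$, the inequality $\mathcal{T}(w_{R}(\Delta))<w_{R}(\Delta)$ follows immediately from $x\mapsto x-\mathcal{T}(x)$ being strictly increasing with its zero at $w_{R}(0)$; the Lipschitz iteration is not actually needed.
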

\begin{proof}
  The lemma is established by induction.
  I first establish that $w_{R}\left(\Delta\right)>w_{R}^{\delta}\left(0\right)$.

The expression for $w_{R} \left( \Delta \right)$ satisfies
\begin{align*}
w_{R}\left(\Delta\right) = \left(z+c\right)\left(1-\beta\right)+\beta\left\{ \int_{\underline{w}}^{w_{R}\left(\Delta-1\right)}w_{R}\left(\Delta-1\right)dF\left(w\right)+\int_{w_{R}\left(\Delta-1\right)}^{\overline{w}}wdF\left(w\right)\right\}.
\end{align*}
The expression for $w_{R}^{\delta} \left( 0 \right)$ satisfies
\begin{align}
  \label{eq:wR0:delta-01}
  \begin{split}
    w_{R}^{\delta}\left(0\right) & =z\left(1-\beta\right)+\beta\delta\int_{\underline{w}}^{w_{R}\left(\Delta\right)}w_{R}\left(\Delta\right)dF\left(w\right)+\beta\delta\int_{w_{R}\left(\Delta\right)}^{\overline{w}}wdF\left(w\right) \\
 &\quad +\beta\left(1-\delta\right)\int_{\underline{w}}^{w_{R}^{\delta}\left(0\right)}w_{R}^{\delta}\left(0\right)dF\left(w\right)+\beta\left(1-\delta\right)\int_{w_{R}^{\delta}\left(0\right)}^{\overline{w}}wdF\left(w\right).
  \end{split}
\end{align}
I establish $w_{R}\left(\Delta\right)>w_{R}^{\delta}\left(0\right)$ by contradiction. 

There are two cases to consider: the case where $w_{R}\left(\Delta\right)=w_{R}^{\delta}\left(0\right)$
and the case where $w_{R}\left(\Delta\right)<w_{R}^{\delta}\left(0\right)$.
First, I suppose $w_{R}\left(\Delta\right)=w_{R}^{\delta}\left(0\right)$.
This equality implies that, using the expression for $w_{R}^{\delta}\left(0\right)$ in \eqref{eq:wR0:delta-01},
\begin{align*}
w_{R}\left(\Delta\right)=z\left(1-\beta\right)+\beta\left[\int_{\underline{w}}^{w_{R}\left(\Delta\right)}w_{R}\left(\Delta\right)dF\left(w\right)+\int_{w_{R}\left(\Delta\right)}^{\overline{w}}wdF\left(w\right)\right].
\end{align*}
Which implies $w_{R}\left(\Delta\right)$ is a fixed point of $\mathcal{T}$ defined in (\ref{eq:map-wR}).
The properties of $\mathcal{T}$ imply $w_{R}\left(\Delta\right)=w_{R}\left(0\right)$,
which contradicts the results in proposition \ref{prop:text:basic-incr-reservation-wages} in the main text. 

Second, I consider the case where $w_{R}\left(\Delta\right)<w_{R}^{\delta}\left(0\right)$.
Then
\begin{align*}
w_{R}^{\delta}\left(0\right) & =z\left(1-\beta\right)+\beta\delta\left[\int_{\underline{w}}^{w_{R}\left(\Delta\right)}w_{R}\left(\Delta\right)dF\left(w\right)+\int_{w_{R}\left(\Delta\right)}^{\overline{w}}wdF\left(w\right)\right]\\
 & \quad+\beta\left(1-\delta\right)\left[\int_{\underline{w}}^{w_{R}^{\delta}\left(0\right)}w_{R}^{\delta}\left(0\right)dF\left(w\right)+\int_{w_{R}^{\delta}\left(0\right)}^{\overline{w}}wdF\left(w\right)\right]\\
 & =z\left(1-\beta\right)+\beta\delta\Upsilon\left[w_{R}\left(\Delta\right)\right]+\beta\left(1-\delta\right)\Upsilon\left[w_{R}^{\delta}\left(0\right)\right]\\
 & <z\left(1-\beta\right)+\beta\Upsilon\left[w_{R}^{\delta}\left(0\right)\right],
\end{align*}
where the inequality uses the fact that $\Upsilon$ is increasing,
which is established in lemma \ref{lemma:upsilon}.
The inequality, using the definition of $\Upsilon$, can be expressed as
\begin{align*}
w_{R}^{\delta}\left(0\right)<z\left(1-\beta\right)+\beta\left[\int_{\underline{w}}^{w_{R}^{\delta}\left(0\right)}w_{R}^{\delta}\left(0\right)dF\left(w\right)+\int_{w_{R}^{\delta}\left(0\right)}^{\overline{w}}wdF\left(w\right)\right].
\end{align*}
Developing this expression yields
\begin{align}
  \label{eq:wR-wRdelta}    
  \begin{split}
w_{R}^{\delta}\left(0\right)-\beta w_{R}^{\delta}\left(0\right) &< z\left(1-\beta\right) + \beta\left[\int_{\underline{w}}^{w_{R}^{\delta}\left(0\right)}w_{R}^{\delta}\left(0\right)dF\left(w\right)+\int_{w_{R}^{\delta}\left(0\right)}^{\overline{w}}wdF\left(w\right)\right] \\
  &\quad- \beta w_{R}^{\delta}\left(0\right)\int_{\underline{w}}^{\overline{w}}dF\left(w\right) \\
 &= z\left(1-\beta\right)+\beta\int_{w_{R}^{\delta}\left(0\right)}^{\overline{w}}\left[w-w_{R}^{\delta}\left(0\right)\right]dF\left(w\right) \\
\therefore w_{R}^{\delta}\left(0\right) & <z+\frac{\beta}{1-\beta}\int_{w_{R}^{\delta}\left(0\right)}^{\overline{w}}\left[w-w_{R}^{\delta}\left(0\right)\right]dF\left(w\right) \\
\therefore 0 & <-w_{R}^{\delta}\left(0\right)+z+\frac{\beta}{1-\beta}\int_{w_{R}^{\delta}\left(0\right)}^{\overline{w}}\left[w-w_{R}^{\delta}\left(0\right)\right]dF\left(w\right).
  \end{split}
\end{align}
I consider the function 
\begin{equation}
\Xi\left(x\right) \equiv -x + z + \frac{\beta}{1-\beta}\int_{x}^{\overline{w}} \left( w-x \right)  dF\left(w\right).\label{eq:Xi}
\end{equation}
Comparison of $\Xi$ with $\mathcal{T}$ in \eqref{eq:map-wR} establishes that $w_{R}\left(0\right)$ solves $\Xi \left( w_{R}\left(0\right) \right) = 0$.
Indeed, if $\chi$ solves $\Xi\left(\chi\right)=0$, then
\begin{align*}
0 & =-\chi+z+\frac{\beta}{1-\beta}\int_{\chi}^{\overline{w}}\left(w-\chi\right)dF\left(w\right)\\
\iff\chi & =z+\frac{\beta}{1-\beta}\int_{\chi}^{\overline{w}}\left(w-\chi\right)dF\left(w\right)\\
\iff\chi\left(1-\beta\right) & =z\left(1-\beta\right)+\beta\int_{\chi}^{\overline{w}}\left(w-\chi\right)dF\left(w\right)\\
\iff\chi & =z\left(1-\beta\right)+\beta\int_{\chi}^{\overline{w}}\left(w-\chi\right)dF\left(w\right)+\beta\chi\int_{\underline{w}}^{\overline{w}}dF\left(w\right)\\
\iff\chi & =z\left(1-\beta\right)+\beta\left[\int_{\chi}^{\overline{w}}\left(w-\chi\right)dF\left(w\right)+\chi\int_{\underline{w}}^{\overline{w}}dF\left(w\right)\right]\\
\iff\chi & =z\left(1-\beta\right)+\beta\left[\int_{\underline{w}}^{\chi}\chi dF\left(w\right)+\int_{\chi}^{\overline{w}}wdF\left(w\right)\right]
\end{align*}
and $\chi$ is a fixed point of $\mathcal{T}$.
Additionally,
\begin{align*}
\Xi^{\prime}\left(x\right) &= -1-\frac{\beta}{1-\beta}\left(x-x\right)f\left(x\right)-\frac{\beta}{1-\beta}\int_{x}^{\overline{w}}f\left(w\right)dw\\
 &= -1-\frac{\beta}{1-\beta}\left[1-F\left(x\right)\right]\\
 &< 0.
\end{align*}
Because $w_{R}\left(0\right)$ solves $\Xi \left( w_{R}\left(0\right) \right) = 0$ and
$\Xi$ is strictly decreasing, any $x$ that satisfies $\Xi\left(x\right)>0$
must be less than $w_{R}\left(0\right)$.
Figure \ref{fig:fun-Chi} illustrates this idea.

\begin{figure}[htbp]
\centerline{\includegraphics[width=0.8\textwidth]{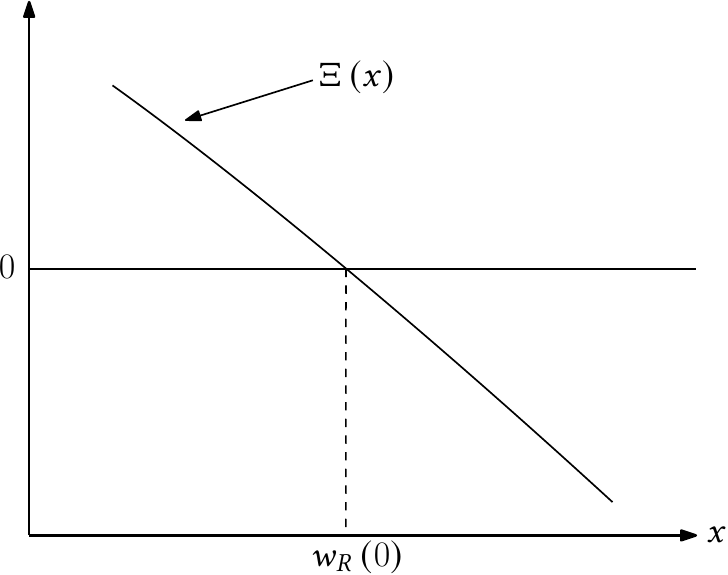}}
\caption[]{\label{fig:fun-Chi} The function $\Xi$ where $\Xi\left[w_{R}\left(0\right)\right]=0$.}
\end{figure}

The inequality in \eqref{eq:wR-wRdelta} establishes that $\Xi \left( w_{R}^{\delta} \left( 0 \right) \right) > 0$.
The properties of $\Xi$ in figure \ref{fig:fun-Chi} imply $w_{R}^{\delta} \left( 0 \right) < w_{R} \left( 0 \right)$.
This inequality along with the assumption that $w_{R} \left( \Delta \right) < w_{R}^{\delta} \left( 0 \right)$ implies that
$w_{R}\left(\Delta\right) < w_{R}^{\delta}\left(0\right) < w_{R}\left(0\right)$,
which contradicts the result in equation \eqref{eq:text:optimal-seq-res-wages} found in proposition \ref{prop:text:basic-incr-reservation-wages} of the main text.
Therefore, $w_{R}\left(\Delta\right)>w_{R}\left(0\right)$.

I have established the base case.
Now I want to show that $w_{R}\left(n+1+\Delta\right)>w_{R}^{\delta}\left(n+1\right)$
using $w_{R}\left(n+\Delta\right)>w_{R}^{\delta}\left(n\right)$.
To do so, I suppose not; that is, I suppose $w_{R}\left(n+1+\Delta\right)\leq w_{R}^{\delta}\left(n+1\right)$.
The expression for $w_{R}^{\delta}$ in \eqref{eq:wR-delta} implies
\begin{align*}
w_{R}\left(n+1+\Delta\right) \leq w_{R}^{\delta}\left(n+1\right) & =\left(z+c\right)\left(1-\beta\right)+\beta\delta\left[\int_{\underline{w}}^{w_{R}\left(n+\Delta\right)}w_{R}\left(n+\Delta\right)dF\left(w\right)+\int_{w_{R}\left(n+\Delta\right)}^{\overline{w}}wdF\left(w\right)\right]\\
 & \quad+\beta\left(1-\delta\right)\left[\int_{\underline{w}}^{w_{R}^{\delta}\left(n\right)}w_{R}^{\delta}\left(n\right)dF\left(w\right)+\int_{w_{R}^{\delta}\left(n\right)}^{\overline{w}}wdF\left(w\right)\right],
\end{align*}
where the inequality uses the supposition that $w_{R}\left(n+1+\Delta\right)\leq w_{R}^{\delta}\left(n+1\right)$.
Using $w_{R}^{\delta}\left(n\right)<w_{R}\left(n+\Delta\right)$ on
the right side of the latter implies
\begin{align*}
w_{R}\left(n+1+\Delta\right) & <\left(z+c\right)\left(1-\beta\right)+\beta\delta\left[\int_{\underline{w}}^{w_{R}\left(n+\Delta\right)}w_{R}\left(n+\Delta\right)dF\left(w\right)+\int_{w_{R}\left(d+\Delta\right)}^{\overline{w}}wdF\left(w\right)\right]\\
 & \quad+\beta\left(1-\delta\right)\left[\int_{\underline{w}}^{w_{R}\left(n+\Delta\right)}w_{R}\left(n+\Delta\right)dF\left(w\right)+\int_{w_{R}\left(d+\Delta\right)}^{\overline{w}}wdF\left(w\right)\right]\\
 & =\left(z+c\right)\left(1-\beta\right)+\beta\left[\int_{\underline{w}}^{w_{R}\left(n+\Delta\right)}w_{R}\left(n+\Delta\right)dF\left(w\right)+\int_{w_{R}\left(n+\Delta\right)}^{\overline{w}}wdF\left(w\right)\right].
\end{align*}
This contradicts the definition of $w_{R}$ in \eqref{eq:text:wRdelta},
which requires the latter expression to hold with equality.
This establishes what was set out to be shown;
namely, $w_{R}\left(n+\Delta\right)>w_{R}^{\delta}\left(n\right)$ for all $n\in\left\{ 0,\dots,N\right\}$.
\end{proof}

\subsection{Proof of Lemma \ref{lemma:bounds}}

\begin{lemma}[Bounds]
  \label{lemma:bounds}
  The sequence of reservation wages is bounded by $\underline{w}$ and $\overline{w}$. 
\end{lemma}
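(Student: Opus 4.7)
The plan is to handle the upper and lower bounds separately and to lean almost entirely on the three preceding lemmas. Because the sequence $\{w_R^{\delta}(n)\}$ is strictly increasing by Lemma \ref{lemma:optimal-search-delta}, it suffices to establish the two extreme inequalities $w_R^{\delta}(0) > \underline{w}$ and $w_R^{\delta}(N) < \overline{w}$; every intermediate reservation wage is then squeezed between them.

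For the upper bound I would simply chain two facts already proved. Lemma \ref{lemma-just-extended} gives $w_R^{\delta}(n) < w_R(n+\Delta)$ for every $n \in \{0,\dots,N\}$. The induction in step \ref{item:basic:range} of the proof of Proposition \ref{prop:text:basic-incr-reservation-wages} in Appendix \ref{sec:proof-basic} was carried out for $m \le N$ but actually delivers $w_R(m) < \overline{w}$ for every non-negative integer $m$, so it applies in particular to $m = n+\Delta$. Composing the two inequalities gives $w_R^{\delta}(n) < w_R(n+\Delta) < \overline{w}$ on $\{0,\dots,N\}$.

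For the lower bound at $n=0$ I would argue by contradiction from the fixed-point equation $w_R^{\delta}(0) = \mathcal{T}^{\delta}(w_R^{\delta}(0))$ furnished by Lemma \ref{lemma:wR0-delta-exists}. Suppose $w_R^{\delta}(0) = \underline{w}$. Then the final bracketed integral in $\mathcal{T}^{\delta}(\underline{w})$ collapses to $\mu_w$, giving
\[
\underline{w} \;=\; z(1-\beta) + \beta\delta\,\Upsilon\bigl[w_R(\Delta)\bigr] + \beta(1-\delta)\mu_w.
\]
Since $w_R(\Delta) > \underline{w}$ by Proposition \ref{prop:text:basic-incr-reservation-wages} and $\Upsilon$ is strictly increasing by Lemma \ref{lemma:upsilon}, the first-order term satisfies $\Upsilon[w_R(\Delta)] > \Upsilon(\underline{w}) = \mu_w$, so the right side strictly exceeds $z(1-\beta) + \beta\mu_w$, which in turn strictly exceeds $\underline{w}$ by the maintained assumption $\underline{w} < (1-\beta)z + \beta\mu_w$. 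This yields $\underline{w} > \underline{w}$, a contradiction. Monotonicity from Lemma \ref{lemma:optimal-search-delta} then lifts the strict inequality to $w_R^{\delta}(n) > \underline{w}$ for all $n$.

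Because the argument is essentially a bookkeeping exercise once Lemmas \ref{lemma:wR0-delta-exists}, \ref{lemma:optimal-search-delta}, and \ref{lemma-just-extended} are in place, I do not anticipate a serious obstacle. The one detail worth flagging is the extension to unbounded support $[\underline{w},\infty)$: the upper inequality is then vacuous as stated, but each $w_R^{\delta}(n)$ remains finite because it is bounded above by the corresponding $w_R(n+\Delta)$, which is itself finite under the finite-mean hypothesis on $F$ via the direct contraction argument already carried out in Appendix \ref{sec:proof-basic}.
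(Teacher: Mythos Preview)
Your argument is correct, and the lower-bound step is essentially the paper's own contradiction argument (the paper cites Lemma~\ref{lemma-just-extended} rather than Proposition~\ref{prop:text:basic-incr-reservation-wages} to get $w_R(\Delta)>\underline{w}$, but under the supposition $w_R^{\delta}(0)=\underline{w}$ the two amount to the same thing). The upper bound, however, is handled differently. The paper runs a fresh induction: it first verifies $w_R^{\delta}(0)<\overline{w}$ and $w_R^{\delta}(1)<\overline{w}$ by contradiction, and then in the inductive step bounds $\Upsilon(w_R^{\delta}(N-1))$ from above via Lemma~\ref{lemma-just-extended} and the inequality $\Upsilon(x)<\overline{w}$ from Lemma~\ref{lemma:upsilon}. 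Your route---compose $w_R^{\delta}(n)<w_R(n+\Delta)$ from Lemma~\ref{lemma-just-extended} with the already-established bound $w_R(m)<\overline{w}$ from step~\ref{item:basic:range} of Appendix~\ref{sec:proof-basic}---is shorter and avoids repeating an induction that has effectively already been done in the no-extension model. What the paper's approach buys is self-containment: it does not rely on the reader noticing that step~\ref{item:basic:range} was proved for all non-negative integers rather than just $n\le N$. Your remark on the unbounded-support case is also correct and matches the paper's treatment.
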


\begin{proof}
To show $w_{R}^{\delta}\left(0\right)>\underline{w}$, suppose not;
that is, suppose $w_{R}^{\delta}\left(0\right)= \underline{w}$.
This implies, using \eqref{eq:wR-delta-0}, that
\begin{align*}
w_{R}^{\delta}\left(0\right) &= z\left(1-\beta\right)+\beta\delta\Upsilon\left(w_{R}\left(\Delta\right)\right)+\left(1-\delta\right)\Upsilon\left(w_{R}^{\delta}\left(0\right)\right) \\
  &> z\left(1-\beta\right)+\beta\delta\Upsilon\left(w_{R}^{\delta}\left(0\right)\right)+\left(1-\delta\right)\Upsilon\left(w_{R}^{\delta}\left(0\right)\right) \\
 &=z\left(1-\beta\right)+\beta\delta\mu_{w}+\left(1-\delta\right)\mu_{w}\\
 & =z\left(1-\beta\right)+\beta\mu_{w}\\
 & >\underline{w},
\end{align*}
where
the first inequality uses lemma \ref{lemma-just-extended} and
the insertion of $\mu_{w}$ uses the fact that integration is being done over the entire support of wages,
establishing a contradiction.
(The case where $w_{R}^{\delta} < \underline{w}$ is vacuous.)
Thus $w_{R}^{\delta}\left(0\right) > \underline{w}$. 

Both $w_{R}^{\delta}\left(0\right)$ and $w_{R}^{\delta}\left(1\right)$
are less than $\overline{w}$. To establish that this is the case,
suppose that $w_{R}^{\delta}\left(0\right)=\overline{w}$.
This implies, using \eqref{eq:wR-delta-0}, that
\begin{align*}
w_{R}^{\delta}\left(0\right)=\overline{w}=z\left(1-\beta\right)+\beta\left[\delta+\left(1-\delta\right)\right]\overline{w}=z\left(1-\beta\right)+\beta\overline{w}<\overline{w}
\end{align*}
where the inequality follows by the assumption that $z < \overline{w}$,
which establishes a contradiction.
If it is supposed that $w_{R}^{\delta}\left(1\right)=\overline{w}$,
then \eqref{eq:wR-delta} implies
\begin{align*}
w_{R}^{\delta}\left(1\right)=\overline{w}=\left(z+c\right)\left(1-\beta\right)+\beta\left[\delta+\left(1-\delta\right)\right]\overline{w}=\left(z+c\right)\left(1-\beta\right)+\beta\overline{w}<\overline{w},
\end{align*}
where the inequality follows by the assumption that $z+c < \overline{w}$,
which establishes a contradiction.
Next I want to show $w_{R}^{\delta}\left(N-1\right)<\overline{w}$ implies $w_{R}^{\delta}\left(N\right)<\overline{w}$.
From \eqref{eq:wR-delta},
\begin{align*}
w_{R}^{\delta} \left( N \right) &= \left(z+c\right)\left(1-\beta\right)+\beta\left[\delta\Upsilon\left(w_{R}\left(N-1+\Delta\right)\right)+\left(1-\delta\right)\Upsilon\left(w_{R}^{\delta}\left(N-1\right)\right)\right] \\
                                &< \left(z+c\right)\left(1-\beta\right)+\beta\left[\delta\Upsilon\left(w_{R}\left(N-1+\Delta\right)\right)
                                  +\left(1-\delta\right)\Upsilon\left( w_{R}\left(N-1+\Delta\right) \right)\right] \\
 &< \left(z+c\right)\left(1-\beta\right)+\beta\left[\delta\overline{w}+\left(1-\delta\right)\overline{w}\right]\\
 &= \left(z+c\right)\left(1-\beta\right)+\beta\overline{w} \\
 &< \overline{w},
\end{align*}
where
the first inequality uses lemma \ref{lemma-just-extended};
the second inequality uses the induction hypothesis and lemma \ref{lemma:upsilon}; and
the final inequality uses the assumption that $z+c<\overline{w}$.
If the support of wages is $\left[\underline{w},\infty\right)$,
then $w_{R}^{\delta}\left(N\right)$ will be finite as long as the truncated distribution of wage offers is finite. 

In summary, $\underline{w}<w_{R}^{\delta}\left(0\right)<\cdots<w_{R}^{\delta}\left(N\right)<\overline{w}$.
\end{proof}

\section{Proof of Proposition \ref{prop:extend-pr-and-length} in the Main Text}
\label{sec:proof-app-search-behavior}

Proposition \ref{prop:extend-pr-and-length} in the main text is broken into 2 lemmas.
\begin{description}
\item[Increasing in $\delta$.] Lemma \ref{lemma:increading-delta} establishes that $w_{R}^{\delta} \left( n \right)$ is increasing in $\delta$,
  holding constant $\Delta$.
\item[Increasing in $\Delta$.] Lemma \ref{prop:increasing-Delta} establishes that $w_{R}^{\delta} \left( n \right)$ is increasing in $\Delta$,
  holding constant $\delta$.
\end{description}
The proofs are different in nature.
The probability that benefits are extended is modeled as a number that takes on all values in $\left[ 0,1 \right]$.
The length of the extension, however, takes on discrete values.

\subsection{Proof of Lemma \ref{lemma:increading-delta}}

\begin{lemma}[Increasing in $\delta$]
  \label{lemma:increading-delta}
  Assume a worker in proposition \ref{prop:1} has computed their sequence of reservation wages.
  Each reservation wage $w_{R}^{\delta}$ is increasing in $\delta$.
  In other words,
  the worker is more selective when it comes to accepting job offers when they perceive an extension to be more likely.
\end{lemma}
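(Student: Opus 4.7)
The plan is to exploit the fact that, in the expression for $\mathcal{T}^{\delta}$ given in \eqref{eq:map-wR-delta}, the quantity $w_{R}(\Delta)$ does not depend on $\delta$ (it comes from the post-extension problem solved in section \ref{sec:job-search-basic}). Treating $w_{R}^{\delta}(0)$ as an implicit function of $\delta$ defined by the fixed-point equation $w_{R}^{\delta}(0)=\mathcal{T}^{\delta}(w_{R}^{\delta}(0))$, I will compute $dw_{R}^{\delta}(0)/d\delta$ by implicit differentiation and show it is strictly positive. Then I will push the claim up to $n\in\{1,\dots,N\}$ by induction using the explicit recursion for $w_{R}^{\delta}(n)$ in \eqref{eq:wR-delta}.

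For the base case $n=0$, differentiating $\mathcal{T}^{\delta}(x)$ in \eqref{eq:map-wR-delta} with respect to $\delta$ and with respect to $x$ yields
\begin{align*}
\frac{\partial \mathcal{T}^{\delta}}{\partial \delta}(x) &= \beta\bigl[\Upsilon(w_{R}(\Delta))-\Upsilon(x)\bigr], \\
\frac{\partial \mathcal{T}^{\delta}}{\partial x}(x) &= \beta(1-\delta)F(x).
\end{align*}
Implicit differentiation of $w_{R}^{\delta}(0)=\mathcal{T}^{\delta}(w_{R}^{\delta}(0))$ gives
\begin{align*}
\bigl[1-\beta(1-\delta)F(w_{R}^{\delta}(0))\bigr]\frac{dw_{R}^{\delta}(0)}{d\delta}
=\beta\bigl[\Upsilon(w_{R}(\Delta))-\Upsilon(w_{R}^{\delta}(0))\bigr].
\end{align*}
The bracketed coefficient on the left is strictly positive (as already used in lemma \ref{lemma:wR0-delta-exists} to establish contraction). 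The right side is strictly positive because lemma \ref{lemma-just-extended} gives $w_{R}(\Delta)>w_{R}^{\delta}(0)$ and lemma \ref{lemma:upsilon} says $\Upsilon$ is strictly increasing. Hence $dw_{R}^{\delta}(0)/d\delta>0$.

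For the induction step, fix $n\ge 1$ and assume $dw_{R}^{\delta}(n-1)/d\delta>0$. Because $w_{R}(n-1+\Delta)$ does not depend on $\delta$, differentiating \eqref{eq:wR-delta} gives
\begin{align*}
\frac{dw_{R}^{\delta}(n)}{d\delta}
=\beta\bigl[\Upsilon(w_{R}(n-1+\Delta))-\Upsilon(w_{R}^{\delta}(n-1))\bigr]
+\beta(1-\delta)F(w_{R}^{\delta}(n-1))\frac{dw_{R}^{\delta}(n-1)}{d\delta}.
\end{align*}
Lemma \ref{lemma-just-extended} ensures $w_{R}(n-1+\Delta)>w_{R}^{\delta}(n-1)$, so monotonicity of $\Upsilon$ (lemma \ref{lemma:upsilon}) makes the first term strictly positive; the second term is non-negative by the induction hypothesis and $F\ge 0$. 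Thus $dw_{R}^{\delta}(n)/d\delta>0$, completing the induction.

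The main obstacle, I expect, is not any single inequality but keeping straight which quantities depend on $\delta$: crucially $w_{R}(\cdot)$ is the post-extension reservation-wage sequence from section \ref{sec:job-search-basic} and is independent of $\delta$, whereas $w_{R}^{\delta}(\cdot)$ is not. Once that dependence structure is set up, the positivity of $dw_{R}^{\delta}(n)/d\delta$ rests squarely on the auxiliary ordering $w_{R}(n-1+\Delta)>w_{R}^{\delta}(n-1)$ supplied by lemma \ref{lemma-just-extended}, so the only real work is verifying that this ordering propagates through the implicit and induction steps, as above.
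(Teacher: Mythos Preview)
Your proposal is correct and follows essentially the same route as the paper: implicit differentiation of the fixed-point equation for $w_{R}^{\delta}(0)$ (the paper writes this via an auxiliary function $G$ rather than $\mathcal{T}^{\delta}$, but the computation is identical), followed by induction on $n$ using the explicit recursion, with lemma~\ref{lemma-just-extended} supplying the key ordering and lemma~\ref{lemma:upsilon} the monotonicity of $\Upsilon$. The only cosmetic difference is that the paper verifies the $n=1$ case separately before stating the general induction step, whereas you absorb it into the induction; either is fine.
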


\begin{proof}
  The proof, again, goes by induction.
  I first establish that $w_{R}^{\delta}\left(0\right)$ is increasing in the belief that benefits are extended.
  Using the expression for $w_{R}^{\delta}\left(0\right)$ in \ref{eq:wR-delta-0}, define the function 
\begin{align*}
G\left[w_{R}^{\delta}\left(0\right);\delta\right] &=-w_{R}^{\delta}\left(0\right)+z\left(1-\beta\right)+\beta\delta\left[\int_{\underline{w}}^{w_{R}\left(\Delta\right)}w_{R}\left(\Delta\right)dF\left(w\right)+\int_{w_{R}\left(\Delta\right)}^{\overline{w}}wdF\left(w\right)\right]\\
 &\quad+\beta\left(1-\delta\right)\left[\int_{\underline{w}}^{w_{R}^{\delta}\left(0\right)}w_{R}^{\delta}\left(0\right)dF\left(w\right)+\int_{w_{R}^{\delta}\left(0\right)}^{\overline{w}}wdF\left(w\right)\right]\\
 &= -w_{R}^{\delta}\left(0\right)+z\left(1-\beta\right)+\beta\delta\Upsilon\left(w_{R}\left(\Delta\right)\right)+\beta\left(1-\delta\right)\Upsilon\left(w_{R}^{\delta}\left(0\right)\right)
\end{align*}
The implicit function theorem implies
\begin{align*}
\frac{\partial w_{R}^{\delta}\left(0\right)}{\partial\delta} & =-\frac{\partial G/\partial\delta}{\partial G/\partial w_{R}^{\delta}}\\
 & =-\frac{\beta\Upsilon\left(w_{R}\left(\Delta\right)\right)-\beta\Upsilon\left(w_{R}\left(0\right)\right)}{-1+\beta\left(1-\delta\right)\Upsilon^{\prime}\left(w_{R}^{\delta}\left(0\right)\right)}\\
 & =\beta\frac{\Upsilon\left(w_{R}\left(\Delta\right)\right)-\Upsilon\left(w_{R}\left(0\right)\right)}{1-\beta\left(1-\delta\right)F\left(w_{R}^{\delta}\left(0\right)\right)},
\end{align*}
where the last equality uses lemma \ref{lemma:upsilon}.
Because $\Upsilon$ is increasing (lemma \ref{lemma:upsilon}) and
$w_{R}\left(\Delta\right)>w_{R}^{\delta}\left(0\right)$ (lemma \ref{lemma-just-extended}),
the numerator of the expression is positive.
In addition, $1>\beta\left(1-\delta\right)F\left(w_{R}^{\delta}\left(0\right)\right)$ as each term on the right-hand side is less than $1$,
making the denominator positive.
Thus, $w_{R}^{\delta}\left(0\right)$ is increasing in $\delta$; that is, increasing in the perceived likelihood of benefits being extended. 

It follows that $w_{R}^{\delta}\left(1\right)$ is increasing in $\delta$.
Expressing $w_{R}^{\delta} \left( 1 \right)$ as
\begin{align*}
w_{R}^{\delta}\left(1\right) & =\left(z+c\right)\left(1-\beta\right)+\beta\delta\left[\int_{\underline{w}}^{w_{R}\left(\Delta\right)}w_{R}\left(\Delta\right)dF\left(w\right)+\int_{w_{R}\left(\Delta\right)}^{\overline{w}}wdF\left(w\right)\right]\\
 & \quad+\beta\left(1-\delta\right)\left[\int_{\underline{w}}^{w_{R}^{\delta}\left(0\right)}w_{R}^{\delta}\left(0;\delta\right)dF\left(w\right)+\int_{w_{R}^{\delta}\left(0;\delta\right)}^{\overline{w}}wdF\left(w\right)\right]\\
 & =\left(z+c\right)\left(1-\beta\right)+\beta\delta\Upsilon\left(w_{R}\left(\Delta\right)\right)+\beta\left(1-\delta\right)\Upsilon\left(w_{R}^{\delta}\left(0\right)\right)
\end{align*}
and differentiating the latter expression with respect to $\delta$ implies 
\begin{align*}
\frac{\partial w_{R}^{\delta}\left(1\right)}{\partial\delta} & =\beta\Upsilon\left(w_{R}\left(\Delta\right)\right)\\
 & \quad+\beta\left\{ -\Upsilon\left(w_{R}^{\delta}\left(0\right)\right)+\left(1-\delta\right)\Upsilon^{\prime}\left(w_{R}^{\delta}\left(0\right)\right)\frac{\partial w_{R}^{\delta}\left(0\right)}{\partial\delta}\right\} \\
 & =\beta\left[\Upsilon\left(w_{R}\left(\Delta\right)\right)-\Upsilon\left(w_{R}^{\delta}\left(0\right)\right)\right]+\beta\left(1-\delta\right)F\left(w_{R}^{\delta}\left(0\right)\right)\frac{\partial w_{R}^{\delta}\left(0\right)}{\partial\delta},
\end{align*}
using lemma \ref{lemma:upsilon}. Because $w_{R}\left(\Delta\right)>w_{R}^{\delta}\left(0\right)$
from lemma \ref{lemma-just-extended} and the fact that $\Upsilon$
is increasing,
$\Upsilon \left( w_{R}\left(\Delta\right) \right) > \Upsilon \left( w_{R}^{\delta}\left(0\right) \right)$.
Thus, $\partial w_{R}^{\delta}\left(1\right)/\partial\delta>0$. 

The next step uses induction.
I want to establish that $\partial w_{R}^{\delta}\left(n+1\right)/\partial\delta>0$ using $\partial w_{R}^{\delta}\left(n\right)/\partial\delta>0$.
Indeed,
\begin{align*}
w_{R}^{\delta}\left(n+1\right) & =\left(z+c\right)\left(1-\beta\right)+\beta\left[\delta\Upsilon\left(w_{R}\left(n+\Delta\right)\right)+\left(1-\delta\right)\Upsilon\left(w_{R}^{\delta}\left(n\right)\right)\right].
\end{align*}
Evaluating the derivative yields
\begin{align*}
\frac{\partial w_{R}^{\delta}\left(d+1\right)}{\partial\delta} & =\beta\left\{ \Upsilon\left(w_{R}\left(n+\Delta\right)\right)-\Upsilon\left(w_{R}^{\delta}\left(n\right)\right)+\left(1-\delta\right)\Upsilon^{\prime}\left(w_{R}^{\delta}\left(n\right)\right)\frac{\partial w_{R}^{\delta}\left(n\right)}{\partial\delta}\right\} \\
 & =\beta\left[\Upsilon\left(w_{R}\left(n+\Delta\right)\right)-\Upsilon\left(w_{R}^{\delta}\left(n\right)\right)\right]+\beta\left(1-\delta\right)F\left(w_{R}^{\delta}\left(n\right)\right)\frac{\partial w_{R}^{\delta}\left(n\right)}{\partial\delta},
\end{align*}
where the last line uses the expression for the derivative in lemma
\ref{lemma:upsilon}.
Lemma \ref{lemma-just-extended} implies $w_{R}\left(n+\Delta\right)>w_{R}^{\delta} \left( n \right)$ and
lemma \ref{lemma:upsilon} implies the first term is positive.
The second term is positive by the induction hypothesis.
Therefore $\partial w_{R}^{\delta}\left(n+1\right)/\partial\delta>0$. 
\end{proof}

\subsection{Proof of Lemma \ref{prop:increasing-Delta}}

\begin{lemma}[Increasing in $\Delta$]
  \label{prop:increasing-Delta}
  Assume a worker in proposition \ref{prop:1} has computed their sequence of reservation wages.
  Each reservation wage $w_{R}^{\delta}$ is increasing in $\Delta$.
  In other words,
  the worker is more selective when it comes to accepting job offers when they perceive the length of an extension to increase,
  holding constant the chance of an extension.  
\end{lemma}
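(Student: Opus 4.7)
\emph{Proof plan.} The parameter $\Delta$ is a positive integer, so the argument is discrete rather than differential: I will show that replacing $\Delta$ by any $\Delta' > \Delta$ strictly raises every reservation wage $w_{R}^{\delta}(n)$ for $n \in \{0,1,\dots,N\}$. Two tools established earlier do the work. Proposition \ref{prop:text:basic-incr-reservation-wages} in the main text gives $w_{R}(n-1+\Delta') > w_{R}(n-1+\Delta)$, so the post-extension sequence at the shifted index is higher. Lemma \ref{lemma:upsilon} gives that $\Upsilon$ is strictly increasing on the interior of the support. With these two ingredients, the proof is by induction on $n$, the remaining periods of UI compensation.

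For the base case ($n=0$), I rewrite the defining equation for $w_{R}^{\delta}(0)$, equation \eqref{eq:wR-delta-0}, in the equivalent form
\[
\Psi\bigl(w_{R}^{\delta}(0)\bigr) = z(1-\beta) + \beta\delta\,\Upsilon\bigl(w_{R}(\Delta)\bigr),
\]
where $\Psi(x) \equiv x - \beta(1-\delta)\Upsilon(x)$ is the map from Lemma \ref{lemma:Psi}. That lemma establishes $\Psi' > 0$, so $\Psi$ has a strictly increasing inverse on its range. The right-hand side is strictly increasing in $\Delta$ because $w_{R}(\Delta)$ is strictly increasing in $\Delta$ (Proposition \ref{prop:text:basic-incr-reservation-wages}) and $\Upsilon$ is strictly increasing (Lemma \ref{lemma:upsilon}). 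Hence $w_{R}^{\delta}(0) = \Psi^{-1}\bigl(z(1-\beta) + \beta\delta\,\Upsilon(w_{R}(\Delta))\bigr)$ is strictly increasing in $\Delta$.

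For the inductive step, assume $w_{R}^{\delta}(n)$ is strictly increasing in $\Delta$. Equation \eqref{eq:wR-delta} gives
\[
w_{R}^{\delta}(n+1) = (z+c)(1-\beta) + \beta\delta\,\Upsilon\bigl(w_{R}(n+\Delta)\bigr) + \beta(1-\delta)\,\Upsilon\bigl(w_{R}^{\delta}(n)\bigr).
\]
Replacing $\Delta$ by $\Delta' > \Delta$ strictly increases $w_{R}(n+\Delta)$ by Proposition \ref{prop:text:basic-incr-reservation-wages} and strictly increases $w_{R}^{\delta}(n)$ by the induction hypothesis; since $\Upsilon$ is strictly increasing, both summands on the right-hand side rise strictly. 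Therefore $w_{R}^{\delta}(n+1)$ is strictly increasing in $\Delta$, completing the induction.

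\emph{Anticipated main obstacle.} The delicate step is the base case, because it is a monotone comparative-statics claim for the fixed point of the nonlinear contraction $\mathcal{T}^{\delta}$ rather than a direct calculation. I handle it by isolating the ``shock'' $z(1-\beta) + \beta\delta\,\Upsilon(w_{R}(\Delta))$ from the $x$-dependent part of the fixed-point equation and appealing to the monotonicity of $\Psi$ from Lemma \ref{lemma:Psi}. Once the base case is in hand, the induction reduces to stacking the strict monotonicities of $\Upsilon$, of $w_{R}(\cdot)$, and of the induction hypothesis.
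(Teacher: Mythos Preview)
Your proposal is correct and follows essentially the same route as the paper: both arguments proceed by induction on $n$, handle the base case $n=0$ by rearranging the fixed-point equation into the form $\Psi\bigl(w_{R}^{\delta}(0)\bigr)=z(1-\beta)+\beta\delta\,\Upsilon\bigl(w_{R}(\Delta)\bigr)$ and invoke the strict monotonicity of $\Psi$ from Lemma~\ref{lemma:Psi}, and then push the induction forward using Lemma~\ref{lemma:upsilon} and Proposition~\ref{prop:text:basic-incr-reservation-wages}. The only cosmetic differences are that the paper phrases the base case as a comparison $\Psi\bigl(w_{R}^{\delta}(0;\Delta^{\bullet\bullet})\bigr)>\Psi\bigl(w_{R}^{\delta}(0;\Delta^{\bullet})\bigr)$ rather than via $\Psi^{-1}$, and the paper also verifies the $n=1$ case separately before the general inductive step, which your argument correctly recognizes as unnecessary.
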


\begin{proof}
The proof uses induction. 

First, $w_{R}^{\delta}\left(0\right)$ is increasing in $\Delta$.
To see that this is the case, I take $\Delta^{\bullet},\Delta^{\bullet\bullet}\in\left\{ 1,2,\dots\right\}$
with $\Delta^{\bullet}<\Delta^{\bullet\bullet}$.
The difference evaluates to 
\begin{align*}
w_{R}^{\delta}\left(0;\Delta^{\bullet\bullet}\right)-w_{R}^{\delta}\left(0;\Delta^{\bullet}\right) & =\left(z+c\right)\left(1-\beta\right)+\beta\left[\delta\Upsilon\left(w_{R}\left(\Delta^{\bullet\bullet}\right)\right)+\left(1-\delta\right)\Upsilon\left(w_{R}^{\delta}\left(0;\Delta^{\bullet\bullet}\right)\right)\right]\\
 &\quad-\left(z+c\right)\left(1-\beta\right)-\beta\left[\delta\Upsilon\left(w_{R}\left(\Delta^{\bullet}\right)\right)+\left(1-\delta\right)\Upsilon\left(w_{R}^{\delta}\left(0;\Delta^{\bullet}\right)\right)\right]\\
 &=\beta\delta\left[\Upsilon\left(w_{R}\left(\Delta^{\bullet\bullet}\right)\right)-\Upsilon\left(w_{R}\left(\Delta^{\bullet}\right)\right)\right] \\
  &\quad +\beta\left(1-\delta\right)\left[\Upsilon\left(w_{R}^{\delta}\left(0;\Delta^{\bullet\bullet}\right)\right)-\Upsilon\left(w_{R}^{\delta}\left(0;\Delta^{\bullet}\right)\right)\right].
\end{align*}
This expression, using the definition of $\Psi$ given in lemma \ref{lemma:Psi}, can be re-expressed as
\begin{align*}
\Psi\left(w_{R}^{\delta}\left(0;\Delta^{\bullet\bullet}\right)\right)-\Psi\left(w_{R}^{\delta}\left(0;\Delta^{\bullet}\right)\right)=\beta\delta\left[\Upsilon\left(w_{R}\left(\Delta^{\bullet\bullet}\right)\right)-\Upsilon\left(w_{R}\left(\Delta^{\bullet}\right)\right)\right]>0,
\end{align*}
where the inequality follows from the fact that $w_{R}\left(\Delta^{\bullet\bullet}\right)>w_{R}\left(\Delta^{\bullet}\right)$
(proposition \ref{prop:text:basic-incr-reservation-wages} in the main text) and
the fact that $\Upsilon$ is increasing (lemma \ref{lemma:upsilon}).
Because
$\Psi\left(w_{R}^{\delta}\left(0;\Delta^{\bullet\bullet}\right)\right)>\Psi\left(w_{R}^{\delta}\left(0;\Delta^{\bullet}\right)\right)$ and
$\Psi$ is increasing, as established in lemma \ref{lemma:Psi},
it follows that $w_{R}^{\delta}\left(0;\Delta^{\bullet}\right)<w_{R}^{\delta}\left(0;\Delta^{\bullet\bullet}\right)$.

The remainder of the proof goes by induction.
I first show that $w_{R}^{\delta}\left(1;\Delta^{\bullet}\right)<w_{R}^{\delta}\left(1;\Delta^{\bullet\bullet}\right)$.
I then show that $w_{R}^{\delta}\left(n;\Delta^{\bullet}\right)<w_{R}^{\delta}\left(n;\Delta^{\bullet\bullet}\right)$
implies $w_{R}^{\delta}\left(n;\Delta^{\bullet}\right)<w_{R}^{\delta}\left(n;\Delta^{\bullet\bullet}\right)$. 

It is true that $w_{R}^{\delta}\left(1;\Delta^{\bullet}\right)<w_{R}^{\delta}\left(1;\Delta^{\bullet\bullet}\right)$.
Using expression for $w_{R}^{\delta}$ in \eqref{eq:wR-delta}, 
\begin{align*}
w_{R}^{\delta}\left(1;\Delta^{\bullet\bullet}\right)-w_{R}^{\delta}\left(1;\Delta^{\bullet}\right) & =\left(z+c\right)\left(1-\beta\right)+\beta\delta\Upsilon\left(w_{R}\left(\Delta^{\bullet\bullet}\right)\right)+\beta\left(1-\delta\right)\Upsilon\left(w_{R}^{\delta}\left(0;\Delta^{\bullet\bullet}\right)\right)\\
 & \quad-\left(z+c\right)\left(1-\beta\right)-\beta\delta\Upsilon\left(w_{R}\left(\Delta^{\bullet}\right)\right)-\beta\left(1-\delta\right)\Upsilon\left(w_{R}^{\delta}\left(0;\Delta^{\bullet}\right)\right)\\
 &= \beta\delta\left[\Upsilon\left(w_{R}\left(\Delta^{\bullet\bullet}\right)\right)-\Upsilon\left(w_{R}\left(\Delta^{\bullet}\right)\right)\right] \\
 &\quad +\beta\left(1-\delta\right)\left[\Upsilon\left(w_{R}^{\delta}\left(0;\Delta^{\bullet\bullet}\right)\right)-\Upsilon\left(w_{R}^{\delta}\left(0;\Delta^{\bullet}\right)\right)\right].
\end{align*}
The first part of this proof establishes $w_{R}^{\delta}\left(0;\Delta^{\bullet\bullet}\right)>w_{R}^{\delta}\left(0;\Delta^{\bullet}\right)$.
Proposition \ref{prop:text:basic-incr-reservation-wages} in the main text establishes that $w_{R}\left(\Delta^{\bullet\bullet}\right)>w_{R}\left(\Delta^{\bullet}\right)$.
From lemma \ref{lemma:upsilon}, $\Upsilon$ is increasing.
These facts imply that both terms in square brackets are positive,
making the right side of the latter positive.
Therefore $w_{R}^{\delta}\left(1;\Delta^{\bullet\bullet}\right)>w_{R}^{\delta}\left(1;\Delta^{\bullet}\right)$.

The next step of the proof uses the fact that $w_{R}^{\delta}\left(n;\Delta^{\bullet\bullet}\right)>w_{R}^{\delta}\left(n;\Delta^{\bullet}\right)$
to prove that $w_{R}^{\delta}\left(n+1;\Delta^{\bullet\bullet}\right)>w_{R}^{\delta}\left(n+1;\Delta^{\bullet}\right)$.
Similar steps, using the expression for $w_{R}^{\delta}$ in \eqref{eq:wR-delta},
establish the result:
\begin{align*}
w_{R}^{\delta}\left(n+1;\Delta^{\bullet\bullet}\right) & =\left(z+c\right)\left(1-\beta\right)+\beta\delta\Upsilon\left(w_{R}\left(n+\Delta^{\bullet\bullet}\right)\right)+\beta\left(1-\delta\right)\Upsilon\left(w_{R}^{\delta}\left(n;\Delta^{\bullet\bullet}\right)\right)
\end{align*}
and thus
\begin{align*}
w_{R}^{\delta}\left(n+1;\Delta^{\bullet\bullet}\right)-w_{R}^{\delta}\left(n+1;\Delta^{\bullet}\right) & =\beta\delta\left[\Upsilon\left(w_{R}\left(n+\Delta^{\bullet\bullet}\right)\right)-\Upsilon\left(w_{R}\left(n+\Delta^{\bullet}\right)\right)\right]\\
 & \quad+\beta\left(1-\delta\right)\left[\Upsilon\left(w_{R}^{\delta}\left(n;\Delta^{\bullet\bullet}\right)\right)-\Upsilon\left(w_{R}^{\delta}\left(n;\Delta^{\bullet}\right)\right)\right].
\end{align*}
Proposition \ref{prop:text:basic-incr-reservation-wages} in the main text
establishes that $w_{R}\left(n+\Delta^{\bullet\bullet}\right)>w_{R}\left(n+\Delta^{\bullet}\right)$.
The induction hypothesis assumes $w_{R}^{\delta}\left(n;\Delta^{\bullet\bullet}\right)>w_{R}^{\delta}\left(n;\Delta^{\bullet}\right)$.
In addition, $\Upsilon$ is increasing by lemma \ref{lemma:upsilon}.
Therefore the right side is positive.
Thus, $w_{R}^{\delta}\left(n;\Delta^{\bullet\bullet}\right)>w_{R}^{\delta}\left(n;\Delta^{\bullet}\right)$ for positive $n$.
This completes the proof. 
\end{proof}

\section{How $\beta$, $c$, and $z$ Affect Reservation Wages}
\label{sec:prop-reserv-wages}

The main text shows how the probability of an extension and the extension's length affect reservation wages.
  How a worker discounts the future and the value of nonwork affect a worker's sequence of reservation wages.
  The effect is intuitive,
  as figure \ref{fig:seq-reservation-wages-properties} illustrates.

For a comparison,
figure \ref{fig:seq-reservation-wages-properties} reports in purple reservation wages for two cases shown in figure \ref{fig:seq-res-wages}.
The purple solid line reports the sequence of reservation wages upon an extension and
the purple broken line reports the sequence of reservation wages when the probability of extension is perceived to be $0.5$.
When $\beta$ increases, workers are more patient.
They are more selective throughout unemployment spells and
their sequences of reservation wages shift upward.
The higher-$\beta$ case is depicted in dark blue.
When $c$ is lower, workers receive less in UI compensation each period.
They are less selective selective through unemployment spells and
their sequences of reservation wages shift downward.
This is depicted in light blue.
Because workers have the same $\beta$ and the same $z$ in the baseline and low-$c$ parameterizations,
the reservation wage upon an extension when there are no remaining periods of UI compensation are the same.
In contrast,
when $z$ is lower, which is depicted in green,
sequences of reservation wages are lower throughout workers' unemployment spells.

\begin{figure}[htbp]
\centerline{\includegraphics[width=0.8\textwidth]{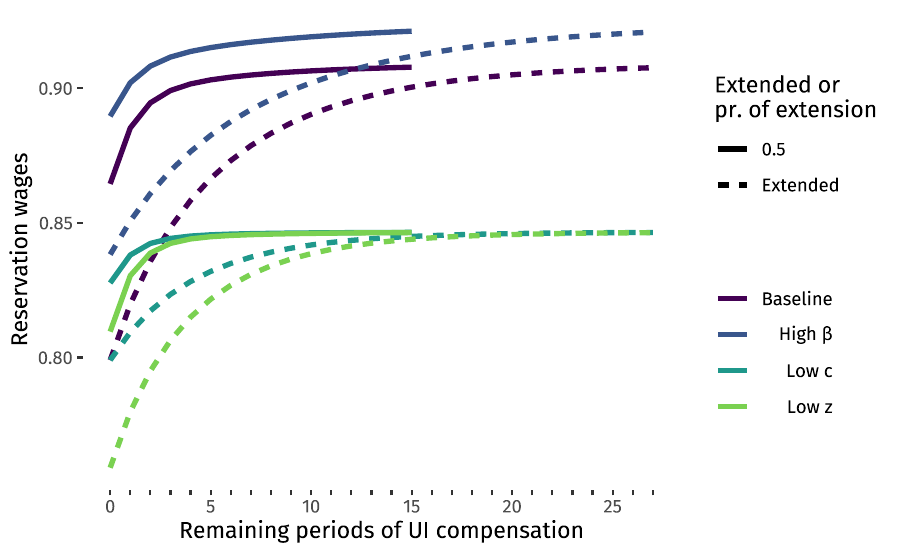}}
\caption[]{\label{fig:seq-reservation-wages-properties} How $\beta$, $c$, and $z$ affect reservation wages.}
  \begin{figurenotes}[Note]
    The baseline reports the values from figure \ref{fig:seq-res-wages} in purple.
    Relative to the baseline,
    the effect of increasing the discount factor and lowering UI compensation are illustrated in blue and green.
    Solid lines correspond to instances when benefits have been extended.
    Broken lines correspond to instances where a worker believes the probability of extension is $0.5$.
    Parameter changes affect the entire sequence of reservation wages.
\end{figurenotes} 
\end{figure}

\section{Expressions When Wages Offers are Characterized by a Uniform Distribution}
\label{sec:closed-form-uniform}

When job offers arrive from a uniform distribution,
optimal policy can be expressed in closed form.
I start with the characterization of job search under expiring benefits when there is no chance of an extension.

\subsection{Basic Job Search with Finite UI Benefits}
\label{sec:closed-form-uniform:finite}

As the proof of proposition \ref{prop:text:basic-incr-reservation-wages} in section \ref{sec:proof-basic} suggests,
the characterization starts with $w_{R}\left(0\right)$.
An implicit expression for $w_{R}\left(0\right)$ is given in equation \eqref{eq:wR0}.
When wage offers are distributed uniformly over $\left[0,1\right]$,
the expression for $w_{R}\left(0\right)$ can be developed as 
\begin{align*}
w_{R}\left(0\right) & =z\left(1-\beta\right)+\beta\left\{ \int_{\underline{w}}^{w_{R}\left(0\right)}w_{R}\left(0\right)dF\left(w\right)+\int_{w_{R}\left(0\right)}^{\overline{w}}wdF\left(w\right)\right\} \\
 & =z\left(1-\beta\right)+\beta\left\{ w_{R}\left(0\right)\int_{0}^{w_{R}\left(0\right)}dF\left(w\right)+\int_{w_{R}\left(0\right)}^{1}wdF\left(w\right)\right\} \\
 & =z\left(1-\beta\right)+\beta\left\{ w_{R}\left(0\right)F\left(w_{R}\left(0\right)\right)+\frac{w^{2}}{2}\Bigg\vert_{w=w_{R}\left(0\right)}^{w=1}\right\} \\
 & =z\left(1-\beta\right)+\beta\left\{ \left[w_{R}\left(0\right)\right]^{2}+\frac{1}{2}-\frac{\left[w_{R}\left(0\right)\right]^{2}}{2}\right\} .
\end{align*}
Collecting terms yields a quadratic equation in terms of $w_{R}\left(0\right)$:
\begin{align*}
0=\frac{\beta}{2}\left[w_{R}\left(0\right)\right]^{2}-w_{R}\left(0\right)+z\left(1-\beta\right)+\frac{\beta}{2}.
\end{align*}
The two roots of this expression are
\begin{align*}
w_{R}\left(0\right) & =\frac{1\pm\sqrt{1-4\left(\beta/2\right)\left[z\left(1-\beta\right)+\beta/2\right]}}{\beta}\\
 & =\frac{1\pm\sqrt{\left(1-\beta\right)\left(1+\beta-2\beta z\right)}}{\beta},
\end{align*}
where the second equality uses the fact that
\begin{align*}
\left(1-\beta\right)\left(1+\beta-2\beta z\right) & =1-\beta+\left(1-\beta\right)\left(\beta-2\beta z\right)\\
 & =1-\beta^{2}-2\beta z+2\beta^{2}z\\
 & =1-2\beta\left[z\left(1-\beta\right)+\beta/2\right].
\end{align*}
One of the roots yields a value for $w_{R}\left(0\right)$ that is
above one and this root is not of economic interest. The root that
is of economic interest is
\begin{equation}
w_{R}\left(0\right)=\frac{1-\sqrt{\left(1-\beta\right)\left(1+\beta-2\beta z\right)}}{\beta}.\label{eq:wR0:sol-uniform}
\end{equation}
For $\beta\in\left(0,1\right)$,
the reservation wages $w_{R}\left(0\right)$ will fall in the interior of the support of wage offers if $0<z<1$.
The restriction on the flow value of nonwork follows from the requirement that $0<w_{R}\left(0\right)<1$.
The restriction that $0<w_{R}\left(0\right)$ requires
\begin{align*}
0 & <\frac{1}{\beta}-\frac{\left[\left(1-\beta\right)\left(1+\beta-2\beta z\right)\right]^{1/2}}{\beta}\\
\iff1 & >\left(1-\beta\right)\left(1+\beta-2\beta z\right)\\
\iff-2\beta z & <\frac{1}{1-\beta}-\frac{\left(1+\beta\right)\left(1-\beta\right)}{1-\beta}\\
\iff z & >-\frac{1}{2}\frac{\beta}{1-\beta},
\end{align*}
which is guaranteed if $z>0$.
The restriction that $w_{R}\left(0\right)<1$ requires
\begin{align*}
\frac{1}{\beta}-\frac{\left[\left(1-\beta\right)\left(1+\beta-2\beta z\right)\right]^{1/2}}{\beta} & <1\\
\iff \left[\left(1-\beta\right)\left(1+\beta-2\beta z\right)\right]^{1/2} & >1-\beta\\
\iff 1+\beta-2\beta z & >1-\beta\\
\iff 1 & >z.
\end{align*}

Expressions for $w_{R}\left(n\right)$ are arrived at recursively.
Starting from \eqref{eq:wRn}, for $n=1,\dots,N$:
\begin{align*}
w_{R}\left(n\right) & =\left(z+c\right)\left(1-\beta\right)+\beta\left\{ \int_{\underline{w}}^{w_{R}\left(n-1\right)}w_{R}\left(n-1\right)dF\left(w\right)+\int_{w_{R}\left(n-1\right)}^{\overline{w}}wdF\left(w\right)\right\} \\
 & =\left(z+c\right)\left(1-\beta\right)+\beta\left\{ w_{R}\left(n-1\right)\int_{0}^{w_{R}\left(n-1\right)}dF\left(w\right)+\int_{w_{R}\left(n-1\right)}^{1}wdF\left(w\right)\right\} \\
 & =\left(z+c\right)\left(1-\beta\right)+\beta\left\{ \left[w_{R}\left(n-1\right)\right]^{2}+\frac{w^{2}}{2}\bigg\vert_{w=w_{R}\left(n-1\right)}^{w=1}\right\} ,
\end{align*}
which evaluates to
\begin{equation}
w_{R}\left(n\right)=\left(z+c\right)\left(1-\beta\right)+\frac{\beta}{2}\left\{ 1+\left[w_{R}\left(n-1\right)\right]^{2}\right\}. \label{eq:wRn:sol-uniform}
\end{equation}

\subsection{Allowing for an Extension to Benefits}
\label{sec:closed-form-uniform:chance-ext}

Reservation wages when the probability that benefits are extended with probability $\delta$ are implicitly given in
\eqref{eq:wR-delta-0} and \eqref{eq:map-wR-delta}.
Using the same techniques as those in section \ref{sec:closed-form-uniform:finite},
the expression for $w_{R}^{\delta}\left(0\right)$ can be developed as 
\begin{align*}
w_{R}^{\delta}\left(0\right) & =z\left(1-\beta\right)+\beta\delta\left[\int_{\underline{w}}^{w_{R}\left(\Delta\right)}w_{R}\left(\Delta\right)dF\left(w\right)+\int_{w_{R}\left(\Delta\right)}^{\overline{w}}wdF\left(w\right)\right]\\
 & \quad+\beta\left(1-\delta\right)\left[\int_{\underline{w}}^{w_{R}^{\delta}\left(0\right)}w_{R}^{\delta}\left(0\right)dF\left(w\right)+\int_{w_{R}^{\delta}\left(0\right)}^{\overline{w}}wdF\left(w\right)\right]\\
 & =z\left(1-\beta\right)+\frac{\beta\delta}{2}\left\{ 1+\left[w_{R}\left(\Delta\right)\right]^{2}\right\} +\frac{\beta\left(1-\delta\right)}{2}\left\{ 1+\left[w_{R}^{\delta}\left(0\right)\right]^{2}\right\} .
\end{align*}
The value $w_{R}\left(\Delta\right)$ is known to the worker. Thus
$w_{R}^{\delta}\left(0\right)$ is the root to the quadratic equation
\begin{align*}
0=\frac{\beta\left(1-\delta\right)}{2}\left[w_{R}^{\delta}\left(0\right)\right]^{2}-w_{R}^{\delta}\left(0\right)+z\left(1-\beta\right)+\frac{\beta\delta}{2}\left\{ 1+\left[w_{R}\left(\Delta\right)\right]^{2}\right\} +\frac{\beta\left(1-\delta\right)}{2}.
\end{align*}
The economically relevant root is 
\begin{equation}
w_{R}^{\delta}\left(0\right)=\frac{1-\sqrt{1-\beta\left(1-\delta\right)\left\{ \beta+2z\left(1-\beta\right)+\beta\delta\left[w_{R}\left(\Delta\right)\right]^{2}\right\} }}{\beta\left(1-\delta\right)}.\label{eq:wR0:sol-uniform:ext}
\end{equation}
When $\delta=0$, the expressions \eqref{eq:wR0:sol-uniform} and \eqref{eq:wR0:sol-uniform:ext} agree. 

As in section \ref{sec:closed-form-uniform:finite} above,
expressions for $w_{R}^{\delta}\left(n\right)$ are arrived at recursively.
Starting from \eqref{eq:wR-delta}, for $n \in \left\{ 1,\dots,N \right\}$:
\begin{align*}
w_{R}^{\delta}\left(n\right) & =\left(z+c\right)\left(1-\beta\right)+\beta\delta\left[\int_{\underline{w}}^{w_{R}\left(n-1+\Delta\right)}w_{R}\left(n-1+\Delta\right)dF\left(w\right)+\int_{w_{R}\left(n-1+\Delta\right)}^{\overline{w}}wdF\left(w\right)\right]\\
 & \quad+\beta\left(1-\delta\right)\left[\int_{\underline{w}}^{w_{R}^{\delta}\left(n-1\right)}w_{R}^{\delta}\left(n-1\right)dF\left(w\right)+\int_{w_{R}^{\delta}\left(n-1\right)}^{\overline{w}}wdF\left(w\right)\right].
\end{align*}
Using the assumption about uniform wage offers,
the latter evaluates to
\begin{equation}
w_{R}^{\delta}\left(n\right)=\left(z+c\right)\left(1-\beta\right)+\frac{\beta}{2}\left\{ 1+\delta\left[w_{R}\left(n-1+\Delta\right)\right]^{2}+\left(1-\delta\right)\left[w_{R}^{\delta}\left(n-1\right)\right]^{2}\right\} .\label{eq:wRn:sol-uniform:ext}
\end{equation}

\subsection{Welfare}

Computer code that conducts the numerical experiments makes use of expressions for expected welfare.

The value function for a worker entitled to $n$ periods of UI compensation
and wage offer $w$ to accept or reject is
\begin{align*}
V\left(w,n\right)=\max_{\text{reject, accept}}\left\{ U\left(n-1\right),W\right\} .
\end{align*}
The value function can be written
\begin{align*}
V\left(w,n\right)=\begin{cases}
\frac{w_{R}\left(n\right)}{1-\beta}=z+c+\beta\int V\left(w^{\prime},n-1\right)dF\left(w^{\prime}\right) & \text{if }w\leq w_{R}\left(n\right)\\
\frac{w}{1-\beta} & \text{if }w\geq w_{R}\left(n\right).
\end{cases}
\end{align*}
Expected welfare is
\begin{align*}
\E\left[V\left(w,n\right)\right]=\int V\left(w,n\right)dF\left(w\right).
\end{align*}

If wages are uniformly distributed over $\left[0,1\right]$, then
expected welfare evaluates to
\begin{align*}
\E\left[V\left(w,n\right)\right] & =\int_{0}^{w_{R}\left(n\right)}\frac{w_{R}\left(n\right)}{1-\beta}dF\left(w\right)+\int_{w_{R}\left(n\right)}^{1}\frac{w}{1-\beta}dF\left(w\right)\\
 & =\frac{\left[w_{R}\left(n\right)\right]^{2}}{1-\beta}+\frac{1}{1-\beta}\frac{w^{2}}{2}\bigg\vert_{w=w_{R}\left(n\right)}^{1}\\
 & =\frac{\left[w_{R}\left(n\right)\right]^{2}}{1-\beta}+\frac{1-\left[w_{R}\left(n\right)\right]^{2}}{1-\beta}\frac{1}{2}\\
 & =\frac{1}{1-\beta}\frac{1+\left[w_{R}\left(n\right)\right]^{2}}{2}.
\end{align*}

\section{Documentation of EB and EUC Programs}
\label{sec:documentation-eb-euc}

The US Department of Labor publishes reports titled
``Extended Benefits Trigger Notice'' and
``Emergency Unemployment Compensation Trigger Notice.''
The Extended Benefits Trigger Notice reports whether a state has triggered on extended benefits and
when the extended-benefits-period began.
The Emergency Unemployment Compensation Trigger Notice reports
whether different tiers of the Emergency Unemployment Compensation program were active.

Background on the Emergency Unemployment Compensation program enacted in June 2008 is provided by
\citet{fujita_2010}, \citet{rothstein_2011}, and \citet{whittaker_isaacs_2014}.

These reports are accessible through the Unemployment Insurance Data dashboard (\url{https://oui.doleta.gov/unemploy/DataDashboard.asp}) by
clicking on ``Weekly Claims and Trigger Notice Archives,''
which brings up
\url{https://oui.doleta.gov/unemploy/claims_arch.asp}.

\includepdf[pages=-, noautoscale=true, scale=0.8]{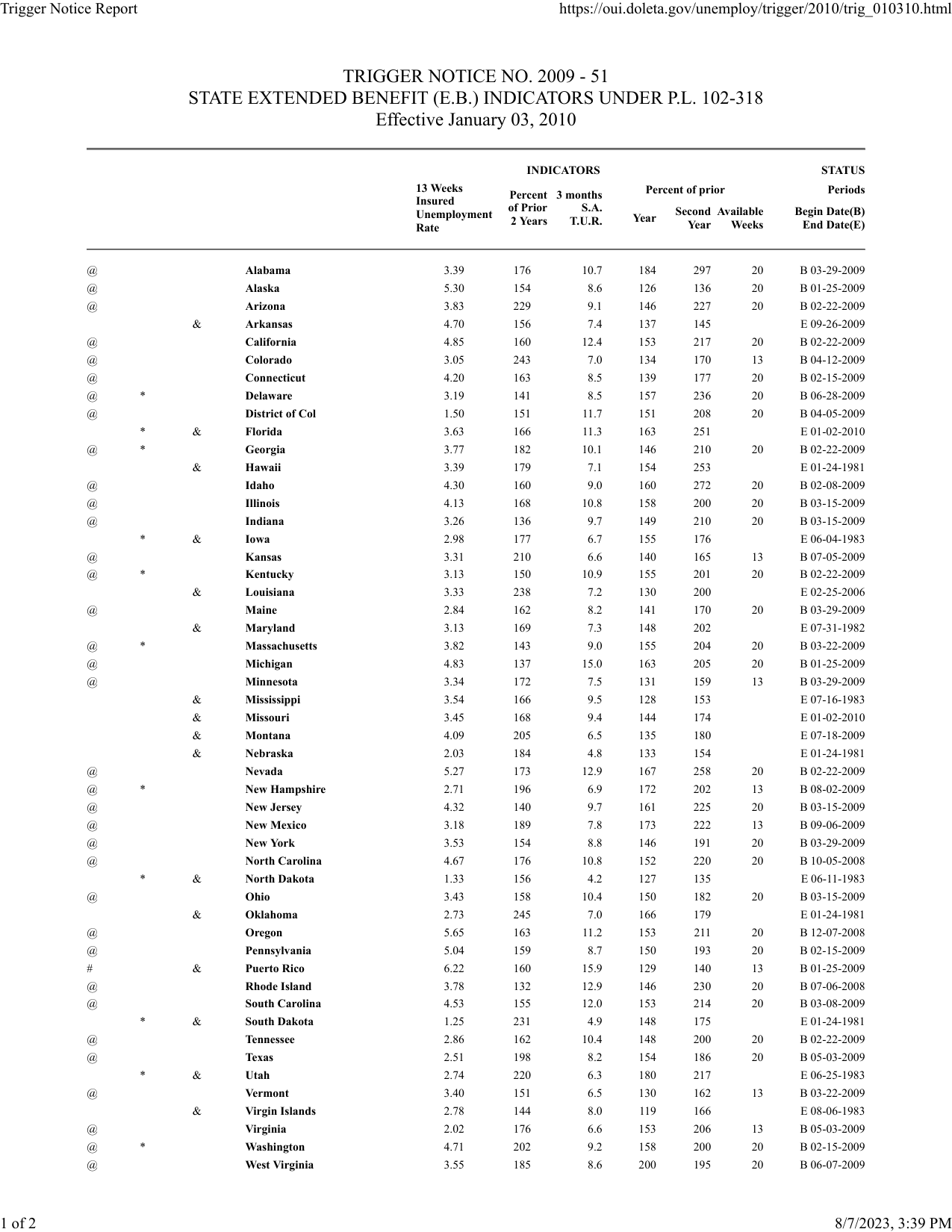}
\includepdf[pages=-, noautoscale=true, scale=0.8]{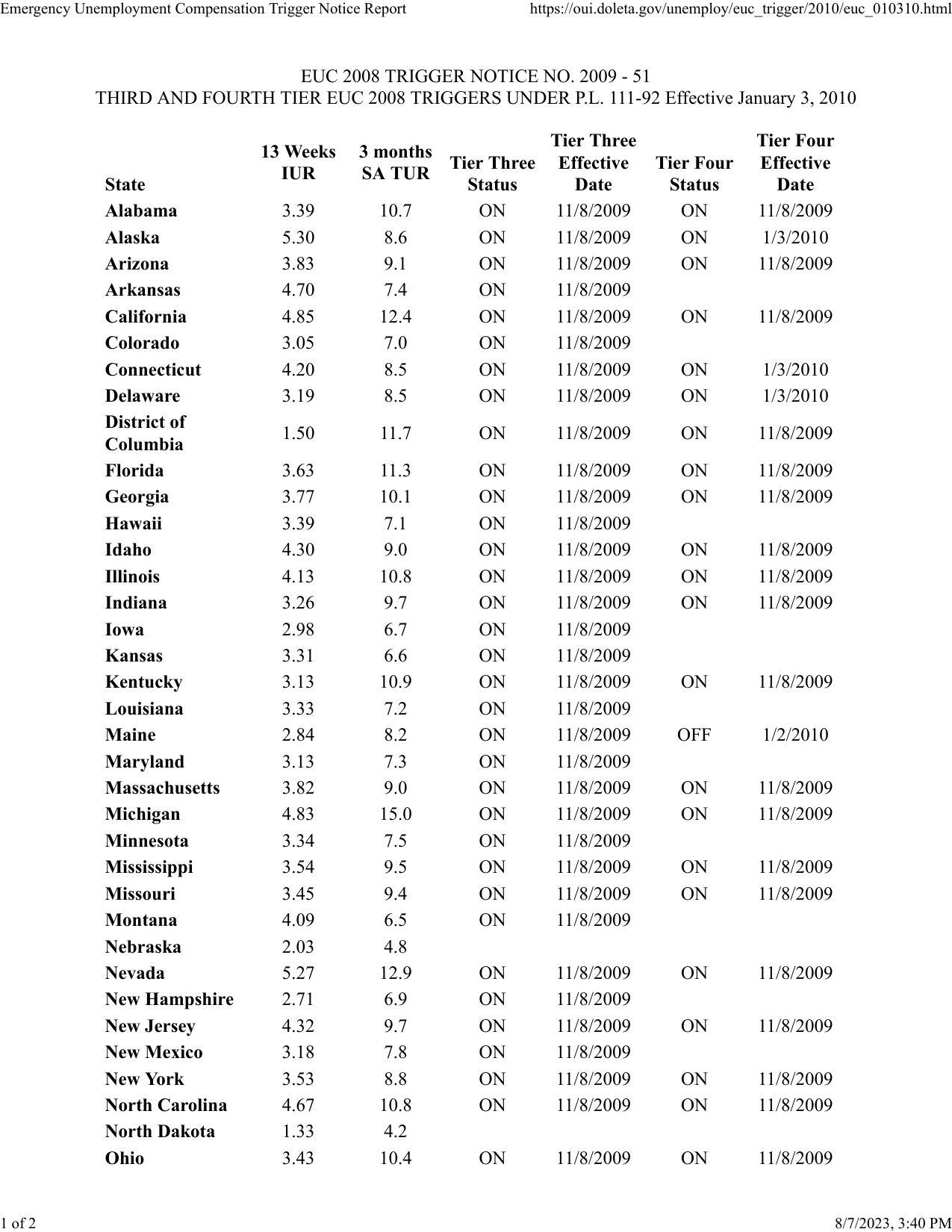}
\includepdf[pages=-, noautoscale=true, scale=0.8]{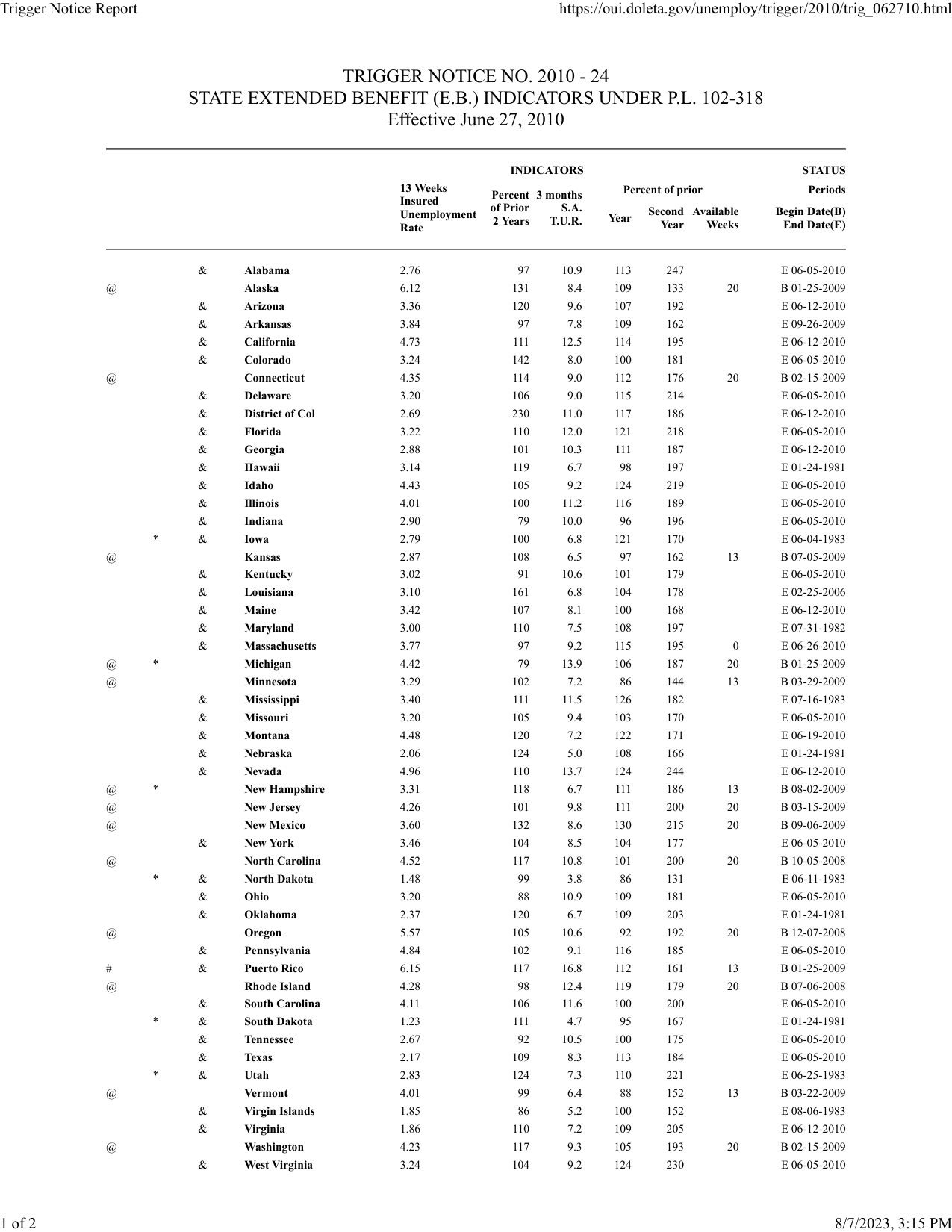}
\includepdf[pages=-, noautoscale=true, scale=0.8]{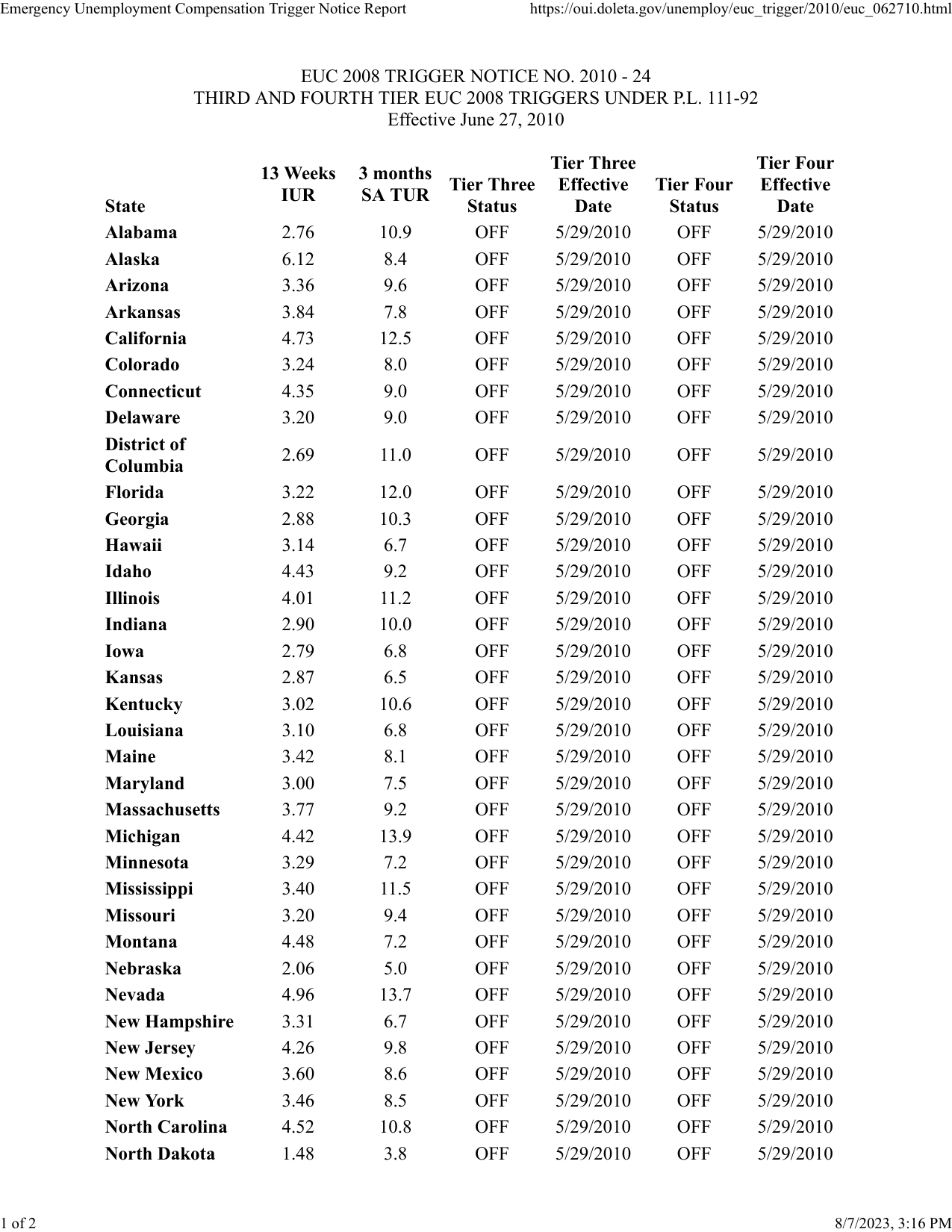}
\includepdf[pages=-, noautoscale=true, scale=0.8]{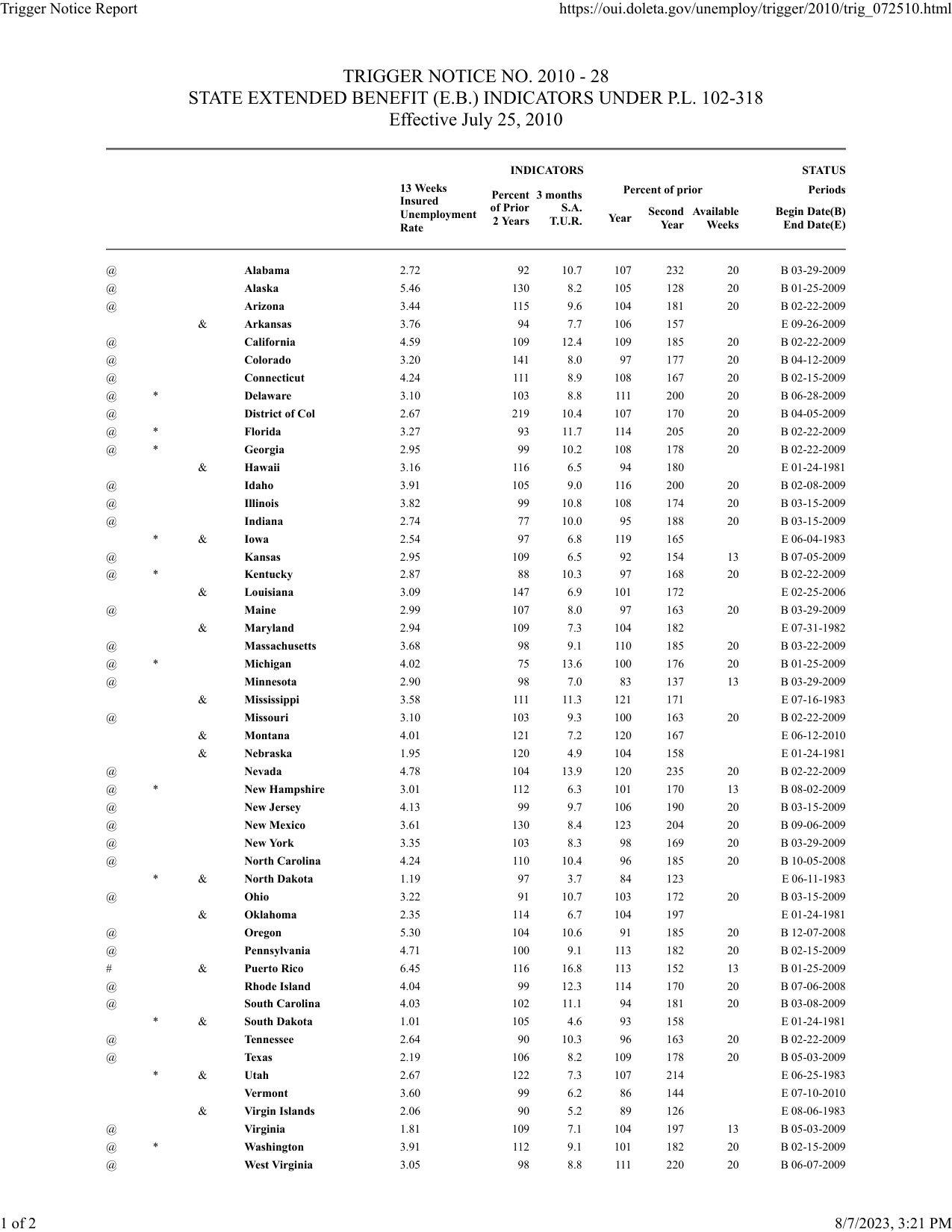}
\includepdf[pages=-, noautoscale=true, scale=0.8]{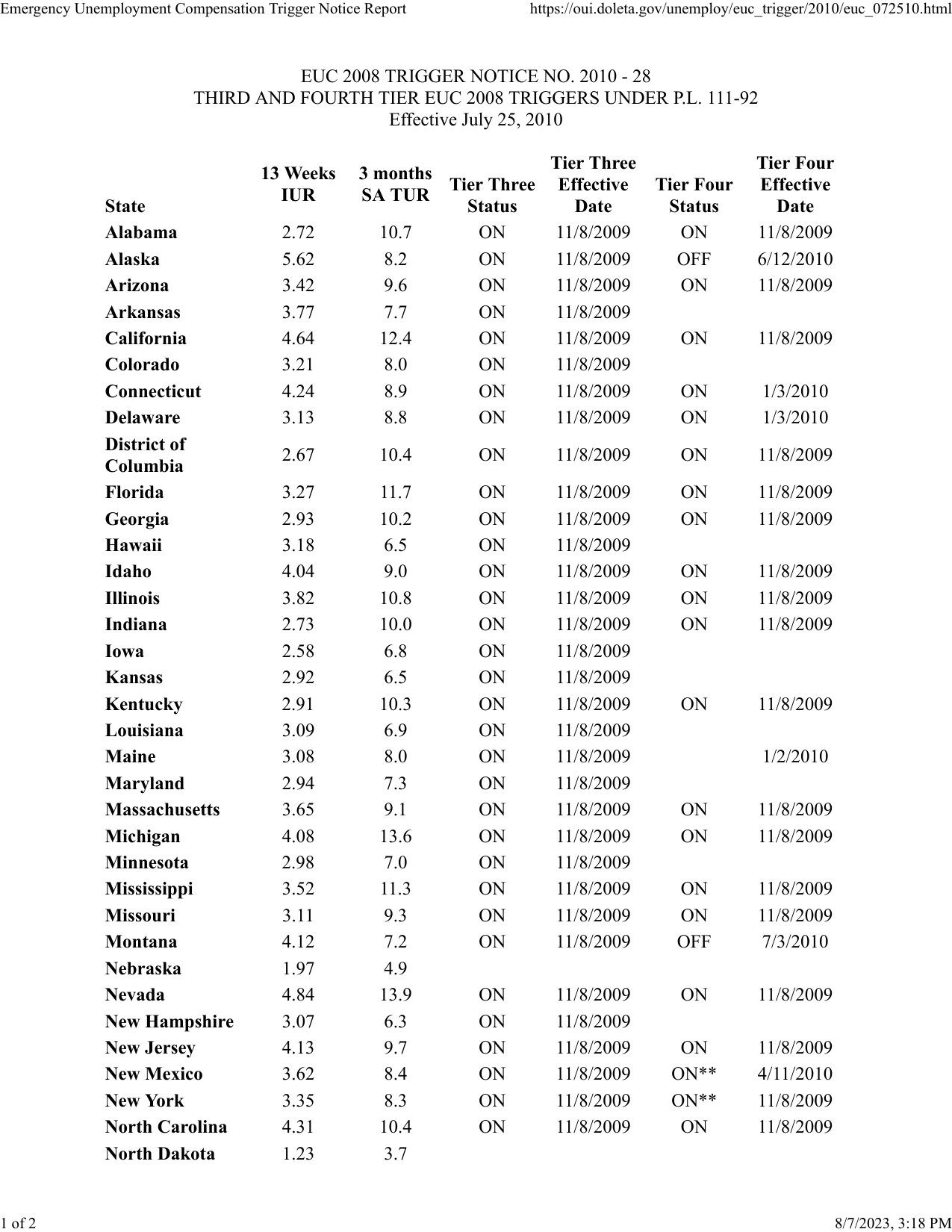}

\end{document}
